%% file: main.tex
\documentclass{article}
\usepackage[letterpaper, portrait, margin=1in]{geometry}

\usepackage{amsmath, amssymb, amsthm}
\usepackage{bm}
\usepackage{booktabs}
\usepackage{multirow}
\usepackage{hyperref}
\hypersetup{
   colorlinks=true,
   citecolor=teal   
}

\usepackage{xcolor} 
\usepackage{tabularx}
\usepackage[most]{tcolorbox}

\newcommand{\vtr}[1]{\mathbf{#1}}
\newcommand{\norm}[2][2]{\lVert {#2}\rVert_{#1}}
\newcommand{\dprod}[2]{\langle  {#1},  {#2} \rangle}
\newcommand{\eprod}[2]{  {#1} \odot  {#2}}
\newcommand{\osqr}[1]{{#1^{\odot 2}}}
\newcommand{\float}{\mathsf{fl}}
\newcommand{\tconfig}{T3}
\newcommand{\epsone}{EP1}
\newcommand{\epssharp}{EP\#}
\newcommand{\epsroot}{EPN}

\newtheorem{proposition}{Proposition}
\newtheorem{theorem}{Theorem}
\theoremstyle{definition}
\newtheorem{definition}{Definition}

\usepackage{cleveref}
\crefformat{section}{#2Section~#1#3}
\crefformat{figure}{#2Figure~#1#3}
\crefformat{algorithm}{#2Algorithm~#1#3}
\crefformat{table}{#2Table~#1#3}
\crefformat{equation}{#2Equation~#1#3}
\crefformat{definition}{#2Definition~#1#3}
\crefformat{appendix}{#2Appendix~#1#3}

\usepackage{todonotes} 

\newcommand{\dali}[1]{\todo[inline,caption={},color=green!40]{{\it DK:~}}}

\definecolor{teebg}{RGB}{237,245,253}
\definecolor{teeborder}{RGB}{100,135,170}

\definecolor{gpubg}{RGB}{253,222,220}
\definecolor{gpuborder}{RGB}{225,185,180}

\definecolor{mygreen}{RGB}{34,139,34}
\definecolor{myblue}{RGB}{30,100,200}

\newcommand{\qwen}{\texttt{Qwen-4B}}
\newcommand{\llama}{\texttt{Llama-3B}}

\usepackage{tikz}
\usetikzlibrary{positioning,fit}
\usetikzlibrary{arrows.meta}

\begin{document}


\title{SpliTEE: Fast and Private LLM Inference by Coupling GPU-Assisted Trusted Execution Environments with Differential Privacy}

\author{Shashie Dilhara Batan Arachchige, Robin Carpentier, Hassan Jameel Asghar, Dali Kaafar\\[3pt]
School of Computing, Macquarie University\\[3pt]
{\small
\texttt{\{shashiedilhara.batanarachchige, robin.carpentier, hassan.asghar, dali.kaafar\}@mq.edu.au}
}
\\[3pt]
}

\maketitle

\begin{abstract}

User prompts provided to large language models (LLMs) may contain sensitive or private information that can potentially be misused by these remotely deployed models, such as through inadvertent memorization during retraining. One way to protect user prompts is to execute the LLM inside a trusted execution environment (TEE), with the guarantee that the service provider has no access to computations performed within or the information exchanged with the TEE. However, this results in slow inference times as current TEEs are primarily CPU-based and significantly slower than GPUs, which are optimized for LLM inference. To circumvent this, Tram{\`e}r and Boneh (2019) proposed a system called Slalom which splits neural network inference between a TEE and the untrusted GPU. They encrypt intermediate inputs to components outsourced to the GPU to prevent them from leaking information about the prompt. In this paper, we extend this split-inference architecture to LLM inference and instead protect intermediate inputs using differential privacy. We first demonstrate that \emph{masking} intermediate representations is necessary by showing that a prompt-reconstruction attack can recover prompts from these representations with nearly 80\% accuracy. Our main contribution is a global sensitivity analysis of key functions in LLM inference, which bounds the required scale of differentially private noise. Unlike encryption, differential privacy avoids quantization, allowing the LLM to remain in the floating-point domain. We also derive an upper bound on the floating-point error from masking and subsequent noise cancellation in the TEE as a function of the privacy parameter $\epsilon$, which can be used to ensure that the quality of the LLM response is not affected.
We implement our architecture using the Intel TDX TEE and evaluate it with two LLMs: \texttt{Llama-3.2-3B} and \texttt{Qwen3-4B}. Our split execution is nearly twice as fast as fully CPU-based inference inside TDX. Moreover, it is 5–15 seconds faster than encryption-based Slalom while also achieving higher accuracy. Finally, we demonstrate that prompt reconstruction, even with knowledge of the differential privacy mechanism, cannot recover more information than is contained in an unrelated prompt.
\end{abstract}

\section{Introduction}
The use of artificial intelligence has proliferated in the last couple of years due to the ever-expanding and improving capabilities of large language models (LLMs). At the same time there are growing privacy concerns due to the remote deployment of these LLMs. Users submit their requests to these LLMs using prompts that may contain sensitive information about individuals and businesses. Naturally, one would not want this information to be misused. For instance, it is routine practice to use user prompts and subsequent responses to train future versions of the LLM. As a result, the sensitive information contained in these prompts can be memorized by LLMs~\cite{huang2024llm-memorization}. 
One option available to the user is to opt-out of his/her data being used to retrain the model.\footnote{For example, by turning off the ``Improve the model for everyone'' option in ChatGPT (\url{https://chatgpt.com/}).} However, there are a number of issues with this solution. This generally does not apply to data already submitted before toggling the feature, by default the opt-out feature may be unchecked, and even if the user opts out, some data or metadata may still be used for training~\cite{king2025optout}. Another option on the other extreme is to deploy LLMs locally on the user's device, but this is infeasible due to their ever increasing sizes. Besides, this deployment model is not suitable for LLM service providers who want to protect their intellectual property, the learned weights of the model.  

Technical solutions to this problem include making the prompt more private by using differential privacy~\cite{dwork2006calibrating} or carrying out LLM inference in the encrypted domain using homomorphic encryption or secure multiparty computation. A differentially private solution probabilistically replaces either individual words or sentences in the prompt with semantically similar counterparts (see, for example~\cite{dp-bart}). This is not ideal for utility as it perturbs words in the prompt, and depending on the task, the user may want to keep the text in the prompt unchanged, e.g., specific sales numbers. Cryptographic solutions through homomorphic encryption and secure multiparty computations are computationally or communication-wise expensive, require changes to the LLM to make it more suitable for cryptographic operations, and some part of the LLM computation is assumed to be done publicly, e.g., by the client. See these excellent surveys for the state-of-the-art in this domain~\cite{al2026sokfhe, andreoletti2026pp-llm}. If these challenges are overcome, these solutions are promising as they do not alter the prompt.  

Another solution, which is also related to the theme of this paper, is to execute the LLM inside a trusted execution environment (TEE). The service provider hosts the LLM within the TEE, with the guarantee that computations performed inside the TEE can neither be observed nor be tampered with by the service provider. These are standard assumptions for computations performed within a TEE.\footnote{Of course, TEEs are no panacea. It may be possible to infer some information about user prompts through side channel attacks and metadata exchanged between the TEE and the untrusted part of the host machine~\cite{chowdhuryy2024metaleak}.} Unfortunately, end-to-end inference within the TEE is slow, as current TEEs are primarily CPU-based which are significantly slower than GPUs for LLM inference. We also demonstrate this performance gap in this paper. While the first commercial GPU-TEE was announced recently, namely NVIDIA Confidential Computing~\cite{dhanuskodi2023Creating}, it is only available on one commercial GPU and specifications are scarce, which renders the security analysis of the system difficult~\cite{gu2025NVIDIA}. Nevertheless even GPU-based TEEs are likely to be slower than state-of-the-art GPUs outside the trusted zone.

To solve this issue, researchers have proposed a split-inference architecture, where the expensive operations during LLM inference, namely matrix multiplications, are offloaded or outsourced to the faster but untrusted GPU, and the rest of the components are kept inside the trusted TEE~\cite{tramer2018slalom}.\footnote{Although the scheme in~\cite{tramer2018slalom} is built for deep neural networks, it can be adapted for LLM inference, as is done in~\cite{xue2025Securing}.} To prevent information leakage, the inputs to the offloaded components are \emph{masked} by additively blinding them with random nonces. The GPU performs the multiplication as if it received the real input. Due to the linearity of matrix multiplications, the original matrix product can be retrieved from the result returned by the GPU by subtracting the randomized component. Moreover, this can be partly preprocessed, and hence expensive operations are not done at the slower TEE. An issue with the approach used in~\cite{tramer2018slalom} is that the matrix product needs to be \emph{quantized} to work in finite fields, which results in accuracy loss.  

In this paper we look at the LLM split-inference architecture from another angle: instead of encrypting the outsourced inputs via a stream cipher as is done in Slalom, the name of the system proposed in~\cite{tramer2018slalom}, we add differentially private noise calibrated to the sensitivity of the operation. The advantage is that we remain in the original floating-point arithmetic domain in which the LLM is implemented. Since our goal is prompt privacy, differential privacy is a reasonable notion to apply in the split-inference setting. However, as we show, we cannot add unbounded differentially private noise without running into floating-point error accumulation issues. The good news is that we can tune the privacy parameter $\epsilon$ so that error never passes the threshold to impact accuracy. Our method is faster than Slalom, and also incurs no accuracy penalty depending on the privacy level $\epsilon$. Figure~\ref{fig:architecture}
shows the high-level overview of our scheme. 


\begin{figure}[ht]
\centering
\input{system-architecture}
\caption{Overview of the LLM deployment architecture. The TEE is trusted whereas the GPU is semi-trusted, i.e., it follows the protocol but may want to infer more information about the user prompt.}
\label{fig:architecture}
\end{figure}

To summarize, we make the following contributions:
\begin{enumerate}
    \item We present a split-architecture, where LLM inference is split between a trusted but slower TEE, and a faster but honest-but-curious GPU. The linear operations are outsourced to the GPU as is done in~\cite{tramer2018slalom}, but we mask the inputs to the GPU using  differentially private noise so that the GPU cannot infer the user prompt. The GPU being honest-but-curious is a realistic assumption in many use cases as the service provider is rightly concerned about maintaining its reputation. 

    \item Using the open source \texttt{Llama-3.2-3B-Instruct} LLM as the base, we identify common components in LLMs that can be algebraically outsourced. As the inputs to these components are outputs of the preceding components, we derive the global sensitivity bounds of these functions, which measure the required scale of differentially private noise~\cite{dp-book}. Our global sensitivity analysis is a contribution in itself and can be used in any system that outsources part of LLM inference to another entity without disclosing intermediate function inputs and outputs.  

    \item We show that removing differentially private noise from the result computed by the GPU on the outsourced computation results in accumulation of floating-point errors that can negatively impact accuracy of the LLM response. We derive an upper bound on this error for linear operations, and show that it closely matches real execution error. Building on this, we show that the value of the privacy parameter $\epsilon$ can be tweaked so that the error bound remains below a threshold to ensure accuracy is not impacted.
    
    \item We implement our scheme on a machine equipped with Intel Trust Domain Extensions (TDX) and a GPU. Using both the  \texttt{Llama-3.2-3B-Instruct} and \texttt{Qwen3-4B-Instruct} LLMs, we show that end-to-end CPU-based inference on the protected TDX virtual machine is almost two times slower than our scheme. We further show that our scheme is between 5 to 15 seconds faster than Slalom~\cite{tramer2018slalom} adapted to LLMs. 

    \item We perform a prompt reconstruction attack by training an LLM model with the outsourced inputs. The attack shows that adding differentially private noise is essential, as the prompt reconstruction attack can reconstruct the prompt with almost 80\% accuracy if the inputs are \emph{unmasked}. We further show that the same attack, even when trained on auxiliary noisy inputs, is not able to infer the user prompt significantly better than guessing random tokens expected to be common between unrelated prompts. Our implementation is publicly available.\footnote{\url{https://github.com/DPVault/SpliTEE}}

\end{enumerate}

\section{Related Work}
\label{sec:Realted-work}


Slalom~\cite{tramer2018slalom} is one of the earliest approaches to combine a TEE with an untrusted GPU for private and verifiable neural-network inference. Instead of executing the complete model inside the TEE, it outsources expensive linear operations while masking the operator input with a single-use random mask. The corresponding correction is precomputed inside the TEE so that the original linear result can be recovered after GPU execution. Slalom also uses Freivalds' algorithm~\cite{freivalds1977probabilistic} to verify the correctness of the outsourced computation. Our approach follows a similar high-level idea of protecting outsourced linear computation, but differs in both the masking mechanism and execution setting. Slalom uses pseudorandom masks over finite-field representations, whereas our approach adds Gaussian noise to floating-point activation tensors based on the sensitivity of each outsourced operator. Slalom was designed for conventional Deep Neural Network (DNN) workloads and evaluated using Intel SGX, while our work considers Transformer-based LLM inference inside an Intel TDX virtual machine. We further show that our system is better at maintaining the accuracy of LLM responses, as we do not quantize the model as needs to be done through Slalom-based LLM inference. 

While Slalom focuses on private and verifiable outsourcing of linear computation, other TEE-based systems consider different protection goals. SOTER~\cite{shen2022SOTER} also partitions neural-network inference between a TEE and an untrusted accelerator to reduce the cost of executing the complete model inside trusted hardware. Its main objective, however, is to protect the intellectual property of the model when inference is performed on a user device. SOTER identifies \emph{associative} operators and transforms their \emph{parameters} by multiplying them with a random scalar mask before outsourcing the operator to the untrusted GPU; the intended result is then recovered inside the TEE using the corresponding inverse scaling. This differs from our setting mainly in the privacy objective. It protects model parameters from an untrusted device owner, whereas our work focuses on protecting prompt information and intermediate activations exposed to an untrusted GPU during LLM inference. For instance, while we could adapt their technique to multiply intermediate inputs with a vector of random scalars, this will exacerbate the floating-point issues discussed in our paper. Our approach also targets Transformer-based LLM execution, while SOTER considers more general neural-network models.

Transformer-specific outsourcing introduces a broader set of operations that must be protected. TwinShield~\cite{xue2025Securing} addresses confidential execution of transformer models in an untrusted cloud environment. It aims to protect model parameters, input data, and computation integrity while reducing the amount of computation performed inside the TEE. Unlike earlier approaches that primarily outsource linear operations, TwinShield extends outsourcing to transformer-specific operations, attention matrix multiplication and the non-linear softmax function. It also provides a verification mechanism for the outsourced computation. 
While their mechanism to mask inputs to the linear layers is the same as Slalom, through additive random masks, the non-linear components are protected through a combination of scalar masking and permutation. The former can be proven to be secure assuming the security of the underlying cipher, but the latter is ad hoc and susceptible to algebraic attacks. Moreover, their scheme suffers from the same accuracy issues due to model quantization. 




Beyond the design of individual partitioning mechanisms, recent work has also examined whether exposing part of a model outside the TEE can itself create a security risk. Zhang et al.~\cite{zhang2024No} study the security of TEE-shielded DNN partitioning for on-device inference. They show that existing partitioning approaches can remain vulnerable to model stealing and membership inference when information from the untrusted portion of the model is combined with public model knowledge. To address this, they propose TEESlice, which follows a partition-before-training strategy so that the weights placed outside the TEE are not trained on private data. The architecture combines a public pre-trained backbone with privacy-sensitive model slices retained inside the TEE, while the input to outsourced computation is protected using a one-time masking mechanism similar in spirit to Slalom. Clearly, this solution protects leakage of model parameters and training data to an untrusted device owner, and not user prompts.

Intermediate representations have also been shown to leak information about model inputs through inversion and reconstruction attacks. Song and Raghunathan~\cite{song2020information} showed that embeddings can retain enough information to recover words from the original input. Later works extended this to full-sentence reconstruction using optimization-based attacks~\cite{morris2023text} and learning-based approaches that train a generative decoder to recover the input~\cite{li-etal-2023-sentence}. The latter particularly motivates our reconstruction attacker. Dong et al.~\cite{dong2025Depth} further study inversion of hidden states exposed during split LLM inference and show that adding Laplace noise directly to these states can reduce reconstruction only at a substantial utility cost. This differs from our setting because their noisy representation is also used for inference. In our design, noise is added only to the outsourced operator input and the corresponding correction is removed inside the TEE before inference continues. Thus, the attacker observes the masked representation, while the model proceeds using the recovered linear result, subject only to numerical error from masking and correction.

\section{Background}
\subsection{Trusted Execution Environment (TEE)}
Briefly a trusted execution environment (TEE) is a secure area of a processor which provides privacy and integrity of its data and computation by isolating it from any outside entities including the operating system and other applications running on the machine. Furthermore, the user can verify the code executed at the TEE is correct and not tampered with, through \emph{remote attestation}. The TEE achieves this by sending the cryptographic hash of its application code, cryptographically signed by the hardware-protected private attestation key of the vendor (e.g., Intel TDX). The client can verify the signature and the hash via the corresponding public verification key through the vendor's certificate~\cite{menetrey2022attestation}. 

\subsection{Components of an LLM}
\label{sub:llm-components}
In order to identify which components of an LLM can be efficiently outsourced to the GPU while masking their inputs, we detail the general architecture of an LLM. Of course, LLMs are not homogeneous, and vary in their architecture.\footnote{\#NotAllLLMs} Our goal is to identify components common in most if not all LLMs. We use the open source \texttt{Llama-3.2-3B-Instruct} model,\footnote{See \url{https://huggingface.co/meta-llama/Llama-3.2-3B-Instruct}} henceforth only referred to as \llama{}, as the reference model. Minor differences with other models are discussed in Section~\ref{sec:other_models}.

\subsubsection{Overview and Notations}
Most LLMs are based on the transformer architecture~\cite{vaswani2017attention}, which divides the computation into the following components:
\begin{enumerate}
    \item An Embedding Model
    \item A Transformer Block (repeated a fixed number of times), itself containing
        \begin{enumerate}
            \item A Multi-Head Self-Attention mechanism
            \item A Feed-Forward Network
        \end{enumerate}
    \item A Language Modeling Head
\end{enumerate}

The computation begins by processing all tokens in the prompt at once, i.e., as one matrix (explanation to follow). Once a token has been generated after Step 3, it becomes part of the input to generate the next token. After this, Steps 2 and 3 are repeated until the number of desired output tokens have been produced. Below, we give more details of each of the components mentioned above. A pictorial representation of these components is shown in Figure~\ref{fig:transformer}. 

\begin{figure*}[ht]
\centering
\resizebox{\textwidth}{!}{%
\input{transformer-architecture}
}
\caption{A breakdown of LLM components based on \llama{}. The components offloaded to the GPU are marked with the icon~\raisebox{-0.3ex}{\includegraphics[height=1.2em]{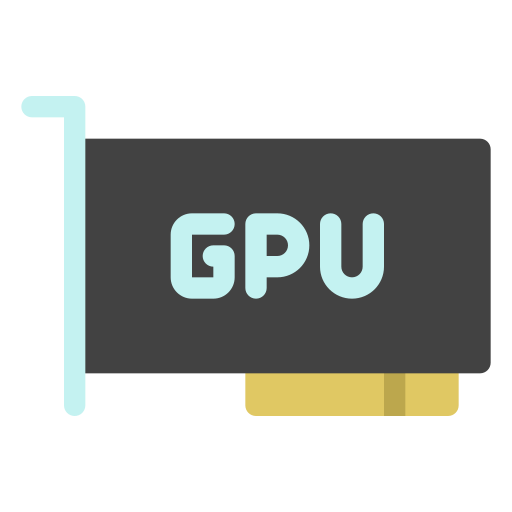}}.}
\label{fig:transformer}
\end{figure*}



\subsubsection{Embedding Model}
The embedding model associates each token of the prompt with an embedding vector of dimension $d$. 
Thus the input prompt can be thought of as the token matrix $X$, which is composed of $n$ tokens $(\vtr{x}_1, \ldots, \vtr{x}_n)$, where each token is a vector in $\mathbb{R}^d$ of embedding dimension $d$. The dimension $d$ is also called the hidden size. The dimension is $d = 3072$ in \llama{}. 

\subsubsection{Transformer Block -- Self Attention Mechanism}\label{sec:attention}
As illustrated in Figure~\ref{fig:transformer}, the first part of the transformer block is the multi-head self-attention mechanism. Roughly, this mechanism calculates how much each token should ``attend'' to every previous token to better understand its context. The self-attention mechanism has the following sub-components.

\paragraph{Input Layer Normalization.} In \llama{}, the attention mechanism starts with the normalization layer (called, the pre-norm transformer). Each token embedding $\vtr{x}$ is normalized independently using root-mean square normalization (RMSNorm) (see Definition~\ref{def:normalization}). This normalization is done at the start of each layer (see Figure~\ref{fig:transformer}). Each layer has its own learnable parameters $\bm{\alpha}, \bm{\beta} \in \mathbb{R}^d$ (see Eq.~\eqref{eq:ln}). 
After this step, the initial token matrix $X$ is transformed to the normalized token matrix $\widehat{X}$. 




\paragraph{Linear Projections (\emph{Q}, \emph{K}, \emph{V}).}
Three independent linear projections of the input are computed using (learned) weight matrices $W_\text{qry}$, $W_{\text{key}}$ and $W_\text{val}$ resulting in the quantities $Q =  W_\text{qry} \widehat{X}$, $K =  W_\text{key} \widehat{X}$ and $V =  W_\text{val} \widehat{X}$. The notation $Q$, $K$, and $V$ stand for query, key, and value, respectively. In \llama{}, the matrices $W_\text{qry}$, $W_{\text{key}}$ and $W_\text{val}$ have dimensions $d_\text{qry} \times d$, $d_{\text{key}} \times d$ and $ d_\text{val} \times d$ respectively, where for example $d_\text{qry} = d_\text{key} = d_\text{val} = 500$. We shall assume that they are equal, and represent the variable dimension by $d_\text{key}$. Thus, matrices $Q$, $K$ and $V$ have dimensions $d_\text{key} \times n$ each.\footnote{PyTorch stores the weight matrices transposed, and then transposes them at runtime. See \url{https://docs.pytorch.org/docs/stable/generated/torch.nn.Linear.html}.}



\paragraph{Rotary Positional Embeddings (RoPE).} The attention mechanism does not know the order of the tokens in the prompt. This positional information is injected through rotary positional embeddings into $Q$ and $K$. With this information, the model can tell whether a token is positioned earlier or later in the sequence. 


\paragraph{Scaled Dot-Product Attention.} This component combines the matrices $Q, K$ and $V$ as
\[
H = \text{softmax}\left( \frac{Q^T K}{\sqrt{d_\text{key}}}\right) V^T  
\]
Note that $Q^TK$ is a matrix of dimensions $n \times n$, since $Q^T$ has dimension $n \times d_\text{qry}$ and $K$ has dimension $d_\text{key} \times n$, and we assume $d_\text{qry} = d_\text{key}$. The output of the softmax function is thus also an $n \times n$ matrix. This is then multiplied by the $n \times d_{\text{key}}$ matrix $V^T$, to produce an $n \times d_\text{key}$ matrix $H$. 


\paragraph{Linear Projection (\emph{O}).} Prior to this component, the outputs from each of the $h$ attention heads are combined together, resulting in the final matrix $H$ being of size $n \times hd_\text{val} = n \times d$, as $h$ is set as $d/d_\text{key}$.\footnote{See \url{https://huggingface.co/docs/transformers/model_doc/llama}.}
The next component is another linear projection using the (learned) weight matrix $W_\text{oh} \in \mathbb{R}^{d \times d}$ as 
\[
O = W_\text{oh} H^T
\]
which results in a $d \times n$ matrix $O$. 


\paragraph{Residual Addition.}  At this point the original (unnormalized) input $X$ is added to $O$ to obtain 
\[
Y = O + X
\]
which is again a $d \times n$ matrix, which is the final output of the attention mechanism.

\subsubsection{Transformer Block -- Feed Forward Neural Network (MLP)}\label{sec:ffnl}
This component has the following subcomponents. 

\paragraph{Post Attention Layer Normalization.}
This is the same operation as input layer normalization from the attention mechanism except that the learned parameters are different (see Eq.~\eqref{eq:ln}). The matrix $Y$ is thus transformed into the matrix $\widehat{Y}$ with RMS-normalized entries. 

\paragraph{Linear Projections (\emph{Gate}, \emph{Up}).} Two independent projections of the input are computed, using the (learned) weight matrices $W_\text{gate}, W_\text{up}\in \mathbb{R}^{d_\text{ff} \times d}$, where $d_\text{ff}$ is the model's feed-forward dimension, e.g., 8,192. This gives rise to the $d_\text{ff} \times n$ matrices
\[
G = W_\text{gate}\widehat{Y} \text{ and } U = W_\text{up} \widehat{Y}
\]


\paragraph{Activation Function.}
In \llama{}, the activation function used is the Sigmoid Linear Unit (SiLU).\footnote{See \url{https://docs.pytorch.org/docs/stable/generated/torch.nn.SiLU.html}} We denote this function by $\mathsf{sig}$, which can be applied element-wise on a matrix (see Eq.~\eqref{eq:def:sigmoid} for the definition). After application of the activation function, we obtain the resulting matrix $B$ as
\[
B = U \odot G \odot \mathsf{sig}(G)
\]
where $\odot$ denotes element-wise matrix product. Note that since each matrix is of dimension $d_\text{ff} \times n$, the matrix $B$ is also of the same dimension. 


\paragraph{Linear Projection (\emph{Down}).} The final part of this component is a linear projection using the (learned) weight matrix $W_\text{down}\in \mathbb{R}^{d \times d_\text{ff}}$  giving us
\[
D = W_\text{down} B\
\]
which is a $d \times n$ matrix.

\paragraph{Residual Addition} This is where the residual output from the self-attention mechanism, i.e., $Y$, is added to $D$, and thus $D$ is updated as $D \leftarrow D + Y$. 


\subsubsection{Language Modelling Head}
\label{subsub:lang-model-head}
The transformer block described in Sections \ref{sec:attention} and \ref{sec:ffnl} is run $L$ times. Here $L$ represents the number of attention blocks, also called layers, e.g., 16. The output after each layer is called a \emph{hidden state}. The output after the $L$th layer is called the \emph{last hidden state}. This last hidden state is given to the language modelling head to generate a token. The main tasks involved in this component are as follows.

\paragraph{Final Layer Normalization.} This is again the same operation as in input layer normalization except that the learned parameters are different. 

\paragraph{Language Modelling.}

For next token prediction, we only consider the last vector of the last hidden state, i.e., corresponding to the last token. This component uses the model's token vocabulary, denoted $\mathsf{vocab}$. The vocabulary's size differs according to the model. In \llama{}, it is $128,256$.\footnote{See \url{https://huggingface.co/HenryShan/Llama-3.2-3B-GSM8K-CoT}.} Each token in the vocabulary has an associated token embedding in $\mathbb{R}^d$. This layer then performs a dot product of the last hidden vector with the token embedding of each token in the vocabulary. This is essentially a multiplication with the matrix of token embeddings of all tokens in the vocabulary, which we call $W_\text{lm}$, whose weights are again obtained through training. We get a vector of values, one for each token in the vocabulary. The model then applies the softmax function to get a vector of probabilities.
Finally, the next token is chosen based on these probability vectors and a configuration, e.g., always choosing the highest probable token is called greedy decoding.

\subsubsection{Differences From Other LLMs}\label{sec:other_models}

The architecture described above is based primarily on \llama{}, which is also the main model used in our evaluation. However, the proposed outsourcing approach is not intended to depend on a single model architecture. To examine how it can be applied more broadly, we consider other widely used Transformer-based model families and identify architectural differences that are relevant to the outsourced computation.

In particular, we consider Qwen, which is also included in our experimental evaluation, and discuss Gemma and Mistral to examine how the approach extends to other modern LLM families. Since our approach outsources the computationally expensive linear projections while keeping the other model operations within the TEE, the main question is whether these models retain similar linear operations and how the surrounding computations differ.

We first consider the Qwen family. Qwen models follow a similar decoder-only Transformer architecture to Llama, but some design choices differ between model generations. For example, \texttt{Qwen2.5-3B} uses bias terms in the $Q$, $K$, and $V$ projections, whereas \texttt{Qwen3-4B} uses bias-free projections. Here, a bias is an additional learned value that helps the projection adjust its output beyond what is produced by the matrix multiplication alone. For a biased linear layer with output $W\vtr{x}+\vtr{b}$, where $\vtr{b}$ is the bias vector, outsourcing the masked input produces $W(\vtr{x}+\bm{\eta})+\vtr{b}$. Subtracting the masking correction $W\bm{\eta}$ inside the TEE recovers $W\vtr{x}+\vtr{b}$, so no additional correction is required for the bias term. Qwen3 also applies RMSNorm separately to the $Q$ and $K$ representations after their respective projections. This additional normalization remains inside the TEE and does not change the outsourced linear projections themselves.

We next look at the Gemma family\footnote{See \url{https://huggingface.co/collections/google/googles-gemma-models-family}.} For this comparison, we use \texttt{Gemma 3}~\cite{gemma2025}. It retains the same main linear projections in the attention and feed-forward blocks, but introduces several differences in the surrounding operations. Gemma 3 applies additional RMSNorm operations, also normalizes the $Q$ and $K$ representations after projection. It combines local sliding-window attention with global attention~\cite{gemma2025}. Sliding-window attention limits each token to attending within a nearby region, while global attention allows information to be shared across the full sequence. Its feed-forward network also uses a GELU-based gated activation instead of SiLU. These differences affect how the projection outputs are processed, but the main attention and feed-forward projections remain linear.

Finally, we consider Mistral\footnote{See \url{https://huggingface.co/mistralai}.}~\cite{jiang2023mistral7b}. For this comparison, we inspect \texttt{Mistral-7B-Instruct-v0.3}. This model uses bias-free linear projections, follows the same general RMSNorm placement as \llama{}, and uses a SiLU-based gated feed-forward block. It also does not apply additional normalization to the query or key representations. Unlike the original Mistral 7B release~\cite{jiang2023mistral7b}, this does not use sliding-window attention.

Overall, our evaluation focuses primarily on \llama{}, with Qwen also included in the experimental evaluation. Gemma and Mistral are discussed here only to examine how the approach extends to other model families. Differences in hidden size and projection dimensions do not affect the basic masking approach, since the same mechanism is applied using the dimensions of each model's linear projections. Model specific architectural differences mainly change the operations surrounding the outsourced projections rather than the masking principle itself. Therefore, the proposed mechanism can be applied in the same way to other models that contain linear projections suitable for outsourcing.



\subsection{Differential Privacy}
Differential privacy is a formal notion of privacy for databases introduced by Dwork et al.~\cite{dwork2006calibrating}. Since its inception many variants have been proposed, alongside an increasingly diverse range of applications. For this paper, we use a variant of differential privacy called $f$-differential privacy~\cite{dong2022gaussian}, as the noise addition mechanism in our scheme, called the Gaussian mechanism, has a simpler characterization under this definition than the usual definition of approximate differential privacy~\cite{dwork2006calibrating}. Furthermore, our definition is tailored to our use case, which considers privacy with respect to prompts with $n$ tokens. 

Let us begin with some notation. We denote vectors from $\mathbb{R}^d$, where $d$ is a positive integer, as lowercase boldface letters, e.g., $\mathbf{x}$ and $\mathbf{y}$. The $i$th element of $\mathbf{x}$ is denoted by  $x_i$. With this we may write $\vtr{x}$ as
\[
\vtr{x} = \left( x_i \right)_{i \in [d]}
\]
to emphasize its elements, where $[d] = \{1, 2, \ldots, d\}$. For any $\vtr{x} = \left( x_i \right)_{i \in [d]}$, we use the notation $|\vtr{x}|$ to denote the vector $\left( |x_i| \right)_{i \in [d]}$. For any real number $p \geq 1$,  the $\ell_p$ norm of $\mathbf{x} \in \mathbb{R}^d$ is defined as
\[
\norm[p]{\vtr{x}} = \left(\sum_{i= 1}^d |x_i|^p\right)^{1/p}
\]

We define a dataset $X$ as an $n$-tuple each element of which is a vector from $\mathbb{R}^d$. Two datasets $X_1$ and $X_2$ are called \emph{neighbouring datasets} if they differ in one token. The $f$-differential privacy framework is based on the hypothesis testing interpretation of differential privacy. Given the output of a mechanism $\mathcal{M}$, the goal is to distinguish between two competing hypotheses: the underlying data set being $X_1$ or $X_2$. Let $Q_1$ and $Q_2$ denote the probability distributions of $\mathcal{M}(D_1)$ and $\mathcal{M}(D_2)$, respectively. Given any rejection rule $0 \le \phi \le 1$, the type-I and type-II errors are defined as follows ~\cite{dong2022gaussian}: $\alpha_\phi = \mathbb{E}_{Q_1}[\phi]$ and  $\beta_\phi = 1-  \mathbb{E}_{Q_2}[\phi]$.

\begin{definition}[Trade-off function~\cite{dong2022gaussian}]
\label{def:T}
For any two probability distributions $Q_1$ and $Q_2$ on the same
space, the \emph{trade-off function} $T(Q_1, Q_2) : [0, 1] \to [0, 1]$ is defined by
\[
T(Q_1, Q_2)(\alpha) = \inf \{\beta_\phi : \alpha_\phi \leq \alpha\}
\]

\noindent for all $\alpha \in [0, 1]$, where the infimum is taken over all (measurable) rejection rules.
\end{definition}

A trade-off function gives the minimum achievable type-II error at any given level of type-I error. For a function to be a trade-off function, it must satisfy the following conditions.

\begin{proposition}[\cite{dong2022gaussian}]
A function $f : [0, 1] \to [0, 1]$ is a trade-off function if and only if $f$ is convex, continuous and non-increasing, and $f(x) \leq 1 - x$ for all $x \in [0, 1]$.
\end{proposition}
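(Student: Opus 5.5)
The plan is to prove the two directions separately. The key tool is the Neyman--Pearson lemma, together with the observation that the set of achievable pairs $(\alpha_\phi,\beta_\phi)$ is convex. Convexity holds because rejection rules $\phi$ may be randomized: if $\phi_1,\phi_2$ are rejection rules, then $\lambda\phi_1+(1-\lambda)\phi_2$ is again a rejection rule with values in $[0,1]$. Since $\alpha_\phi$ and $\beta_\phi$ are affine in $\phi$, the achievable region is convex.

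For necessity, let $f=T(Q_1,Q_2)$.
\begin{itemize}
\item \textbf{Non-increasing.} Enlarging $\alpha$ enlarges the feasible set $\{\phi:\alpha_\phi\le\alpha\}$ in Definition~\ref{def:T}, so the infimum can only decrease.
\item \textbf{Convex.} Given $\alpha,\alpha'$, take near-optimal rules $\phi,\phi'$. The mixture $\lambda\phi+(1-\lambda)\phi'$ has type-I error at most $\lambda\alpha+(1-\lambda)\alpha'$ and type-II error $\lambda\beta_\phi+(1-\lambda)\beta_{\phi'}$. Letting the slack go to zero gives convexity.
\item \textbf{Bound $f(x)\le 1-x$.} Use the trivial rule $\phi\equiv x$ (reject with probability $x$ regardless of data). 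It has $\alpha_\phi=x$ and $\beta_\phi=1-x$.
\item \textbf{Continuity.} A convex function on $[0,1]$ is continuous on the open interval. At $\alpha=0$, I would use Neyman--Pearson: the infimum is attained by a likelihood-ratio test, and the optimal $\beta$ is right-continuous in $\alpha$ because the tests $\{p_2/p_1>t\}$, with randomization at the boundary, sweep $\alpha$ continuously. At $\alpha=1$, monotonicity together with $0\le f(1)\le 0$ forces $f(1)=0$, and convexity plus monotonicity give continuity there.
\end{itemize}

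For sufficiency, given $f$ convex, continuous, non-increasing with $f(x)\le 1-x$, I would construct distributions on $[0,1]$. Take $Q_1=\mathrm{Uniform}[0,1]$, and let $Q_2$ have CDF $F(t)=1-f(t)$; this is a valid CDF since $f$ is non-increasing and continuous, though I need $f(0)\le 1$ and $F(1)=1$, i.e.\ $f(1)=0$, which follows from $f(1)\le 0$. Because $f$ is convex, the density $-f'$ of $Q_2$ is non-increasing in $t$, so the likelihood ratio $dQ_2/dQ_1=-f'(t)$ is monotone non-increasing.

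By Neyman--Pearson, the optimal level-$\alpha$ test therefore rejects on $[0,\alpha]$. This test has type-I error $\alpha$ and type-II error $1-Q_2([0,\alpha])=1-F(\alpha)=f(\alpha)$, which shows $T(Q_1,Q_2)=f$. The condition $f(x)\le 1-x$ ensures $F(\alpha)\ge\alpha$, consistent with a most powerful test beating trivial guessing. An atom of $Q_2$ at $0$ of mass $1-f(0)$ is needed when $f(0)<1$.

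I expect the main obstacle to be the measure-theoretic details: the possible atom at $0$, the case where $f'$ is undefined (use right derivatives, which exist for convex $f$), and continuity at the endpoint in the necessity direction. I would handle these by working with right derivatives throughout and invoking Neyman--Pearson in its randomized form. Alternatively, since this is a known result, one may simply cite~\cite{dong2022gaussian}.
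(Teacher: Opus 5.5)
The paper gives no proof of its own: the proposition is imported from \cite{dong2022gaussian}, which is exactly your closing fallback, so your sketch has to be judged on its own merits. Most of it is sound. Monotonicity, convexity via mixtures of tests, the bound via $\phi\equiv x$, and continuity on $(0,1]$ all work as you describe. Your sufficiency construction is also correct: take $Q_1$ uniform and let $Q_2$ have an atom of mass $1-f(0)$ at $0$ plus the non-increasing density $-f'_+$ on $(0,1]$. The level-$\alpha$ Neyman--Pearson test then rejects on $[0,\alpha]$ and has type-II error $1-F(\alpha)=f(\alpha)$. Continuity of $f$ at $0$ is precisely what makes $F$ right-continuous there, and $f(x)\le 1-x$ enters only through $f(1)=0$.

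The step that does not work as written is continuity of $T(Q_1,Q_2)$ at $\alpha=0$. This is the one condition not implied by the others: $\mathbf{1}_{\{0\}}$ is convex, non-increasing and satisfies $f(x)\le 1-x$, yet it is not a trade-off function. Saying that likelihood-ratio tests ``sweep $\alpha$ continuously'' only shows that every level is attained. It says nothing about where the optimal type-II errors go as $\alpha\downarrow 0$, which is the entire content of the claim.

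The missing ingredient is countable additivity (or compactness). Let $p,q$ be densities of $Q_1,Q_2$ with respect to $Q_1+Q_2$. Splitting the space into $\{p=0\}$, $\{p>0,\ q\le kp\}$ and $\{p>0,\ q>kp\}$ shows that every $\phi$ with $\alpha_\phi\le\alpha$ satisfies
\[
\mathbb{E}_{Q_2}[\phi]\le Q_2(p=0)+k\alpha+Q_2(p>0,\ q>kp)\qquad\text{for every } k>0.
\]
Here $T(Q_1,Q_2)(0)=1-Q_2(p=0)$, attained by $\mathbf{1}_{\{p=0\}}$. The sets $\{p>0,\ q>kp\}$ decrease to $\emptyset$ as $k\to\infty$. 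Letting $\alpha\downarrow 0$ and then $k\to\infty$ therefore gives $\liminf_{\alpha\downarrow 0}T(Q_1,Q_2)(\alpha)\ge T(Q_1,Q_2)(0)$, and monotonicity supplies the reverse inequality.

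Alternatively, use compactness. The set of achievable pairs $(\alpha_\phi,\beta_\phi)$ is compact, being the image of the weak-$*$ compact set of tests in $L^\infty(Q_1+Q_2)$ under a weak-$*$ continuous map. So for $\alpha_n\downarrow 0$, the point $(0,\lim_n T(Q_1,Q_2)(\alpha_n))$ is achievable by a level-$0$ test, which forces $\lim_n T(Q_1,Q_2)(\alpha_n)\ge T(Q_1,Q_2)(0)$.
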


Abusing notation, let $\mathcal{M}(X)$ denote the distribution of a mechanism $\mathcal{M}$ when given a dataset $X$ as input.

\begin{definition}[$f$-Differential Privacy ($f$-DP)~\cite{dong2022gaussian}]
Let $f$ be a trade-off function. A mechanism $\mathcal{M}$ is said to
be \emph{$f$-label differentially private} if $T(\mathcal{M}(X_1), \mathcal{M}(X_2)) \geq f$ for all neighbouring data sets $X_1$ and $X_2$.
\end{definition}
For $\epsilon \geq 0$, let $G_\epsilon$ denote the trade-off function between two normal distributions $\mathcal{N}(0, 1)$ and $\mathcal{N}(0, \epsilon)$, i.e., 
\[
G_\epsilon = T(\mathcal{N}(0, 1), \mathcal{N}(0, \epsilon))
\]
Then
\begin{definition}[$\epsilon$-Gaussian Differential Privacy ($\epsilon$-GDP)~\cite{dong2022gaussian}]
\label{def:gdp}
A mechanism $\mathcal{M}$ is said to satisfy $\epsilon$-Gaussian Differential Privacy ($\epsilon$-GDP) if it is $G_\epsilon$-DP.  That is,
\[
T(\mathcal{M}(X_1), \mathcal{M}(X_2)) \geq G_\epsilon
\]
for all neighbouring data sets $X_1$ and $X_2$.
\end{definition}

\begin{definition}[Global $\ell_2$-Sensitivity]
\label{def:global-sensitivity}
Let $f : \mathbb{R}^d \rightarrow \mathbb{R}^d$ be a function.\footnote{From here onwards, we use the symbol $f$ to denote different functions, as we no longer need to refer to $f$-DP}  Then its global $\ell_2$-sensitivity, denoted $\Delta f$, is defined as
\[
\Delta f = \max_{\vtr{x}, \vtr{y} \in \mathbb{R}^d} \norm{f(\vtr{x}) - f(\vtr{y})}  
\] \qed
\end{definition}
Our definition of global sensitivity may seem different from the usual definition of global sensitivity in literature which is defined over neighboring datasets (e.g.,~\cite{dp-book}). In our case, the dataset is a set of prompts of $n$ tokens. We say that two prompts are neighboring prompts if they differ in one token. This implies that the $d$-dimensional embedding of the original token is replaced with that of another token. Therefore, we simply define global sensitivity as the maximum difference over all possible $d$-dimensional token embeddings. Thus, we can talk about applying the $\epsilon$-GDP mechanism token-by-token, instead of defining it over the entire prompt of $n$ tokens, as is defined next. 

\begin{proposition}[$\epsilon$-GDP Mechanism~\cite{dong2022gaussian}]
\label{prop:gdp}
The mechanism $f(X) + \mathcal{N}(0, (\Delta f/\epsilon)^2I_d)$  is $\epsilon$-GDP where $\Delta f$ is the global sensitivity of the function $f : \mathbb{R}^d \rightarrow \mathbb{R}^d$ and $I_d$ is the $d \times d$ identity matrix.
\end{proposition}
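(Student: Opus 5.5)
The plan is to reduce the statement to the defining property of $G_\epsilon$ by showing that, for any pair of neighbouring inputs, the trade-off function between the two output distributions is exactly a Gaussian trade-off function with a shift parameter at most $\epsilon$. The reduction uses three facts from~\cite{dong2022gaussian}: invariance of trade-off functions under bijections, the explicit closed form of $T(\mathcal{N}(\bm{\mu}_1,\sigma^2 I_d),\mathcal{N}(\bm{\mu}_2,\sigma^2 I_d))$, and monotonicity of $G_\mu$ in $\mu$. Throughout, $G_\epsilon$ is read as the trade-off between $\mathcal{N}(0,1)$ and the unit-variance Gaussian with mean $\epsilon$; this is the sense in which the definition is intended.

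\textbf{Step 1: reduce to two Gaussians.} Fix neighbouring $X_1,X_2$. In the token-wise reading adopted in the paper they differ in one token embedding, so we may treat the relevant arguments as $\vtr{x},\vtr{y}\in\mathbb{R}^d$. Set $\sigma=\Delta f/\epsilon$ and $\bm{\mu}_i=f(\cdot)$ evaluated at the respective inputs. The two output distributions are then $\mathcal{N}(\bm{\mu}_1,\sigma^2 I_d)$ and $\mathcal{N}(\bm{\mu}_2,\sigma^2 I_d)$. If the remaining tokens are also released with independent noise, their outputs are identically distributed under both hypotheses. A product with a common component does not change the trade-off function, because the test can simulate that component itself. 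So only the differing coordinate block matters.

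\textbf{Step 2: reduce to one dimension.} Apply the affine bijection $\vtr{z}\mapsto(\vtr{z}-\bm{\mu}_1)/\sigma$, followed by a rotation taking $\bm{\mu}_2-\bm{\mu}_1$ to the first axis. Trade-off functions are invariant under measurable bijections, so the two distributions become $\mathcal{N}(0,I_d)$ and $\mathcal{N}(m\vtr{e}_1,I_d)$ with
\[
m=\norm{\bm{\mu}_2-\bm{\mu}_1}/\sigma .
\]
Coordinates $2,\dots,d$ are identically distributed under both hypotheses, so by the argument of Step 1 the trade-off equals $T(\mathcal{N}(0,1),\mathcal{N}(m,1))=G_m$.

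\textbf{Step 3: monotonicity.} By the definition of global sensitivity, $\norm{\bm{\mu}_2-\bm{\mu}_1}\le\Delta f$, hence $m\le\epsilon$. It remains to show that $G_m\ge G_\epsilon$ for $m\le\epsilon$. By Neyman--Pearson, the optimal tests are likelihood-ratio (threshold) tests, which gives
\[
G_\mu(\alpha)=\Phi\bigl(\Phi^{-1}(1-\alpha)-\mu\bigr).
\]
This expression is decreasing in $\mu$, so $T(\mathcal{M}(X_1),\mathcal{M}(X_2))\ge G_\epsilon$ for every neighbouring pair, which is exactly the required $\epsilon$-GDP.

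\textbf{Main obstacle.} The substantive step is justifying that discarding the identical coordinates and tokens leaves the trade-off unchanged. One direction is that post-processing cannot decrease the trade-off. The other is that tensoring with a common distribution cannot decrease it either; this holds because the likelihood ratio depends only on the differing component. I will cite both facts from~\cite{dong2022gaussian} rather than reprove them. I will also note the edge case $\Delta f=0$ (equivalently $m=0$), which is trivial.
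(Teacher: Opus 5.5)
Your proposal is correct and is essentially the standard argument from~\cite{dong2022gaussian}, which the paper cites without proving. You use affine invariance, a rotation, and the removal of identically distributed coordinates to get $T(\mathcal{N}(\bm{\mu}_1,\sigma^2 I_d),\mathcal{N}(\bm{\mu}_2,\sigma^2 I_d)) = G_{\norm{\bm{\mu}_2-\bm{\mu}_1}/\sigma}$, and then monotonicity of $G_\mu(\alpha)=\Phi(\Phi^{-1}(1-\alpha)-\mu)$ in $\mu$ finishes it. You were also right to read $G_\epsilon$ as $T(\mathcal{N}(0,1),\mathcal{N}(\epsilon,1))$; the paper's $\mathcal{N}(0,\epsilon)$ is a typo.
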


The notion of $\epsilon$-GDP satisfies sequential composition.

\begin{proposition}[Sequential Composition~\cite{dong2022gaussian}]
\label{prop:seq-par-comp}
The composition of $n$-fold sequential $\epsilon_i$-GDP mechanisms is 
\[
\sqrt{\epsilon_1^2 + \cdots + \epsilon_n^2}\text{-GDP}
\] 
\end{proposition}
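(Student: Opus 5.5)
This is the composition theorem for Gaussian differential privacy. I would prove it by reducing the composed mechanism to a single Gaussian shift, using the trade-off machinery of Definition~\ref{def:T} and Definition~\ref{def:gdp}, and by induction on $n$.

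\textbf{Step 1: canonical form.} By Definition~\ref{def:gdp}, $G_\epsilon = T(\mathcal{N}(0,1),\mathcal{N}(\epsilon,1))$; this is the intended reading of $\mathcal{N}(0,\epsilon)$, a mean shift. A direct Neyman--Pearson computation gives
\[
G_\epsilon(\alpha)=\Phi\bigl(\Phi^{-1}(1-\alpha)-\epsilon\bigr).
\]

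\textbf{Step 2: tensorization for Gaussians.} For product distributions, $T(\mathcal{N}(0,I_n),\mathcal{N}(\bm{\mu},I_n)) = G_{\norm{\bm{\mu}}}$. The likelihood ratio depends only on the projection onto $\bm{\mu}/\norm{\bm{\mu}}$, and by rotation invariance of $\mathcal{N}(0,I_n)$ this projection is a one-dimensional test between $\mathcal{N}(0,1)$ and $\mathcal{N}(\norm{\bm{\mu}},1)$. With $\bm{\mu}=(\epsilon_1,\dots,\epsilon_n)$, this is exactly the claimed $G_{\sqrt{\epsilon_1^2+\cdots+\epsilon_n^2}}$.

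\textbf{Step 3: reduction from general mechanisms to this product case.} For an adaptive composition $\mathcal{M}=(\mathcal{M}_1,\dots,\mathcal{M}_n)$, where $\mathcal{M}_i$ may depend on earlier outputs, I need
\[
T(\mathcal{M}(X_1),\mathcal{M}(X_2)) \ge G_{\epsilon_1}\otimes\cdots\otimes G_{\epsilon_n}.
\]
Here $f\otimes g$ denotes the trade-off function of the product of the representing pairs. I would first show that $\otimes$ is well defined, i.e.\ independent of which pair of distributions represents $f$ and $g$, and monotone. The key lemma is then the two-step case. Condition on the first output $y_1$. For each $y_1$, the second mechanism's pair has trade-off at least $g$. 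Mix the optimal tests over $y_1$ and use convexity of trade-off functions to show that the joint trade-off dominates $f\otimes g$. Induction on $n$ then finishes the reduction.

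\textbf{Main obstacle.} Step 3, specifically the monotonicity of $\otimes$ under domination and the conditioning argument. This needs Blackwell's theorem: $T(P,Q)\ge T(P',Q')$ if and only if $(P',Q')$ is a garbling of $(P,Q)$. That result converts trade-off inequalities into Markov-kernel post-processing, which commutes with products. I would cite that equivalence from \cite{dong2022gaussian} rather than reprove it. The Gaussian computation in Steps 1--2 is routine.
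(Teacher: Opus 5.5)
Your outline is the standard argument of \cite{dong2022gaussian}. That is all the paper relies on, since it states the proposition without proof. Your reading of the paper's $\mathcal{N}(0,\epsilon)$ as the mean shift $\mathcal{N}(\epsilon,1)$ is also correct. The problem is the tool you lean on in Step 3: you have Blackwell's theorem backwards.

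The correct statement is that $T(P,Q)\ge T(P',Q')$ if and only if there is a Markov kernel $K$ with $KP'=P$ and $KQ'=Q$. In other words, $(P,Q)$ is a garbling of $(P',Q')$, because garbling can only make testing harder. Used as you stated it, the kernels would run from the mechanism's output laws to the reference pairs. You would then conclude $f\otimes g\ge T(\mathcal{M}(X_1),\mathcal{M}(X_2))$, which is the useless direction.

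With the correct direction the step does go through. Let $(P'',Q'')$ and $(P',Q')$ represent $f$ and $g$. Take $K_1$ pushing $(P'',Q'')$ to $(\mathcal{M}_1(X_1),\mathcal{M}_1(X_2))$. For each $y_1$, take a kernel $K_2^{y_1}$ pushing $(P',Q')$ to the second mechanism's pair given $y_1$. These kernels must be chosen measurably in $y_1$, which needs a selection argument beyond Blackwell. Drawing $y_1\sim K_1(u)$ and then $y_2\sim K_2^{y_1}(v)$ maps $P''\times P'$ and $Q''\times Q'$ onto the two joint output laws. Post-processing then gives the required lower bound.

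Separately, convexity is not what makes your ``mix the optimal tests over $y_1$'' step work. Jensen would produce $g(\mathbb{E}_{Q_1}[a])$, whereas the level constraint is on $\mathbb{E}_{P_1}[a]$. Done correctly, this step can replace Blackwell altogether.

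Write $P_1=\mathcal{M}_1(X_1)$ and $Q_1=\mathcal{M}_1(X_2)$. For a joint test $\phi$, let $a(y_1)=\mathbb{E}_{y_2\sim\mathcal{M}_2(X_1,y_1)}[\phi(y_1,y_2)]$. Then the level of $\phi$ is $\mathbb{E}_{P_1}[a(Y_1)]\le\alpha$. Its type-II error is at least $\mathbb{E}_{Q_1}[g(a(Y_1))]$, since each section $\phi(y_1,\cdot)$ has conditional type-II error at least $g(a(y_1))$.

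That number is exactly the type-II error, for $P_1\times P'$ versus $Q_1\times Q'$, of the test $(y_1,z)\mapsto\psi_{a(y_1)}(z)$. Here $\psi_t$ is a Neyman--Pearson test for $(P',Q')$ of exact level $t$, and this combined test has the same level $\mathbb{E}_{P_1}[a(Y_1)]\le\alpha$. So the joint trade-off dominates $T(P_1\times P',Q_1\times Q')$.

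Now repeat the argument with the coordinates swapped. Condition on $z$, use $T(P_1,Q_1)\ge f$, and realize the bound with Neyman--Pearson tests for $(P'',Q'')$. This gives domination by $T(P''\times P',Q''\times Q')=f\otimes g$.

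This route yields monotonicity and well-definedness of $\otimes$ as by-products, with no selection issue. For GDP, $(P'',Q'')$ and $(P',Q')$ are explicit Gaussian pairs with threshold tests, so the whole proof becomes elementary.
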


In particular, applying the $\epsilon$-GDP mechanism of Proposition~\ref{prop:gdp} on $n$ tokens in the prompt makes the overall mechanism $\sqrt{n} \epsilon$-GDP. Lastly, $\epsilon$-GDP is also immune to post processing~\cite{dong2022gaussian}. That is, applying a randomized map with an arbitrary range to the output of an $\epsilon$-GDP mechanism is still $\epsilon$-GDP.


\section{Threat Model and Proposed System Architecture}
In our computational model, a trained LLM is hosted on a remote server by a service provider which the user can query through a Web interface via prompts (see Figure~\ref{fig:architecture}). The user only receives the final output of the model, which is a sequence of tokens generated by the LLM.
We assume that the server owned by the service provider has two execution environments: a trusted execution environment (TEE) and an untrusted execution environment powered by one or more GPUs, which we simply call GPU. 

\subsection{Privacy Goals and Threat model}
We seek to provide two competing privacy goals
\begin{enumerate}
    \item The user should not learn the parameters of the LLMs, which are considered the intellectual property (IP) of the server. 
    \item The service provider should not learn the contents of the user prompt.
\end{enumerate}
The first goal is readily achieved in our computational model, as it is in prevalent deployments of LLMs, where the LLM is hosted remotely. Of course, the user may still be able to cleverly craft prompts to gather enough prompt-response pairs to deduce the model's parameters through a model stealing attack \cite{tramer2016stealing}. However, this threat is beyond the scope of our work. We will therefore not mention this goal from here onwards, and instead focus on the second goal.

For the second goal we \textbf{assume} that the TEE is trusted in the sense that any information exchanged between the user and the TEE, and any computation done on the TEE cannot be eavesdropped, even by the service provider. This is a standard assumption for TEEs, as they are generally considered safe environments to perform tasks on sensitive data such as face recognition on smartphones.\footnote{See Apple's Face ID technology as an example: \url{https://support.apple.com/en-au/102381}} On the other hand, the GPU is considered honest-but-curious. Any information transmitted to the GPU, and any computation at the GPU will be carried out as specified, however this is visible to the service provider. 

\subsection{Adversarial Capabilities}
Since the service provider hosts the LLM as a service, it knows all its parameters and architecture. Thus, we assume an adversary, an honest-but-curious service provider, who takes this information about its LLM and any information given to the GPU during LLM inference as input and attempts to reconstruct the user prompt. This is known as an embedding inversion attack (see, for example~\cite{morris2023text}). Note that in order to launch this attack, the adversary can build a dataset of known prompts and any information given to the GPU to train a model to successfully invert embeddings. 

\subsection{Proposed System Architecture}
Since we assume that the TEE is trusted, a straightforward way to achieve our main privacy goal is to do the entire LLM inference on the TEE without involving the GPU. However, there is a big performance mismatch between inference at the TEE versus the GPU. Most TEEs, such as the Intel TDX, do not provide GPU-based computation. As a result, the CPU-based computation at the TEE performs poorly compared to GPUs, since the computations involved in LLM inference are optimized for GPUs. We illustrate this in detail in Section~\ref{sec:current-results}, where we compare LLM inference times between the TEE (intel TDX) and GPU. For up to 800 generated tokens, the TEE takes between 53 to 83 seconds compared to the GPU which finishes the task within 3 to 5 seconds. 

We are not the first to notice this disparity. Tram\`er and Boneh~\cite{tramer2018slalom} used this as the motivation to split inference for a conventional deep neural network (DNN)  between a TEE and an untrusted external processor. The same observation was used by Xue et al~\cite{xue2025Securing} to motivate split-inference for LLMs. The main idea in~\cite{tramer2018slalom} to split DNN inference, which was later adopted by Xue et al~\cite{xue2025Securing} for LLM inference, is to outsource linear operations to the GPU. The input to these linear operations can be \emph{masked} using a fast cryptographic cipher. Due to the linearity of the operation, the result can be separated into the expected output from the original (unmasked) input and a randomized component, which can be undone upon receiving the result from the GPU. Moreover, part of the process can be carried out offline, and hence preprocessed.

\subsubsection{Offloading Linear Operations}

To make this more precise, let us first define a linear map. Let $V$ and $V'$ be vector spaces over a field $K$. A function $f : V \rightarrow V'$ is a linear map if for all $\vtr{x}, \vtr{y} \in V$ and scalars $a, b \in K$, $f$ satisfies
\[
f(a \vtr{x} + b\vtr{y}) = af(\vtr{x}) + b f(\vtr{y})
\]
As an example related to our use case, let $\vtr{x} \in \mathbb{R}^d$ be the input to any of the components of the LLM (Figure~\ref{fig:transformer}), e.g., the input embedding, where $d$ is the embedding dimension. Let $\bm{\eta} \in \mathbb{R}^d$ be an arbitrary vector, which we call the \emph{noise vector} or the \emph{mask}. Define the vector $\widetilde{\vtr{x}} = \vtr{x} + \bm{\eta}$ as the \emph{masked input}. Let $W$ be a $d' \times d$ matrix of reals, where $d'$ is potentially different from $d$. Then
\[
W\widetilde{\vtr{x}} = W(\vtr{x} + \bm{\eta}) = W\vtr{x} + W\bm{\eta} = \vtr{r}
\]
holds because this is a linear map from the vector space of dimension $d$ to the vector space of dimension $d'$ over the reals. Thus, any operation in the LLM inference that involves a matrix multiplication is a linear operation. Note that the entries of the matrix are known both to the TEE and the GPU courtesy of the service provider, as these are learned weights of the LLM.

Now the TEE generates the noise vector, masks the input and hands over the masked input $\widehat{\vtr{x}}$ to the GPU. The GPU can multiply it with the matrix $W$ to obtain the quantity above, and return the result $\vtr{r}$ back to the TEE. The TEE can then straightforwardly compute 
\begin{equation}
\label{eq:add-and-then-subtract}
\vtr{r} - W\bm{\eta} =  (W\vtr{x} + W\bm{\eta}) - W\bm{\eta} = W\vtr{x}    
\end{equation}
This \emph{offloading} process speeds up overall computational time compared to computing the product $W\vtr{x}$ at the TEE due to two main reasons. First, matrix multiplication is computationally expensive, and GPUs are highly optimized to perform them. Secondly, since $W$ is known to the TEE as well, the noise vector $\bm{\eta}$ can be generated beforehand, and the quantity $W \bm{\eta}$ can be precomputed, meaning that all the TEE has to do is perform one vector addition to produce $\widetilde{\vtr{x}}$, and another vector subtraction to extract $W\vtr{x}$ from the result returned by the GPU. Both of these operations are significantly faster than matrix multiplication at run-time. Figure~\ref{fig:Methodology} illustrates this. 

\begin{figure}
    \centering
   \resizebox{\textwidth}{!}{%
\input{methodology}
}
    \caption{Proposed architecture of outsourcing expensive linear operations during LLM inference to the GPU and then removing the noise similar to~\cite{tramer2018slalom}. In our scheme, the input to the GPU is protected via differentially private noise.}    \label{fig:Methodology}
\end{figure}

Since this approach only relies on the linearity of the offloaded operation, it can be applied directly to matrix multiplications and even linear projections with a bias.\footnote{A linear projection $W$ with a bias produces $W\vtr{x} + \vtr{b}$ when given $\vtr{x}$ as an input. Here $\vtr{b}$ is the bias, which is technically called an affine map.} On the other hand, non-linear operations such as softmax, SiLU, and element-wise multiplication are evaluated only on the TEE, because, in general, an additive mask cannot be removed after applying a non-linear function $f$, i.e., 
\[
f(\vtr{x} + \bm{\eta}) - f(\bm{\eta}) \neq f(\vtr{x}) 
\]

\subsubsection{How to Mask the Inputs?}
The approach taken by~\cite{tramer2018slalom} and \cite{xue2025Securing} is to mask the inputs through a stream cipher. To summarize, they encode (quantize) the input $\vtr{x} \in \mathbb{R}^d$ as a vector of elements of a finite field of appropriate size, and then mask it by adding it to a vector each element of which is a uniformly random element of the finite field. This ensures that no information about the input is leaked to the entity receiving the masked input. Another reason to use operations over a finite field is to then be able to verify that the computations done at the untrusted entity are done correctly through Freivald's verification protocol for matrix products~\cite{freivalds1977probabilistic}. However, quantizing the input and then subsequently doing the linear operations over the finite field introduces approximation errors due to quantization and rounding. This impacts the accuracy of the model. For instance, the accuracy of the quantized CNN models drops up to $0.5\%$ in~\cite{tramer2018slalom}, whereas the quantized LLMs in~\cite{xue2025Securing} show an accuracy drop of up to $1.9\%$. In Section~\ref{subsec:similarity-generated-text} we show that the approach from~\cite{tramer2018slalom} applied to our problem introduces substantial changes in the LLM response. 

Our idea is to instead mask the input using differential privacy. This has the advantage that masked operations can be performed in the native computational domain of LLMs, i.e., floating-point arithmetic, potentially maintaining accuracy.\footnote{Although we do run into floating-point error accumulation issues which we discuss in Section~\ref{subsec:fp-issues}.} We further avoid the additional overhead of repeatedly converting the result between floating-point and finite-field representations if only the offloaded part of the computation is quantized. Differential privacy is a suitable solution in our case as we are not interested in verifying the computation done at the GPU which is assumed to be honest-but-curious. However, adding differentially private noise to the input is not straightforward, as we need to know the magnitude of the noise to be added. Recall that even if the GPU is honest-but-curious, it can still launch an embedding inversion attack, and if noise is not added to an appropriate scale, the adversary may be able to reconstruct the prompt. In the next section, we analyze the inputs to the linear operations we seek to outsource, in order to determine the amount of noise needed for privacy. In other words, we determine the global sensitivity (see Definition~\ref{def:global-sensitivity}) of the operations preceding the linear components.

\section{Differential Privacy Analysis}
Figure~\ref{fig:transformer} shows the components of LLM inference offloaded to the GPU, which are all linear operations. The sub-component ``Down'' is also a linear operation and hence a candidate for offloading. However, as we shall show in Section~\ref{subsec:down-linear-projection}, the amount of noise needed to mask the input to this component is many orders of magnitude higher than the amount needed for inputs to other components, resulting in an overwhelming amplification of floating-point errors (see Section~\ref{subsec:fp-issues}). Returning to the figure, the offloaded components labeled  $(Q, K, V)$, Gate/Up and Language Model all follow layer normalization. On the other hand, the $O$ projection follows softmax attention, whereas the Down projection is preceded by SiLU activation. The purpose of this section is to upper bound the possible change in the input to each of these components over all possible input vectors. The idea is to check how much the values can change if any token is replaced by an arbitrary token, and hence a complete change in the embedding vector. Since these inputs are outputs of the preceding components, we find the global sensitivity of the functions computed in these preceding components. As discussed above, this reduces to finding the global sensitivity of layer normalization, the (softmax) attention mechanism and the (SiLU) activation function. We begin this section with necessary notation, prove some results that will be used in our analysis and then give a detailed global sensitivity analysis of these components. We then briefly discuss how the outsourced components can be made differentially private with the global sensitivity thus calculated. We finish the section with an important note on floating-point errors.

\subsection{Preliminaries}
\label{subsec:prelim}
The dot product of two vectors $\mathbf{x}, \mathbf{y} \in \mathbb{R}^d$, is defined as 
\[
\dprod{\vtr{x}}{\vtr{y}} = \sum_{i = 1}^d x_i y_i
\]
and their entry-wise product is defined as
\[
\eprod{\vtr{x}}{\vtr{y}} = \left( x_iy_i\right)_{i \in [d]}
\]
A vector which we will frequently encounter is the entry-wise product of a vector with itself denoted
\begin{equation}
\label{eq:entry-wise-square}
\osqr{\vtr{x}} = \eprod{\vtr{x}}{\vtr{x}} = \left( x^2_i\right)_{i \in [d]}
\end{equation}
We shall call this the \emph{entry-wise square} of a vector.
The Cauchy-Schwarz inequality states that 
\[
\left| \dprod{\vtr{x}}{\vtr{y}} \right| \leq \norm{\vtr{x}} \norm{\vtr{y}} 
\]
Let $\vtr{x} \in \mathbb{R}^d$ and let $W$ be a matrix of appropriate dimensions. Then 
\begin{equation}
\label{eq:matrix-norm}
    \norm{W\vtr{x}} \leq \norm{W} \norm{\vtr{x}}
\end{equation}
where we define $\norm{W}$ as the spectral norm of $W$, i.e., its largest singular value which can be obtained through its singular value decomposition. This inequality is a standard result; see, for example,~\cite{axler2024linear}.

\subsection{Some Useful Results}
In this section, we prove a few results and introduce a few notions that will be helpful in proving theorems in the ensuing section.

\begin{proposition}
\label{prop:z-standard-norm}
Let $\mathbf{x} \in \mathbb{R}^d$. Let 
\[
    \sigma_{\mathbf{x}} = \sqrt{ \frac{1}{d} \sum_{i = 1}^d (x_i - c)^2}
\]
where $c \in \mathbb{R}$. 
Define
\[
\widehat{\mathbf{x}} = \frac{\mathbf{x} - c \cdot \bm{1}}{\sigma_\mathbf{x}}
\]
where $\bm{1}$ is the $d$-element vector each element of which is 1. Then $\norm{\widehat{\mathbf{x}}} = \sqrt{d}$.
\end{proposition}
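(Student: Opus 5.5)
The plan is a direct computation of the squared $\ell_2$ norm of $\widehat{\vtr{x}}$, followed by taking a square root. One caveat must be recorded first: the expression for $\widehat{\vtr{x}}$ only makes sense when $\sigma_{\vtr{x}} \neq 0$. This means we are implicitly assuming that $\vtr{x}$ is not the constant vector $c \cdot \bm{1}$, and the proof will state this assumption explicitly. Under it, $\sigma_{\vtr{x}} > 0$, since $\sigma_{\vtr{x}}$ is the square root of a sum of squares that is not identically zero.

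\textbf{Step 1: expand the squared norm.} The $i$th entry of $\widehat{\vtr{x}}$ is $(x_i - c)/\sigma_{\vtr{x}}$. Using the definition of the $\ell_2$ norm from the preliminaries,
\[
\norm{\widehat{\vtr{x}}}^2 = \sum_{i=1}^d \frac{(x_i - c)^2}{\sigma_{\vtr{x}}^2} = \frac{1}{\sigma_{\vtr{x}}^2} \sum_{i=1}^d (x_i - c)^2 .
\]

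\textbf{Step 2: use the definition of $\sigma_{\vtr{x}}$.} Squaring that definition gives $\sigma_{\vtr{x}}^2 = \frac{1}{d}\sum_{i=1}^d (x_i - c)^2$, and hence
\[
\sum_{i=1}^d (x_i - c)^2 = d\,\sigma_{\vtr{x}}^2 .
\]
Substituting this into Step 1 yields $\norm{\widehat{\vtr{x}}}^2 = d$.

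\textbf{Step 3: take square roots.} Both sides are nonnegative, so $\norm{\widehat{\vtr{x}}} = \sqrt{d}$.

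The argument uses nothing about $c$ beyond its appearing in both the numerator and in $\sigma_{\vtr{x}}$. It therefore covers both the RMSNorm case ($c = 0$) and the centered LayerNorm case ($c$ equal to the mean of the entries).

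There is no real obstacle in this proof. The only point needing care is the degenerate case $\sigma_{\vtr{x}} = 0$, where $\widehat{\vtr{x}}$ is undefined. I would exclude it explicitly. Alternatively, I would note that practical implementations add a small constant inside the square root, in which case the norm is at most $\sqrt{d}$ rather than exactly $\sqrt{d}$.
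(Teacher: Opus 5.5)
Your proposal is correct and follows essentially the same route as the paper: expand $\norm{\widehat{\vtr{x}}}^2$ as $\frac{1}{\sigma_{\vtr{x}}^2}\sum_{i=1}^d (x_i-c)^2$, then substitute $\sigma_{\vtr{x}}^2 = \frac{1}{d}\sum_{i=1}^d (x_i-c)^2$ to obtain $d$. Your explicit exclusion of the degenerate case $\sigma_{\vtr{x}} = 0$ (that is, $\vtr{x} = c\cdot\bm{1}$) is a worthwhile addition, since the paper's proof leaves this assumption implicit.
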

\begin{proof}
We have 
\begin{align*}
    \norm{\vtr{x}}^2 &= \sum_{i = 1}^d \left( \frac{x_i - c}{\sigma_\mathbf{x}}\right)^2 \\
    &= \frac{1}{\sigma^2_\vtr{x}} \sum_{i = 1}^d \left(  x_i - c \right)^2 \\
    &= \frac{d}{\sum_{i = 1}^d \left(  x_i - c \right)^2} \sum_{i = 1}^d \left(  x_i - c \right)^2 = d
\end{align*}
\end{proof}

\begin{definition}
\label{def:normalization}
Let $\vtr{x} \in \mathbb{R}^d$, and let $\gamma > 0$ be a real number. The \emph{z-score normalization} of  $\vtr{x}$ is 
\[
\widehat{\vtr{x}} = \frac{\vtr{x} - \mu_\vtr{x}\cdot \bm{1}}{\sqrt{\sigma_\vtr{x}^2 + \gamma}}
\]
where
\[
    \mu_\mathbf{x} = \frac{1}{d}\sum_{i = 1}^d x_i \qquad \sigma_{\mathbf{x}}^2 = \frac{1}{d} \sum_{i = 1}^d (x_i - \mu_\mathbf{x})^2
\]
The \emph{root mean square (RMS) normalization} of $\vtr{x}$ is the z-score normalization of $\vtr{x}$ with $\mu_\vtr{x} = 0$. We say that $\widehat{\vtr{x}}$ is the normalized version of $\vtr{x}$ if it is either z-score or RMS normalized.\qed
\end{definition}

\begin{proposition}
\label{prop:standard-norm-bound}
Let $\vtr{x} \in \mathbb{R}^d$ and let $\widehat{\vtr{x}}$ be its normalized version. Then $\norm{\widehat{\vtr{x}}} \leq \sqrt{d}$.
\end{proposition}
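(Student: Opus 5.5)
The plan is to compare $\widehat{\vtr{x}}$ with the $\gamma = 0$ normalization covered by Proposition~\ref{prop:z-standard-norm}, treating both cases of Definition~\ref{def:normalization} at once. Take $c = \mu_\vtr{x}$ for z-score normalization and $c = 0$ for RMS normalization. In both cases $\sigma_\vtr{x}^2 = \frac{1}{d}\sum_{i=1}^d (x_i - c)^2$, matching the quantity in Proposition~\ref{prop:z-standard-norm} (for RMS, with $\mu_\vtr{x}=0$, the variance formula reduces to the mean of squares). Hence in both cases $\widehat{\vtr{x}} = (\vtr{x} - c\cdot\bm{1})/\sqrt{\sigma_\vtr{x}^2+\gamma}$.

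Next, compute the squared norm directly rather than appealing to the proposition, since it also covers the degenerate case:
\[
\norm{\widehat{\vtr{x}}}^2 = \frac{\sum_{i=1}^d (x_i - c)^2}{\sigma_\vtr{x}^2 + \gamma} = \frac{d\,\sigma_\vtr{x}^2}{\sigma_\vtr{x}^2+\gamma}.
\]
Because $\gamma > 0$, the denominator is strictly positive, so the expression is well defined even when $\sigma_\vtr{x} = 0$; in that case $\widehat{\vtr{x}} = \bm{0}$ and the bound holds trivially. When $\sigma_\vtr{x} > 0$, the ratio $\sigma_\vtr{x}^2/(\sigma_\vtr{x}^2+\gamma)$ lies strictly below $1$. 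Equivalently, $\widehat{\vtr{x}}$ is the vector of Proposition~\ref{prop:z-standard-norm} multiplied by $\sigma_\vtr{x}/\sqrt{\sigma_\vtr{x}^2+\gamma} < 1$, so its norm is less than $\sqrt{d}$. Taking square roots gives $\norm{\widehat{\vtr{x}}} \le \sqrt{d}$.

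No step here is a real obstacle. The only point needing care is the bookkeeping of identifying $c$ and $\sigma_\vtr{x}$ correctly in the RMS case and handling $\sigma_\vtr{x}=0$, where Proposition~\ref{prop:z-standard-norm} does not apply directly. The positivity of $\gamma$ takes care of that case.
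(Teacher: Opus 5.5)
Your proof is correct and takes essentially the paper's route: you set $c=\mu_\vtr{x}$ or $c=0$ and compare $\widehat{\vtr{x}}$ with the $\gamma=0$ normalization of Proposition~\ref{prop:z-standard-norm}, whose norm is exactly $\sqrt{d}$. Your closed form $\norm{\widehat{\vtr{x}}}^2 = d\sigma_\vtr{x}^2/(\sigma_\vtr{x}^2+\gamma)$ replaces the paper's termwise bound $(x_i-\mu_\vtr{x})^2/(\sigma_\vtr{x}^2+\gamma) \le (x_i-\mu_\vtr{x})^2/\sigma_\vtr{x}^2$, and it also covers the case $\sigma_\vtr{x}=0$, where the paper's intermediate step divides by zero.
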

\begin{proof}
Since $\widehat{\vtr{x}}$ is normalized we have
\[
\widehat{\vtr{x}} = \frac{\vtr{x} - \mu_\vtr{x}\cdot \bm{1}}{\sqrt{\sigma_\vtr{x}^2 + \gamma}}
\]
where $\mu_\vtr{x} = \frac{1}{d}\sum_{i = 1}^d x_i$ if it is z-score normalized and $\mu_\vtr{x} = 0$ otherwise. We have
\[
\norm{\widehat{\vtr{x}}} = \left( \sum_{i = 1}^d \widehat{x}_i^2 \right)^\frac{1}{2}
\]
Consider the $i$th summand 
\[
\widehat{x}_i^2 = \left(\frac{x_i - \mu_\vtr{x}}{\sqrt{\sigma_\vtr{x}^2 + \gamma}}\right)^2 = \frac{(x_i - \mu_\vtr{x})^2}{\sigma_\vtr{x}^2 + \gamma} \leq \frac{(x_i - \mu_\vtr{x})^2}{\sigma_\vtr{x}^2}  = \left( \frac{x_i - \mu_\vtr{x}}{\sigma_\vtr{x}}\right)^2
\]
Thus, from Proposition~\ref{prop:z-standard-norm} either by setting $c = \mu_\vtr{x} = \frac{1}{d} \sum_{i = 1}^d x_i$ or by setting $c = \mu_\vtr{x} = 0$ we have 
\[
\norm{\widehat{\vtr{x}}} \leq  \left\lVert {\frac{\vtr{x} - \mu_\vtr{x}\cdot \bm{1}}{ \sigma_\vtr{x} }} \right\lVert_2= \sqrt{d}.
\]
\end{proof}

\begin{proposition}
\label{prop:two-plus-two-equals-four}
Let $\vtr{x} \in \mathbb{R}^d$, and let $\osqr{\vtr{x}}$ be its entry-wise square. Then
\[
\norm{\osqr{\vtr{x}}} \leq \norm{\vtr{x}}^2
\]
\end{proposition}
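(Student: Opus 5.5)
We square both sides and compare termwise, which avoids square roots entirely. By the definition of the entry-wise square in \eqref{eq:entry-wise-square},
\[
\norm{\osqr{\vtr{x}}}^2 = \sum_{i=1}^d (x_i^2)^2 = \sum_{i=1}^d x_i^4 .
\]
On the other side,
\[
\norm{\vtr{x}}^4 = \left(\sum_{i=1}^d x_i^2\right)^2 = \sum_{i=1}^d x_i^4 + \sum_{i \neq j} x_i^2 x_j^2 .
\]
Since every cross term $x_i^2 x_j^2$ is non-negative, we get
\[
\norm{\osqr{\vtr{x}}}^2 \leq \norm{\vtr{x}}^4 .
\]

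Both quantities are non-negative, so taking square roots, which is a monotone operation on $[0,\infty)$, yields $\norm{\osqr{\vtr{x}}} \leq \norm{\vtr{x}}^2$.

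An equivalent route, which we mention only as a cross-check, sets $a_i = x_i^2 \geq 0$. The claim then reads $\norm{\vtr{a}}_2 \leq \norm{\vtr{a}}_1$, the standard monotonicity of $\ell_p$ norms. We would present the direct expansion above, since it is self-contained.

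There is no real obstacle here. The only point needing care is to note the non-negativity of the cross terms and of both sides before taking square roots. The bound is attained exactly when at most one coordinate of $\vtr{x}$ is non-zero, which may be worth remarking, as it indicates the bound is tight and will be used later in the sensitivity analysis.
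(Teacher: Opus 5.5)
Your proof is correct and follows the paper's argument: you expand $\norm{\vtr{x}}^4$ as a double sum, split off the diagonal $\sum_i x_i^4 = \norm{\osqr{\vtr{x}}}^2$, and drop the non-negative cross terms $x_i^2x_j^2$. Your explicit square-root step, which the paper leaves implicit, and your equality-case remark (at most one non-zero coordinate) are both correct additions.
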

\begin{proof}
We have
\[
\vtr{x} = \left( x_i\right)_{i \in [d]}
\]
And from Eq.~\ref{eq:entry-wise-square}, by definition
\[
\osqr{\vtr{x}} = \left( x^2_i\right)_{i \in [d]}
\]
Therefore
\[
\norm{\osqr{\vtr{x}}}^2 = \sum_{i = 1}^d (x_i^2)^2 = \sum_{i = 1}^d x_i^4 
\]
Now,
\begin{align*}
    (\norm{{\vtr{x}}}^2)^2 &=  \left(\sum_{i = 1}^d x_i^2\right)^2\\
    &= \left(\sum_{i = 1}^d x_i^2\right) \left(\sum_{i = 1}^d x_i^2\right) \\ 
    &= \sum_{i = 1}^d \sum_{j = 1}^d x_i^2 x_j^2 \\
    &= \sum_{i = 1}^d x_i^2 x_i^2 + \sum_{i = 1}^d \sum_{j = 1, j\ne i}^d x_i^2 x_j^2 \\
    &= \sum_{i = 1}^d x_i^4 + \sum_{i = 1}^d \sum_{j = 1, j\ne i}^d x_i^2 x_j^2  \\
    &\geq \sum_{i = 1}^d x_i^4 = \norm{\osqr{\vtr{x}}}^2,
\end{align*}
as the omitted term is the sum of non-negative numbers. 
\end{proof}

\begin{proposition}
\label{prop:entrywise-prod-inequality-n}
Let $m \geq 2$ be an integer. Let $\vtr{x}_1, \vtr{x}_2, \ldots ,\vtr{x}_m \in \mathbb{R}^d$, where 
\[
\vtr{x}_i = \left( x_{ij}\right)_{j \in [d]}
\]
for $i \in [m]$. For $i \in [m]$, let $\osqr{\vtr{x}_i}$ denote the entry-wise square of $\vtr{x}_i$. Then
\[
\norm{\vtr{x}_1 \odot \vtr{x}_2 \odot \cdots \odot \vtr{x}_m} \leq \sqrt{\norm{\osqr{\vtr{x}_1}}} \sqrt{\norm{\osqr{\vtr{x}_2}}}\cdots \sqrt{\norm{\osqr{\vtr{x}_m}}} 
\]
\end{proposition}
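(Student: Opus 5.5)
The plan is to reduce the statement to Cauchy--Schwarz applied to entry-wise squares, together with an elementary sub-multiplicativity property of the entry-wise product. The first observation is that for any $\vtr{u}, \vtr{v} \in \mathbb{R}^d$,
\[
\norm{\eprod{\vtr{u}}{\vtr{v}}}^2 = \sum_{j=1}^d u_j^2 v_j^2 = \dprod{\osqr{\vtr{u}}}{\osqr{\vtr{v}}} \leq \norm{\osqr{\vtr{u}}}\,\norm{\osqr{\vtr{v}}},
\]
by the Cauchy--Schwarz inequality from Section~\ref{subsec:prelim}. Taking square roots proves the case $m = 2$ directly.

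For general $m$, I would group $\vtr{z} = \vtr{x}_1 \odot \cdots \odot \vtr{x}_{m-1}$ and apply the $m=2$ case to $\vtr{z}$ and $\vtr{x}_m$. This gives
\[
\norm{\vtr{z}\odot\vtr{x}_m} \leq \sqrt{\norm{\osqr{\vtr{z}}}}\sqrt{\norm{\osqr{\vtr{x}_m}}}.
\]
It then remains to show that $\norm{\osqr{\vtr{z}}} \leq \prod_{i=1}^{m-1}\norm{\osqr{\vtr{x}_i}}$. This is the step I expect to be the main obstacle. A naive induction applied to the squared vectors would produce fourth powers such as $\norm{\vtr{x}_i^{\odot 4}}$, which are the wrong quantities.

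To avoid this, I will use the entry-wise identity $\osqr{\vtr{z}} = \osqr{\vtr{x}_1} \odot \cdots \odot \osqr{\vtr{x}_{m-1}}$ together with the bound $\norm{\eprod{\vtr{u}}{\vtr{v}}} \leq \norm{\vtr{u}}\norm{\vtr{v}}$. That bound follows from
\[
\sum_j u_j^2 v_j^2 \leq \Big(\max_j u_j^2\Big)\sum_j v_j^2 \leq \norm{\vtr{u}}^2\norm{\vtr{v}}^2 .
\]
Iterating it $m-2$ times gives $\norm{\osqr{\vtr{z}}} \leq \prod_{i=1}^{m-1}\norm{\osqr{\vtr{x}_i}}$. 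Taking square roots and combining with the display above yields the claimed inequality.

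As an alternative check, the same result also follows from the generalized H\"older inequality. It gives $\norm{\vtr{x}_1\odot\cdots\odot\vtr{x}_m} \leq \prod_i \norm[2m]{\vtr{x}_i}$, and monotonicity of $\ell_p$ norms gives $\norm[2m]{\vtr{x}_i}\leq\norm[4]{\vtr{x}_i} = \sqrt{\norm{\osqr{\vtr{x}_i}}}$ since $2m \geq 4$. I would present the elementary argument above because it stays within the tools the paper has already introduced.
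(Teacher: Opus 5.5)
Your proof is correct, but it handles the partial product differently from the paper. Write $\sqrt{\norm{\osqr{\vtr{x}}}}=\norm[4]{\vtr{x}}$.

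The paper argues by induction on $m$. After applying the two-vector case to $\vtr{y}=\vtr{x}_1\odot\cdots\odot\vtr{x}_k$ and $\vtr{x}_{k+1}$, it uses Proposition~\ref{prop:two-plus-two-equals-four} to pass from $\norm[4]{\vtr{y}}$ to $\norm{\vtr{y}}$. The induction hypothesis then applies verbatim, because the statement being proved already bounds the $\ell_2$ norm of a product. This is how the paper sidesteps the obstacle you flagged: it never has to bound $\norm{\osqr{\vtr{y}}}$ in terms of the squared factors.

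You instead stay at the $\ell_4$ level. You prove $\norm[4]{\vtr{x}_1\odot\cdots\odot\vtr{x}_{m-1}}\le\prod_{i<m}\norm[4]{\vtr{x}_i}$ directly, by applying $\norm{\eprod{\vtr{u}}{\vtr{v}}}\le\norm{\vtr{u}}\norm{\vtr{v}}$ to the squared vectors. Proposition~\ref{prop:two-plus-two-equals-four} is exactly the special case $\vtr{u}=\vtr{v}$ of your lemma.

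The paper's route is more economical: it needs no new lemma, and its intermediate bound on $\norm{\vtr{y}}$ is the stronger one, since $\norm[4]{\vtr{y}}\le\norm{\vtr{y}}$. Yours avoids recursing on the main statement and isolates a clean standalone fact, namely that $\norm[4]{\cdot}$ is submultiplicative under the entry-wise product. Your H\"older cross-check is also valid and gives the sharper bound $\prod_i\norm[2m]{\vtr{x}_i}$.
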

\begin{proof}
We prove this by induction. Consider the base case with $m = 2$. Then note that 
\[
\norm{\vtr{x}_1 \odot \vtr{x}_2}^2 = \sum_{j = 1}^d x_{1j}^2 x_{2j}^2 =
\dprod{\osqr{\vtr{x}_1}}{\osqr{\vtr{x}_2}} 
\]
Furthermore, $\dprod{\osqr{\vtr{x}_1}}{\osqr{\vtr{x}_2}}  \geq 0$, as it is the dot product of two vectors all elements of which are non-negative. Thus from the Cauchy-Schwarz inequality we get 
\[
\dprod{\osqr{\vtr{x}_1}}{\osqr{\vtr{x}_2}}  \leq \norm{\osqr{\vtr{x}_1}} \cdot \norm{\osqr{\vtr{x}_2}}
\]
This immediately gives us
\[
\norm{\vtr{x}_1 \odot \vtr{x}_2} \leq \sqrt{\norm{\osqr{\vtr{x}_1}}} \sqrt{\norm{\osqr{\vtr{x}_2}}}
\]
which proves the base case. Next assume that the statement is true for $m = k$, and consider 
\[
\norm{\vtr{x}_1 \odot \vtr{x}_2 \odot \cdots \odot \vtr{x}_{k+1}} 
\]
Now $\vtr{x}_1 \odot \vtr{x}_2 \odot \cdots \odot \vtr{x}_{k}$ is a vector in $\mathbb{R}^d$. Denote this by $\vtr{y}$. Then through the base case
\[
\norm{\vtr{x}_1 \odot \vtr{x}_2 \odot \cdots \odot \vtr{x}_{k+1}} = \norm{\vtr{y} \odot \vtr{x}_{k+1}} \leq \sqrt{\norm{\osqr{\vtr{y}}}} \cdot \sqrt{\norm{\osqr{\vtr{x}_{k+1}}}}
\]
From Proposition~\ref{prop:two-plus-two-equals-four}, we have 
\[
\sqrt{\norm{\osqr{\vtr{y}}}} \leq \sqrt{\norm{\vtr{y}}^2} = \norm{\vtr{y}}
\]
Thus,
\begin{align*}
    \norm{\vtr{x}_1 \odot \vtr{x}_2 \odot \cdots \odot \vtr{x}_{k+1}} &\leq  \norm{\vtr{y}} \cdot \sqrt{\norm{\osqr{\vtr{x}_{k+1}}}} \\
    &= \norm{\vtr{x}_1 \odot \vtr{x}_2 \odot \cdots \odot \vtr{x}_{k}} \cdot \sqrt{\norm{\osqr{\vtr{x}_{k+1}}}}\\
    &\leq \sqrt{\norm{\osqr{\vtr{x}_1}}} \sqrt{\norm{\osqr{\vtr{x}_2}}}\cdots \sqrt{\norm{\osqr{\vtr{x}_k}}} \sqrt{\norm{\osqr{\vtr{x}_{k+1}}}}
\end{align*}
which follows from the induction hypothesis. This proves the result. 
\end{proof}

\begin{proposition}
\label{prop:entrywise-prod-inequality}
Let $\vtr{x}, \vtr{y} \in \mathbb{R}^d$ and let $\widehat{\vtr{y}}$ be the normalized version of $\vtr{y}$. Then
\[
\norm{\eprod{\vtr{x}}{\widehat{\vtr{y}}}} \leq \sqrt{d} \cdot \norm[4]{\vtr{x}}
\]
\end{proposition}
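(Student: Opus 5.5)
The plan is to apply Proposition~\ref{prop:entrywise-prod-inequality-n} with $m = 2$ and then control each factor separately, using Proposition~\ref{prop:two-plus-two-equals-four} for the normalized vector and Proposition~\ref{prop:standard-norm-bound} for its norm.

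Taking $\vtr{x}_1 = \vtr{x}$ and $\vtr{x}_2 = \widehat{\vtr{y}}$ in Proposition~\ref{prop:entrywise-prod-inequality-n} gives
\[
\norm{\eprod{\vtr{x}}{\widehat{\vtr{y}}}} \leq \sqrt{\norm{\osqr{\vtr{x}}}}\,\sqrt{\norm{\osqr{\widehat{\vtr{y}}}}}.
\]

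The first factor is simply a rewriting. We have $\norm{\osqr{\vtr{x}}} = \left(\sum_{i=1}^d x_i^4\right)^{1/2}$, so $\sqrt{\norm{\osqr{\vtr{x}}}} = \left(\sum_{i=1}^d x_i^4\right)^{1/4} = \norm[4]{\vtr{x}}$, which is exactly the $\ell_4$ norm appearing in the statement.

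For the second factor, Proposition~\ref{prop:two-plus-two-equals-four} gives $\norm{\osqr{\widehat{\vtr{y}}}} \leq \norm{\widehat{\vtr{y}}}^2$. Hence $\sqrt{\norm{\osqr{\widehat{\vtr{y}}}}} \leq \norm{\widehat{\vtr{y}}}$. Proposition~\ref{prop:standard-norm-bound} then bounds this by $\sqrt{d}$. Multiplying the two bounds yields $\sqrt{d}\cdot\norm[4]{\vtr{x}}$, as claimed.

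There is no real obstacle here. The only point needing care is the identification $\sqrt{\norm{\osqr{\vtr{x}}}} = \norm[4]{\vtr{x}}$, and checking that Proposition~\ref{prop:standard-norm-bound} applies to both z-score and RMS normalization, which it does by its statement. One could get a slightly sharper bound via $\norm[\infty]{\widehat{\vtr{y}}} \leq \sqrt{d}$ and $\norm{\vtr{x}}$, but the chain above produces exactly the stated form.
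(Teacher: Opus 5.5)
Your proof is correct and follows the paper's argument step for step: Proposition~\ref{prop:entrywise-prod-inequality-n} with $m=2$, then Proposition~\ref{prop:two-plus-two-equals-four} and Proposition~\ref{prop:standard-norm-bound} to bound the normalized factor by $\sqrt{d}$, and the identification $\sqrt{\norm{\osqr{\vtr{x}}}}=\norm[4]{\vtr{x}}$. One correction to your closing aside, which is not needed for the proof: pairing $\norm[\infty]{\widehat{\vtr{y}}}\le\sqrt{d}$ with $\norm{\vtr{x}}$ gives $\sqrt{d}\,\norm{\vtr{x}}$, which is weaker rather than sharper since $\norm[4]{\vtr{x}}\le\norm{\vtr{x}}$; the pairing that actually improves the bound is $\norm[\infty]{\vtr{x}}\,\norm{\widehat{\vtr{y}}}\le\sqrt{d}\,\norm[\infty]{\vtr{x}}$.
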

\begin{proof}
From Proposition~\ref{prop:entrywise-prod-inequality-n}, we have
\[
\norm{\eprod{\vtr{x}}{\widehat{\vtr{y}}}} \leq  
\sqrt{\norm{\osqr{\vtr{x}}}} \cdot \sqrt{\left\lVert{\osqr{\widehat{\vtr{y}}}}\right\rVert_2} 
\]
where $\osqr{\vtr{x}}$ and $\osqr{\widehat{\vtr{y}}}$ are the entry-wise squares of $\vtr{x}$ and $\widehat{y}$, respectively. From Proposition~\ref{prop:two-plus-two-equals-four} we have 
\[
\left\lVert{\osqr{\widehat{\vtr{y}}}}\right\rVert_2 \leq \left\lVert{\widehat{{\vtr{y}}}}\right\rVert_2^2
\]
And from Proposition~\ref{prop:standard-norm-bound} we have 
\[
\left\lVert{\widehat{{\vtr{y}}}}\right\rVert_2^2 \leq d
\]
Thus,
\[
\sqrt{\left\lVert{\osqr{\widehat{\vtr{y}}}}\right\rVert_2}  \leq \sqrt{d}
\]
Finally, note that 
\begin{align*}
  ({\norm{\osqr{\vtr{x}}}})^{1/2} &= \left(\left( (x_1^2)^2 + (x_2^2)^2 + \cdots + (x_d^2)^2 \right)^{1/2} \right)^{1/2}  \\
  &= \left( x_1^4 + x_2^4 + \cdots + x_d^4 \right)^{1/4} \\
  &= \norm[4]{\vtr{x}}
\end{align*}
Combining these we get
\begin{equation}
\label{eq:alpha-x-entrywise-prod}
\norm{\eprod{\vtr{x}}{\widehat{\vtr{y}}}}  \leq \sqrt{d} \cdot \norm[4]{\vtr{x}}
\end{equation}    
\end{proof}

\subsection{Global Sensitivity Analysis}
\label{subsec:gs-analysis}

We now analyze global sensitivity (Definition~\ref{def:global-sensitivity}) of the functions preceding the offloaded components. In our analysis, we use the same dimension $d$ for all operations. As discussed in Section~\ref{sub:llm-components}, some components use different dimensions, e.g., $d_\text{key}$. However, this is merely an implementation issue, and our sensitivity analysis can be used by replacing $d$ with the exact dimension applicable to the operation.  

\subsubsection{Initial Tokens}
As a warm-up we look at the initial layer, even though it is not offloaded. 
At initialization, the input prompt is converted into tokens, and then each token is represented by a token embedding $\vtr{x} \in \mathbb{R}^d$. The token embeddings are from the vocabulary $\mathsf{vocab} \subseteq \mathbb{R}^d$ of the LLM. Global sensitivity, in this case for an identity function, can be found experimentally as 
\[
\max_{\vtr{x}, \vtr{y} \in \mathsf{vocab}} \norm{\vtr{x} - \vtr{y}}  
\]
Note that we do not need to span the entire space $\mathbb{R}^d$, since the token embeddings are only from $\mathsf{vocab}$. 
\subsubsection{After Layer Normalization}
The layer normalization function $\mathsf{LN} : \mathbb{R}^d \rightarrow \mathbb{R}^d$ as defined in~\cite{ba2016layer} is
\begin{equation}
\label{eq:ln}
\mathsf{LN}(\vtr{x}; \bm{\alpha}, \bm{\beta}, \gamma) = \eprod{\bm\alpha}{\widehat{\vtr{x}}} + \bm{\beta}
\end{equation}
where
$\bm{\alpha}, \bm{\beta} \in \mathbb{R}^d$ are learned parameters, and $\widehat{\vtr{x}}$ is the z-score normalization of $\vtr{x} \in \mathbb{R}^d$.\footnote{The parameter $\gamma$ in the z-score normalization as defined in Definition~\ref{def:normalization} does not appear in~\cite{ba2016layer} but it is used in actual implementations for numerical stability to ensure the denominator never becomes 0.} In some models such as the new Llama models, $\widehat{\vtr{x}}$ is RMS normalized, known as the RMSNorm~\cite{zhang2019Root}. The benefit and use of the RMSNorm in Llama models is discussed in~\cite{gupta2026Geometric}. The following result holds for both standardization methods. We will mostly skip any reference to the parameters $\bm{\alpha}, \bm{\beta}$ and $\gamma$, and simply write $\mathsf{LN}(\vtr{x})$ instead of $\mathsf{LN}(\vtr{x}; \bm{\alpha}, \bm{\beta}, \gamma)$ to avoid notational clutter.  
\begin{theorem}
\label{theorem:gs-layer-norm}
Let $\mathsf{LN}$ be as defined in Eq.~\eqref{eq:ln}. Then its global sensitivity $\Delta \mathsf{LN}$ satisfies
\[
\Delta \mathsf{LN} \leq 2 \sqrt{d} \cdot \norm[4]{\bm{\alpha}}
\]
\end{theorem}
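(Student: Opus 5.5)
The plan is to reduce the sensitivity of $\mathsf{LN}$ to a bound on the norm of a single output, via the triangle inequality. For arbitrary $\vtr{x}, \vtr{y} \in \mathbb{R}^d$, the bias $\bm{\beta}$ is the same in both evaluations, so it cancels:
\[
\norm{\mathsf{LN}(\vtr{x}) - \mathsf{LN}(\vtr{y})} = \norm{\eprod{\bm{\alpha}}{\widehat{\vtr{x}}} - \eprod{\bm{\alpha}}{\widehat{\vtr{y}}}} \leq \norm{\eprod{\bm{\alpha}}{\widehat{\vtr{x}}}} + \norm{\eprod{\bm{\alpha}}{\widehat{\vtr{y}}}}.
\]
This is why the statement carries the factor $2$: each of the two terms will contribute one copy of the same bound.

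Each term is handled by Proposition~\ref{prop:entrywise-prod-inequality}, applied with the learned vector $\bm{\alpha}$ in the role of the first vector and the normalized input in the role of $\widehat{\vtr{y}}$. This gives
\[
\norm{\eprod{\bm{\alpha}}{\widehat{\vtr{x}}}} \leq \sqrt{d}\cdot\norm[4]{\bm{\alpha}},
\]
and the same bound for $\vtr{y}$. That proposition covers both the z-score and the RMS variants of Definition~\ref{def:normalization}, because it rests on Proposition~\ref{prop:standard-norm-bound}. So the bound holds for whichever normalization the model uses.

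Summing the two terms gives $2\sqrt{d}\,\norm[4]{\bm{\alpha}}$ for every pair $\vtr{x}, \vtr{y}$. Since the bound is uniform, it also bounds the supremum in Definition~\ref{def:global-sensitivity}, which proves the theorem.

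The only delicate point is degenerate inputs. If $\sigma_{\vtr{x}}=0$, for instance $\vtr{x}=\bm{0}$ in the RMS case, then the intermediate step in Proposition~\ref{prop:standard-norm-bound} that divides by $\sigma_{\vtr{x}}$ is formally undefined. In that case, however, $\widehat{\vtr{x}}=\bm{0}$ because $\gamma>0$, so the bound $\norm{\widehat{\vtr{x}}}\le\sqrt d$ holds trivially. I would add a one-line remark to this effect; beyond that, I do not expect any obstacle, since the argument is a direct assembly of the earlier propositions.
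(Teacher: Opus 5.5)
Your proposal is correct and matches the paper's proof step for step: cancel $\bm{\beta}$, apply the triangle inequality, then bound each term $\norm{\eprod{\bm{\alpha}}{\widehat{\vtr{x}}}}$ by $\sqrt{d}\cdot\norm[4]{\bm{\alpha}}$ using Proposition~\ref{prop:entrywise-prod-inequality}. Your remark on degenerate inputs is a valid small addition. When $\sigma_{\vtr{x}}=0$ you have $\widehat{\vtr{x}}=\bm{0}$ because $\gamma>0$, which covers the division by $\sigma_{\vtr{x}}$ that the paper's Proposition~\ref{prop:standard-norm-bound} leaves unaddressed.
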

\begin{proof}
For any $\vtr{x}, \vtr{y} \in \mathbb{R}^d$, 
\begin{align}
    \norm{\mathsf{LN}(\vtr{x}) - \mathsf{LN}(\vtr{y})} &= \norm{\eprod{\bm\alpha}{\widehat{\vtr{x}}} + \bm{\beta} - \eprod{\bm\alpha}{\widehat{\vtr{y}}} - \bm{\beta} } \nonumber\\
    &= \norm{\eprod{\bm{\alpha}}{(\widehat{\vtr{x}} - \widehat{\vtr{y}})}} \nonumber\\
    &= \norm{\eprod{\bm{\alpha}}{\widehat{\vtr{x}}} - \eprod{\bm{\alpha}}{\widehat{\vtr{y}})}} \nonumber\\
    &\leq \norm{\eprod{\bm{\alpha}}{\widehat{\vtr{x}}}} + \norm{\eprod{\bm{\alpha}}{\widehat{\vtr{y}}} }\nonumber
\end{align}
where in the last step we have used the triangle inequality for the Euclidean norm. From Proposition~\ref{prop:entrywise-prod-inequality} we get 
\[
\norm{\eprod{\bm{\alpha}}{\widehat{\vtr{x}}}} \leq \sqrt{d} \cdot \norm[4]{\bm{\alpha}} \quad \text{ and } \quad \norm{\eprod{\bm{\alpha}}{\widehat{\vtr{y}}}} \leq \sqrt{d} \cdot \norm[4]{\bm{\alpha}}
\]
Putting these inequalities in the above inequality, we get
\[
\norm{\mathsf{LN}(\vtr{x}) - \mathsf{LN}(\vtr{y})} \leq \norm{\eprod{\bm{\alpha}}{\widehat{\vtr{x}}}} + \norm{\eprod{\bm{\alpha}}{\widehat{\vtr{y}}} } \leq 2\sqrt{d} \cdot \norm[4]{\bm{\alpha}}
\]
Since $\vtr{x}, \vtr{y} \in \mathbb{R}^d$ were arbitrary, we conclude
\[
\Delta\mathsf{LN} \leq 2\sqrt{d} \cdot \norm[4]{\bm{\alpha}}
\]
\end{proof} 

\subsubsection{After the Attention Mechanism}
The attention mechanism can be defined as follows. Let $W_\text{qry}$, $W_{\text{key}}$ and $W_\text{val}$ be $d \times d$ matrices. Suppose we have $n$ vectors $\vtr{x}_1, \vtr{x}_2, \ldots, \vtr{x}_n \in \mathbb{R}^d$. For $1 \leq i \leq n$, define
\[
\vtr{q}_i = W_\text{qry} \cdot \vtr{x}_i, \quad \vtr{k}_i = W_\text{key} \cdot \vtr{x}_i, \quad \vtr{v}_i = W_\text{val} \cdot \vtr{x}_i
\]
which are again vectors from $\mathbb{R}^d$. Let $\vtr{q} = W_\text{qry} \vtr{x}$ for some $\vtr{x} \in \mathbb{R}^d$. For $1 \leq i \leq n$ define
\[
p_{i}(\vtr{x}) = \frac{\exp(\dprod{\vtr{q}}{\vtr{k}_i}/\sqrt{d})}{\sum_{j = 1}^n \exp(\dprod{\vtr{q}}{\vtr{k}_j}/\sqrt{d})} = \text{softmax}\left( \frac{\dprod{\vtr{q}}{\vtr{k}_i}}{\sqrt{d}}\right)
\]
Note that $\sum_{i = 1}^n p_i(\vtr{x}) = 1$. With this, the attention mechanism for any $\vtr{x} \in \mathbb{R}^d$ is defined as
\begin{equation}
\label{eq:att-mech}
\mathsf{att}(\vtr{x}) =  \sum_{i = 1}^n p_i(\vtr{x}) \vtr{v}_i
\end{equation}
The input to the attention function is in fact layer normalized. Thus, we can write it as 
\begin{align}
\mathsf{att}(\vtr{x}) &=  \sum_{i = 1}^n p_i(\widehat{\vtr{x}}) W_{\text{val}} \cdot \mathsf{LN}(\vtr{x}_i) \nonumber\\
&= \sum_{i = 1}^n p_i(\widehat{\vtr{x}}) W_{\text{val}} \cdot(\eprod{\bm{\alpha}}{\widehat{\vtr{x}}_i + \bm{\beta}}) \nonumber\\
&= \sum_{i = 1}^n p_i(\widehat{\vtr{x}}) W_{\text{val}} \cdot(\eprod{\bm{\alpha}}{\widehat{\vtr{x}}_i)} + \sum_{i = 1}^n p_i(\widehat{\vtr{x}}) W_{\text{val}} \cdot \bm{\beta} \nonumber\\
&= \sum_{i = 1}^n p_i(\widehat{\vtr{x}}) W_{\text{val}} \cdot(\eprod{\bm{\alpha}}{\widehat{\vtr{x}}_i)} + \left(\sum_{i = 1}^n p_i(\widehat{\vtr{x}})\right)  W_{\text{val}} \cdot \bm{\beta} \nonumber \\
&= W_{\text{val}} \cdot \bm{\beta} + \sum_{i = 1}^n p_i(\widehat{\vtr{x}}) W_{\text{val}} \cdot(\eprod{\bm{\alpha}}{\widehat{\vtr{x}}_i)} \label{eq:att-mech-normalized}
\end{align}
where $\widehat{\vtr{x}}$ and $\widehat{\vtr{x}}_i$ are normalized versions of $\vtr{x}$ and $\vtr{x}_i$ respectively.
We are interested in finding $\Delta \mathsf{att}$. 

\begin{theorem}
\label{theorem:gs-att-mech}
Let $\mathsf{att}$ be as defined in Eq.~\eqref{eq:att-mech-normalized}. Then its global sensitivity $\Delta \mathsf{att}$ satisfies
\[
\Delta \mathsf{att} \leq 2\sqrt{d} \cdot \norm{W_\text{val}}\cdot \norm[4]{\bm{\alpha}}
\]
\end{theorem}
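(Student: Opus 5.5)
The plan mirrors the proof of Theorem~\ref{theorem:gs-layer-norm}: bound $\norm{\mathsf{att}(\vtr{x}) - \mathsf{att}(\vtr{y})}$ by the triangle inequality, then bound each piece uniformly. Starting from Eq.~\eqref{eq:att-mech-normalized}, the constant term $W_{\text{val}}\cdot\bm{\beta}$ cancels in the difference, since it appears in $\mathsf{att}(\vtr{x})$ and $\mathsf{att}(\vtr{y})$ with coefficient $\sum_i p_i = 1$. For arbitrary $\vtr{x},\vtr{y}$ this leaves
\[
\norm{\mathsf{att}(\vtr{x}) - \mathsf{att}(\vtr{y})} = \Big\lVert \sum_{i=1}^n p_i(\widehat{\vtr{x}})\, W_{\text{val}}(\eprod{\bm\alpha}{\widehat{\vtr{x}}_i}) - \sum_{i=1}^n p_i(\widehat{\vtr{y}})\, W_{\text{val}}(\eprod{\bm\alpha}{\widehat{\vtr{y}}_i}) \Big\rVert_2 .
\]
Here the context tokens $\vtr{x}_i$ may also differ, because a neighbouring prompt replaces one token. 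The argument should not depend on this, so I will not exploit any structure shared between the two sums.

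Next, apply the triangle inequality to split the difference into the norms of the two sums. Each sum is bounded separately. Factor out $W_{\text{val}}$ using linearity, then use Eq.~\eqref{eq:matrix-norm}:
\[
\Big\lVert W_{\text{val}} \sum_i p_i(\widehat{\vtr{x}}) (\eprod{\bm\alpha}{\widehat{\vtr{x}}_i}) \Big\rVert_2 \le \norm{W_{\text{val}}} \Big\lVert \sum_i p_i(\widehat{\vtr{x}})\, \eprod{\bm\alpha}{\widehat{\vtr{x}}_i}\Big\rVert_2 .
\]
The inner vector is a convex combination, since $p_i \ge 0$ and $\sum_i p_i = 1$. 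By the triangle inequality its norm is at most $\sum_i p_i \norm{\eprod{\bm\alpha}{\widehat{\vtr{x}}_i}} \le \max_i \norm{\eprod{\bm\alpha}{\widehat{\vtr{x}}_i}}$. Proposition~\ref{prop:entrywise-prod-inequality} then bounds each term by $\sqrt{d}\,\norm[4]{\bm\alpha}$, whatever the vectors $\vtr{x}_i$ are. The same argument applies to the $\vtr{y}$ sum. Adding the two bounds gives $2\sqrt{d}\,\norm{W_{\text{val}}}\,\norm[4]{\bm\alpha}$, and since $\vtr{x},\vtr{y}$ were arbitrary this bounds $\Delta\mathsf{att}$.

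There is no real analytic obstacle, because the softmax weights are never differentiated. They are only used through $p_i\ge 0$ and $\sum_i p_i = 1$. The step needing care is the convex-combination bound: it requires pulling $W_{\text{val}}$ outside the sum before applying the spectral-norm inequality, or equivalently applying it termwise. It also requires checking that the bound holds uniformly over all possible context tokens $\vtr{x}_i$, which Proposition~\ref{prop:entrywise-prod-inequality} guarantees because its bound does not depend on the vector being normalized. One should also note explicitly that the cancellation of the $\bm\beta$ term uses $\sum_i p_i=1$ on both sides.
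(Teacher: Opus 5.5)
Your proposal is correct and is essentially the paper's proof: cancel the $W_{\text{val}}\cdot\bm\beta$ term, separate the two sums by the triangle inequality, bound each $\norm{W_{\text{val}}(\eprod{\bm\alpha}{\widehat{\vtr{x}}_i})}$ by $\sqrt{d}\,\norm{W_{\text{val}}}\,\norm[4]{\bm\alpha}$ via Eq.~\eqref{eq:matrix-norm} and Proposition~\ref{prop:entrywise-prod-inequality}, and use $\sum_i p_i = 1$ on each side. The only difference is cosmetic: you pull $W_{\text{val}}$ outside the sum before applying the spectral-norm inequality, whereas the paper first moves the norm inside the sum and then applies it term by term, which gives the same bound.
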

\begin{proof}
By definition, for any $\vtr{x}, \vtr{y} \in \mathbb{R}^d$  
\[
\mathsf{att}(\vtr{x}) = W_{\text{val}} \cdot \bm{\beta} + \sum_{i = 1}^n p_i(\widehat{\vtr{x}}) W_{\text{val}} \cdot(\eprod{\bm{\alpha}}{\widehat{\vtr{x}}_i }) 
\]
and 
\[
\mathsf{att}(\vtr{y}) = W_{\text{val}} \cdot \bm{\beta} + \sum_{i = 1}^n p'_i(\widehat{\vtr{y}}) W_{\text{val}} \cdot(\eprod{\bm{\alpha}}{\widehat{\vtr{x}}'_i })
\]
Note that we use different variables $p'_i$ and $\widehat{\vtr{x}}'_i$ for $\mathsf{att}(\vtr{y})$ since replacing $\vtr{x}$ with $\vtr{y}$ changes the softmax \emph{probabilities} $p_i(\widehat{\vtr{x}})$ and the sequence of vectors $\widehat{\vtr{x}}_i$ due to one vector being different. Now 
\begin{align}
    \norm{\mathsf{att}(\vtr{x}) - \mathsf{att}(\vtr{y})} &= \left\lVert \sum_{i = 1}^n p_i(\widehat{\vtr{x}}) W_{\text{val}} \cdot(\eprod{\bm{\alpha}}{\widehat{\vtr{x}}_i }) - \sum_{i = 1}^n p'_i(\widehat{\vtr{y}}) W_{\text{val}} \cdot(\eprod{\bm{\alpha}}{\widehat{\vtr{x}}'_i }) \right\rVert_2 \nonumber\\
    &\leq \left\lVert \sum_{i = 1}^n p_i(\widehat{\vtr{x}}) W_{\text{val}} \cdot(\eprod{\bm{\alpha}}{\widehat{\vtr{x}}_i }) \right\rVert_2 + \left\lVert  \sum_{i = 1}^n p'_i(\widehat{\vtr{y}}) W_{\text{val}} \cdot(\eprod{\bm{\alpha}}{\widehat{\vtr{x}}'_i }) \right\rVert_2 \nonumber \\
    &\leq \sum_{i = 1}^n \left\lVert  p_i(\widehat{\vtr{x}}) W_{\text{val}} \cdot(\eprod{\bm{\alpha}}{\widehat{\vtr{x}}_i }) \right\rVert_2 + \sum_{i = 1}^n\left\lVert   p'_i(\widehat{\vtr{y}}) W_{\text{val}} \cdot(\eprod{\bm{\alpha}}{\widehat{\vtr{x}}'_i }) \right\rVert_2 \nonumber \\
    &= \sum_{i = 1}^n p_i(\widehat{\vtr{x}}) \left\lVert   W_{\text{val}} \cdot(\eprod{\bm{\alpha}}{\widehat{\vtr{x}}_i }) \right\rVert_2 + \sum_{i = 1}^n p'_i(\widehat{\vtr{y}}) \left\lVert    W_{\text{val}} \cdot(\eprod{\bm{\alpha}}{\widehat{\vtr{x}}'_i }) \right\rVert_2 \label{eq:gs-att-mech-int}
\end{align}
where we have used standard properties of the Euclidean norm. 
Now let us examine first of the two norms above in isolation. Let $1 \leq i \leq n$ be fixed. From Eq.~\eqref{eq:matrix-norm}, We have
\begin{align*}
\left\lVert   W_{\text{val}} \cdot(\eprod{\bm{\alpha}}{\widehat{\vtr{x}}_i }) \right\rVert_2 &\leq \norm{W_\text{val}} \cdot \norm{\eprod{\bm{\alpha}}{\widehat{\vtr{x}}_i }} \\
&\leq \norm{W_\text{val}} \cdot \sqrt{d} \cdot \norm[4]{\bm{\alpha}}
\end{align*}
where the second inequality follows from Proposition~\ref{prop:standard-norm-bound}. Likewise, for any $i \in \{1, \ldots, n\}$, we have 
\[
\left\lVert   W_{\text{val}} \cdot(\eprod{\bm{\alpha}}{\widehat{\vtr{x}}'_i }) \right\rVert_2 \leq \norm{W_\text{val}} \cdot \sqrt{d} \cdot \norm[4]{\bm{\alpha}}
\]
Thus, Eq.~\eqref{eq:gs-att-mech-int} becomes
\begin{align*}
    \norm{\mathsf{att}(\vtr{x}) - \mathsf{att}(\vtr{y})}  &\leq \sum_{i = 1}^n p_i(\widehat{\vtr{x}}) \norm{W_\text{val}} \cdot \sqrt{d} \cdot \norm[4]{\bm{\alpha}} + \sum_{i = 1}^n p'_i(\widehat{\vtr{y}}) \norm{W_\text{val}} \cdot \sqrt{d} \cdot \norm[4]{\bm{\alpha}} \\
    &= \left(\sum_{i = 1}^n p_i(\widehat{\vtr{x}}) \right)\norm{W_\text{val}} \cdot \sqrt{d} \cdot \norm[4]{\bm{\alpha}} + \left(\sum_{i = 1}^n p'_i(\widehat{\vtr{y}})\right) \norm{W_\text{val}} \cdot \sqrt{d} \cdot \norm[4]{\bm{\alpha}}\\
    &= 2\sqrt{d} \cdot \norm{W_\text{val}}\cdot \norm[4]{\bm{\alpha}}
\end{align*}
Since $\vtr{x}, \vtr{y} \in \mathbb{R}^d$ were arbitrary, we conclude
\[
\Delta \mathsf{att} \leq 2\sqrt{d} \cdot \norm{W_\text{val}}\cdot \norm[4]{\bm{\alpha}}
\]
\end{proof}

\subsubsection{After the Activation Function}
The activation function in the feed-forward neural network can be defined as follows. Let $W_\text{gate}$ and $W_\text{up}$ be two $d \times d$ matrices. Let $g(\vtr{x}) = W_\text{gate} \cdot \vtr{x}$ for $\vtr{x} \in \mathbb{R}^d$. For any $x \in \mathbb{R}$ the \emph{sigmoid} function is defined as 
\begin{equation}
\label{eq:def:sigmoid}
\mathsf{sig}(x) = \frac{1}{1 + e^{-x}}    
\end{equation}
Note that for any $x \in \mathbb{R}$, $\mathsf{sig}(x) \in (0, 1)$. 
We extend this to a vector $\vtr{x} \in \mathbb{R}^d$ by defining $\mathsf{sig}(\vtr{x})$ to be the $d$-dimensional vector whose $i$th element is $\mathsf{sig}(x_i)$. With this the activation function on $\vtr{x}$ is defined as
\begin{equation*}
\mathsf{act}(\vtr{x}) = W_\text{up} \cdot \vtr{x} \odot W_\text{gate} \cdot \vtr{x} \odot  \mathsf{sig}(g(\vtr{x}))
\end{equation*}
The term $W_\text{gate} \cdot \vtr{x} \odot  \mathsf{sig}(g(\vtr{x}))$ is known as the sigmoid linear unit or SiLU. Just like the attention mechanism, this function is applied on the normalized version of $\vtr{x}$. Therefore, we can write the activation function as
\begin{equation}
\label{eq:activation-function}
\mathsf{act}(\vtr{x}) = W_\text{up} \cdot \mathsf{LN}(\vtr{x}) \odot W_\text{gate} \cdot \mathsf{LN}(\vtr{x}) \odot  \mathsf{sig}(g(\vtr{x}))
\end{equation}
where we have re-defined $g(\vtr{x)}$ as $W_\text{gate} \cdot \mathsf{LN}(\vtr{x})$.
With this, we have the following result for $\Delta \mathsf{act}$. 
\begin{theorem}
\label{theorem:gs-act-function}
Let $\mathsf{act}$ be as defined in Eq.~\eqref{eq:activation-function}. Then its global sensitivity $\Delta \mathsf{act} $ satisfies
\[
\Delta \mathsf{act} \leq 2  \sqrt{d}\cdot \left(\sqrt{d} \cdot \norm[4]{\bm{\alpha}} + \norm{\bm{\beta}}\right)^2
 \cdot \norm{W_\text{up}} \cdot \norm{W_\text{gate}}
\]
\end{theorem}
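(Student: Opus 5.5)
We follow the template of Theorems~\ref{theorem:gs-layer-norm} and~\ref{theorem:gs-att-mech}: apply the triangle inequality once to reduce the global sensitivity to a uniform bound on $\norm{\mathsf{act}(\vtr{x})}$ over all $\vtr{x} \in \mathbb{R}^d$. For any $\vtr{x}, \vtr{y}$,
\[
\norm{\mathsf{act}(\vtr{x}) - \mathsf{act}(\vtr{y})} \leq \norm{\mathsf{act}(\vtr{x})} + \norm{\mathsf{act}(\vtr{y})}.
\]
It therefore suffices to show $\norm{\mathsf{act}(\vtr{x})} \leq \sqrt{d}\,(\sqrt{d}\norm[4]{\bm{\alpha}} + \norm{\bm{\beta}})^2 \norm{W_\text{up}}\norm{W_\text{gate}}$ for every $\vtr{x}$. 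The factor $2$ in the statement comes exactly from this splitting.

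Write $\vtr{u} = W_\text{up}\cdot\mathsf{LN}(\vtr{x})$, $\vtr{g} = W_\text{gate}\cdot\mathsf{LN}(\vtr{x})$ and $\vtr{s} = \mathsf{sig}(\vtr{g})$, so that $\mathsf{act}(\vtr{x}) = \vtr{u}\odot\vtr{g}\odot\vtr{s}$.

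\textbf{Step 1: remove the sigmoid factor.} Every entry of $\vtr{s}$ lies in $(0,1)$, so $|u_i g_i s_i| \leq |u_i g_i|$ entrywise. Hence $\norm{\vtr{u}\odot\vtr{g}\odot\vtr{s}} \leq \norm{\vtr{u}\odot\vtr{g}}$. This is safer than invoking Proposition~\ref{prop:entrywise-prod-inequality-n} with $m=3$, which would introduce a factor $\norm[4]{\vtr{s}} \leq d^{1/4}$.

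\textbf{Step 2: split the remaining product.} Apply the base case of Proposition~\ref{prop:entrywise-prod-inequality-n} together with Proposition~\ref{prop:two-plus-two-equals-four}:
\[
\norm{\vtr{u}\odot\vtr{g}} \leq \sqrt{\norm{\osqr{\vtr{u}}}}\sqrt{\norm{\osqr{\vtr{g}}}} \leq \norm{\vtr{u}}\,\norm{\vtr{g}}.
\]
Then by Eq.~\eqref{eq:matrix-norm},
\[
\norm{\vtr{u}}\,\norm{\vtr{g}} \leq \norm{W_\text{up}}\,\norm{W_\text{gate}}\,\norm{\mathsf{LN}(\vtr{x})}^2.
\]

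\textbf{Step 3: bound the layer-normalized input.} By the triangle inequality and Proposition~\ref{prop:entrywise-prod-inequality},
\[
\norm{\mathsf{LN}(\vtr{x})} \leq \norm{\eprod{\bm{\alpha}}{\widehat{\vtr{x}}}} + \norm{\bm{\beta}} \leq \sqrt{d}\,\norm[4]{\bm{\alpha}} + \norm{\bm{\beta}}.
\]
Squaring gives the bracketed term in the statement.

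\textbf{Main obstacle: matching the leading $\sqrt{d}$.} Combining Steps 1–3 as written yields $\norm{\mathsf{act}(\vtr{x})} \leq (\sqrt{d}\norm[4]{\bm{\alpha}} + \norm{\bm{\beta}})^2\norm{W_\text{up}}\norm{W_\text{gate}}$, with no $\sqrt{d}$ in front. This is already at least as strong as the claimed bound, since $\sqrt{d} \geq 1$.

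The stated extra factor of $\sqrt{d}$ most likely comes from a more wasteful route. One candidate is applying Proposition~\ref{prop:entrywise-prod-inequality-n} with $m=3$ and bounding $\sqrt{\norm{\osqr{\vtr{s}}}} \leq \norm{\vtr{s}} \leq \sqrt{d}$, using that the sigmoid entries are below $1$.

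I would present the sharper chain and then note that it implies the theorem, because $\sqrt{d} \geq 1$. The one point to check carefully is the bookkeeping in Step 2: the conversion from $\sqrt{\norm{\osqr{\cdot}}}$ to $\norm{\cdot}$ must go in the right direction.
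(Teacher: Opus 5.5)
Your proof is correct. It differs from the paper's only in how the sigmoid factor is handled, and that difference is exactly the source of the extra $\sqrt{d}$, as you guessed.

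The paper applies Proposition~\ref{prop:entrywise-prod-inequality-n} with $m=3$ to $\vtr{u}\odot\vtr{g}\odot\vtr{s}$. It then uses Proposition~\ref{prop:two-plus-two-equals-four} on each factor to reach $\norm{\vtr{u}}\,\norm{\vtr{g}}\,\norm{\vtr{s}}$, and bounds $\norm{\vtr{s}}\le\sqrt{d}$ because every entry lies in $(0,1)$. The triangle-inequality split, the spectral-norm step and the bound $\norm{\mathsf{LN}(\vtr{x})}\le\sqrt{d}\,\norm[4]{\bm{\alpha}}+\norm{\bm{\beta}}$ are the same as in your argument.

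You instead use the entrywise bound $\norm[\infty]{\vtr{s}}\le 1$ to discard $\vtr{s}$ before splitting the product. This yields $\Delta\mathsf{act}\le 2(\sqrt{d}\,\norm[4]{\bm{\alpha}}+\norm{\bm{\beta}})^2\norm{W_\text{up}}\,\norm{W_\text{gate}}$, which is smaller by a factor $\sqrt{d}$ and implies the theorem since $d\ge 1$.

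The paper's route buys only uniformity, since every factor goes through the same proposition. It is lossy even on its own terms: it could have kept $\norm[4]{\vtr{s}}\le d^{1/4}$ instead of loosening to $\norm{\vtr{s}}\le\sqrt{d}$.

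Your version also helps when $W_\text{up},W_\text{gate}$ are $d_\text{ff}\times d$, as in the deployed model. There the paper's leading factor is the square root of the dimension of $\vtr{s}$, namely $\sqrt{d_\text{ff}}$, while your bound depends on $d_\text{ff}$ only through the spectral norms. This matters because the paper cites the large intermediate dimension as one reason the Down projection is too noisy to offload.
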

\begin{proof}
For any $\vtr{x}, \vtr{y} \in \mathbb{R}^d$, we have
\begin{align}
    \norm{\mathsf{act}(\vtr{x}) - \mathsf{act}(\vtr{y})} &= \left\lVert W_\text{up} \cdot \mathsf{LN}(\vtr{x}) \odot W_\text{gate} \cdot \mathsf{LN}(\vtr{x}) \odot  \mathsf{sig}(g(\vtr{x})) \right. \nonumber\\
    & \left. - W_\text{up} \cdot \mathsf{LN}(\vtr{y}) \odot W_\text{gate} \cdot \mathsf{LN}(\vtr{y}) \odot  \mathsf{sig}(g(\vtr{y}))\right\rVert_2 \nonumber\\
    &\leq \norm{W_\text{up} \cdot \mathsf{LN}(\vtr{x}) \odot W_\text{gate} \cdot \mathsf{LN}(\vtr{x}) \odot  \mathsf{sig}(g(\vtr{x}))} \nonumber\\
    &+ \norm{W_\text{up} \cdot \mathsf{LN}(\vtr{y}) \odot W_\text{gate} \cdot \mathsf{LN}(\vtr{y}) \odot  \mathsf{sig}(g(\vtr{y}))} \label{eq:gs-act-int}
\end{align}
which follows from the triangle inequality of the Euclidean norm. Consider the first of the two norms above
\begin{align}
    &\quad \norm{W_\text{up} \cdot \mathsf{LN}(\vtr{x}) \odot W_\text{gate} \cdot \mathsf{LN}(\vtr{x}) \odot  \mathsf{sig}(g(\vtr{x}))} \nonumber \\
    &\leq \sqrt{\norm{(W_\text{up} \cdot \mathsf{LN}(\vtr{x}))^{\odot 2}}} \cdot \sqrt{\norm{(W_\text{gate} \cdot \mathsf{LN}(\vtr{x}))^{\odot 2}}} \cdot \sqrt{\norm{(\mathsf{sig}(g(\vtr{x})))^{\odot 2}}} \nonumber\\
    &\leq 
    \norm{W_\text{up} \cdot \mathsf{LN}(\vtr{x})} \cdot \norm{W_\text{gate} \cdot \mathsf{LN}(\vtr{x})} \cdot \norm{\mathsf{sig}(g(\vtr{x}))} \nonumber\\
    &\leq \norm{W_\text{up}} \cdot \norm{\mathsf{LN}(\vtr{x})} \cdot \norm{W_\text{gate}} \cdot \norm{\mathsf{LN}(\vtr{x})} \cdot \norm{\mathsf{sig}(g(\vtr{x}))} \nonumber\\
    &= \norm{W_\text{up}} \cdot \norm{W_\text{gate}} \cdot \norm{\mathsf{LN}(\vtr{x})}^2 \cdot \norm{\mathsf{sig}(g(\vtr{x}))} \label{eq:act-norm-int}
\end{align}
The first inequality follows from Proposition~\ref{prop:entrywise-prod-inequality-n}, the second from Proposition~\ref{prop:two-plus-two-equals-four}, and the third from Eq.~\eqref{eq:matrix-norm}. Now from Eq.~\eqref{eq:ln}, 
\begin{align}
    \norm{\mathsf{LN}(\vtr{x})} &= \norm{\bm{\alpha} \odot \widehat{\vtr{x}} + \bm{\beta}} \nonumber\\
    &\leq \norm{\bm{\alpha} \odot \widehat{\vtr{x}}} + \norm{\bm{\beta}} \nonumber\\
    &\leq \sqrt{d} \cdot \norm[4]{\bm{\alpha}} + \norm{\bm{\beta}} \nonumber
\end{align}
where the last inequality follows from Proposition~\ref{prop:entrywise-prod-inequality}. 
Thus, 
\begin{equation}
\label{eq:ln-norm-bound}
     \norm{\mathsf{LN}(\vtr{x})}^2 \leq \left(\sqrt{d} \cdot \norm[4]{\bm{\alpha}} + \norm{\bm{\beta}}\right)^2
\end{equation}
For the other term, we have
\begin{align}
    \norm{\mathsf{sig}(g(\vtr{x}))}^2 &= \sum_{i = 1}^d (\mathsf{sig}((g(\vtr{x}))_i) )^2 \nonumber\\
    &\leq \sum_{i = 1}^d 1^2 \\
    &= d \nonumber
\end{align}
Therefore $ \norm{\mathsf{sig}(g(\vtr{x}))} \leq \sqrt{d}$. Thus, Eq.~\eqref{eq:act-norm-int} becomes
\begin{align*}
    \norm{W_\text{up} \cdot \mathsf{LN}(\vtr{x}) \odot W_\text{gate} \cdot \mathsf{LN}(\vtr{x}) \odot  \mathsf{sig}(g(\vtr{x}))} &\leq \norm{W_\text{up}} \cdot \norm{W_\text{gate}} \cdot \norm{\mathsf{LN}(\vtr{x})}^2 \cdot \norm{\mathsf{sig}(g(\vtr{x}))}  \\
    &\leq \sqrt{d}\cdot \left(\sqrt{d} \cdot \norm[4]{\bm{\alpha}} + \norm{\bm{\beta}}\right)^2
 \cdot \norm{W_\text{up}} \cdot \norm{W_\text{gate}}
\end{align*}
Exactly the same way, we have
\[
\norm{W_\text{up} \cdot \mathsf{LN}(\vtr{y}) \odot W_\text{gate} \cdot \mathsf{LN}(\vtr{y}) \odot  \mathsf{sig}(g(\vtr{y}))} \leq \sqrt{d}\cdot \left(\sqrt{d} \cdot \norm[4]{\bm{\alpha}} + \norm{\bm{\beta}}\right)^2
 \cdot \norm{W_\text{up}} \cdot \norm{W_\text{gate}}
\]
Therefore, Eq.~\eqref{eq:gs-act-int} becomes
\begin{align*}
\norm{\mathsf{act}(\vtr{x}) - \mathsf{act}(\vtr{y})} \leq 2  \sqrt{d}\cdot \left(\sqrt{d} \cdot \norm[4]{\bm{\alpha}} + \norm{\bm{\beta}}\right)^2
 \cdot \norm{W_\text{up}} \cdot \norm{W_\text{gate}}
\end{align*}
Since $\vtr{x}, \vtr{y} \in \mathbb{R}^d$ were arbitrary, we conclude
\[
\Delta \mathsf{act} \leq  2  \sqrt{d}\cdot \left(\sqrt{d} \cdot \norm[4]{\bm{\alpha}} + \norm{\bm{\beta}}\right)^2
 \cdot \norm{W_\text{up}} \cdot \norm{W_\text{gate}}
\]
\end{proof}
\subsection{Making the Process Differentially Private}
\label{subsec:dp-proof}
Given the $\ell_2$-sensitivity $\Delta f$ of the output $\vtr{x}$ of the component $f$, which serves as the input to the component being offloaded to the GPU, we first sample noise $\bm{\eta} \sim \mathcal{N}(\mathbf{0}, ({\Delta f}/{\epsilon})^2 I_d)$, where $I_d$ is the $d \times d$ identity matrix. We then offload the vector $\vtr{x} + \bm{\eta}$ to the GPU. For a prompt of $n$ tokens, with a \emph{single} offloaded operation, this mechanism satisfies $\sqrt{n} \epsilon$-Gaussian differential privacy according to Propositions~\ref{prop:gdp} and \ref{prop:seq-par-comp}. The steps taken by the GPU maintain this guarantee due to the post-processing property of GDP. However, note that a given prompt is offloaded multiple times: once for each offloaded component, once for each transformer layer, and then once for each generated token. Thus, to be precise, the overall privacy guarantee should include all offloaded instances. However, we stick to the privacy guarantee of $\sqrt{n} \epsilon$-GDP and even $\epsilon$-GDP (i.e., keeping the privacy parameter constant, instead of scaling it with the prompt size), as otherwise an overwhelming amount of noise needs to be added, and moreover, our prompt reconstruction attacks (Section~\ref{subsec:prompt-reconstruction-attack}) show negligible success rate with this privacy level. 

\subsection{Floating Point Issues}
\label{subsec:fp-issues}
The noise addition and cancellation procedure in our scheme 
as shown in Eq.~\eqref{eq:add-and-then-subtract} works perfectly for real numbers. However, in floating-point arithmetic, this may result in a phenomenon known as \emph{catastrophic cancellation} (see for example~\cite[\S 4.2]{muller2018handbook}). Given the quantities $W \vtr{x}$ and $W(\vtr{x} + \bm{\eta}) - W\bm{\eta}$ we are interested in knowing the difference between the two, when the arithmetic is done over floating point numbers. The $i$th element of $W \vtr{x}$ is given as $\dprod{\vtr{w}_i}{\vtr{x}}$, where $\vtr{w}_i$ is the $i$th row of $W$. We thus consider a generic element of $W \vtr{x}$, denoted $\dprod{\vtr{w}}{\vtr{x}}$, as the dot product of the appropriate row of $W$. The corresponding value through offloading is $\dprod{\vtr{w}}{\vtr{x}+ \bm{\eta}} - \dprod{\vtr{w}}{\bm{\eta}}$. 

To make this precise, we will use the notation $\float{}$ to denote the variant of any function that uses two floating point numbers and outputs a floating point number as an answer. For example if $x$ and $y$ are floating point numbers, then $\float(x + y)$ denotes the operation that adds two floating point numbers and rounds them to the nearest floating point number. Under this notation, we are interested in the quantity
\begin{equation}
\label{eq:fp-error}
\left| \float({\dprod{\vtr{w}}{\vtr{x}}}) - \float( \float(\dprod{\vtr{w}}{\vtr{x}+ \bm{\eta}}) - \float(\dprod{\vtr{w}}{\bm{\eta}}))\right|
\end{equation}
Note that the outer subtraction is not over floating point numbers.

\subsubsection{Background On Floating Point Arithmetic}  

We give a brief background on floating point arithmetic summarized from~\cite{muller2018handbook}. 
We use the binary32 number format as the foundation. A number $x$ in this format is written as 
\[
x = M \cdot 2^{e - p + 1},
\]
where $p = 24$ is the precision, $e$ is an integer satisfying $e_{\min} \leq e \leq e_{\max}$ and $M$ is an integer satisfying $2^{p - 1} \leq |M| < 2^p$. To be precise this is called the normalized representation. For binary32 numbers we have $e_{\min} = -126$ and $e_{\max} = 127$. The \emph{unit round-off} of radix-2 precision-$p$ floating point system in the round-to-nearest mode is defined as
\begin{equation}
\label{eq:ulp}
    u = \frac{1}{2}2^{1-p} = 2^{-p}
\end{equation}
This gives rise to the following standard model of floating-point arithmetic. For all floating-point numbers $a$, $b$ such
that $a * b$ does not underflow nor overflow, we have
\begin{equation}
\label{eq:std-model}
\float{(a*b)} = (a * b)(1 + \delta), \;|\delta| < u    
\end{equation}
where $*$ is one of the elementary arithmetic operations: $+, -, \times, \div$. Roughly, an underflow (respectively, overflow) occurs when the result is smaller (respectively, larger) than the smallest (respectively, largest) possible floating point number in the given number format.

The following definition is from \cite[\S 3.1]{higham2002accuracy} (see also \cite[\S 5.1]{muller2018handbook})
\begin{definition}
\label{def:u-and-theta}
Let $\delta_i$ satisfy $|\delta_i| \le u$ for $1 \le i \le d$, and assume that
\[
du < 1.
\]
Then
\[
\prod_{i=1}^{d} (1+\delta_i)^{\pm 1} = 1+\theta_d,
\]
where
\[
|\theta_d| \le \frac{du}{1-du} =: \gamma_d.
\]
\qed
\end{definition}
We will make use of the following result from~\cite[\S 3.1]{higham2002accuracy}:
\begin{equation}
\label{eq:dot-product-bound}
\left| \dprod{\vtr{w}}{\vtr{x}} - \float({\dprod{\vtr{w}}{\vtr{x}}}) \right|
\leq \gamma_d \dprod{|\vtr{w}|}{|\vtr{x}|}
\end{equation}

\subsubsection{Floating-Point Error Bound}

Our main result is as follows
\begin{theorem}
\label{theorem:fp-bound}
If no underflow or overflow occurs then
\begin{align*}
\left| \float({\dprod{\vtr{w}}{\vtr{x}}}) - \float( \float(\dprod{\vtr{w}}{\vtr{x}+ \bm{\eta}}) - \float(\dprod{\vtr{w}}{\bm{\eta}}))\right| \leq \gamma_d \dprod{|\vtr{w}|}{|\vtr{x}| + |\vtr{x} + \bm{\eta}| +  |\bm{\eta}|} + u \dprod{|\vtr{w}|}{|\vtr{x}|} + O(du^2)
\end{align*}
\end{theorem}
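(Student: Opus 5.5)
The plan is a triangle inequality around the exact real quantity $\dprod{\vtr{w}}{\vtr{x}}$, with each floating-point dot product handled by Eq.~\eqref{eq:dot-product-bound}. Write $a = \float(\dprod{\vtr{w}}{\vtr{x}+\bm{\eta}})$ and $b = \float(\dprod{\vtr{w}}{\bm{\eta}})$. By Eq.~\eqref{eq:dot-product-bound}, $a = \dprod{\vtr{w}}{\vtr{x}+\bm{\eta}} + e_1$ with $|e_1| \le \gamma_d \dprod{|\vtr{w}|}{|\vtr{x}+\bm{\eta}|}$, and $b = \dprod{\vtr{w}}{\bm{\eta}} + e_2$ with $|e_2| \le \gamma_d \dprod{|\vtr{w}|}{|\bm{\eta}|}$. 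One caveat: strictly, the input $\vtr{x}+\bm{\eta}$ is itself rounded when formed. I would either treat the masked vector as the stored floating-point vector, or note that its rounding adds a term $u\dprod{|\vtr{w}|}{|\vtr{x}+\bm{\eta}|}$ that can be absorbed by replacing $\gamma_d$ with $\gamma_{d+1}$. I would state this choice explicitly.

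Next I apply the standard model Eq.~\eqref{eq:std-model} to the subtraction: $\float(a-b) = (a-b)(1+\delta)$ with $|\delta|<u$. Then
\[
\float(a-b) = \dprod{\vtr{w}}{\vtr{x}} + (e_1 - e_2) + \delta\bigl(\dprod{\vtr{w}}{\vtr{x}} + e_1 - e_2\bigr).
\]
The term $\delta\,\dprod{\vtr{w}}{\vtr{x}}$ is bounded by $u\dprod{|\vtr{w}|}{|\vtr{x}|}$, using $|\dprod{\vtr{w}}{\vtr{x}}| \le \dprod{|\vtr{w}|}{|\vtr{x}|}$. The cross term $\delta(e_1-e_2)$ is of size $u\gamma_d(\cdots)$. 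Since $\gamma_d = du/(1-du) = du + O(d^2u^2)$, this is $O(du^2)$ times the inner products, which is what the statement's $O(du^2)$ absorbs (with the inner-product factors treated as constants).

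Finally, the directly computed $\float(\dprod{\vtr{w}}{\vtr{x}})$ differs from $\dprod{\vtr{w}}{\vtr{x}}$ by at most $\gamma_d\dprod{|\vtr{w}|}{|\vtr{x}|}$, again by Eq.~\eqref{eq:dot-product-bound}. The triangle inequality then gives the total bound
\[
\gamma_d\bigl(\dprod{|\vtr{w}|}{|\vtr{x}|} + \dprod{|\vtr{w}|}{|\vtr{x}+\bm{\eta}|} + \dprod{|\vtr{w}|}{|\bm{\eta}|}\bigr) + u\dprod{|\vtr{w}|}{|\vtr{x}|} + O(du^2).
\]
Linearity of $\dprod{|\vtr{w}|}{\cdot}$ collapses the three $\gamma_d$ terms into the single inner product in the statement.

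The main obstacle is bookkeeping the second-order term honestly. The $O(du^2)$ hides the factor $\dprod{|\vtr{w}|}{|\vtr{x}+\bm{\eta}|+|\bm{\eta}|}$, which grows with the noise scale $\Delta f/\epsilon$. I will state that the constant in $O(\cdot)$ depends on these inner products, so the bound is first order in $u$ for fixed $\vtr{w},\vtr{x},\bm{\eta}$. The no-underflow/overflow hypothesis is used only so that Eq.~\eqref{eq:std-model} and Eq.~\eqref{eq:dot-product-bound} apply.
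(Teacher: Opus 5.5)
Your proposal is correct and is essentially the paper's proof. The paper likewise writes the final subtraction as $(B-C)(1+\delta)$ with $B=\float(\dprod{\vtr{w}}{\vtr{x}+\bm{\eta}})$ and $C=\float(\dprod{\vtr{w}}{\bm{\eta}})$, bounds the three dot-product errors via Eq.~\eqref{eq:dot-product-bound}, and splits $|\delta(B-C)|$ into $u\dprod{|\vtr{w}|}{|\vtr{x}|}$ plus a cross term $u\gamma_d(\cdots)=O(du^2)$, exactly as in your expansion. Your caveat about rounding when forming $\vtr{x}+\bm{\eta}$ is something the paper leaves implicit, since it treats the masked vector as exact; note, however, that the $\gamma_{d+1}$ variant is not absorbed into $O(du^2)$, because $\gamma_{d+1}-\gamma_d\approx u$, so under that model you prove a slightly looser bound rather than the statement verbatim, as you already acknowledge.
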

\begin{proof}
Let us define, 
\[
A = \float({\dprod{\vtr{w}}{\vtr{x}}}), B = \float(\dprod{\vtr{w}}{\vtr{x}+ \bm{\eta}}), C = \float(\dprod{\vtr{w}}{\bm{\eta}})
\]
With this notation, we seek to find the bound on
\[
\left| A - \float(B - C) \right|
\]
From one application of the standard model (Eq.~\eqref{eq:std-model}) we have
\begin{align}
\left| A - \float(B - C) \right| &\leq 
\left| A - (B - C)(1 + \delta) \right| \nonumber\\
&= 
\left| A - \dprod{\vtr{w}}{\vtr{x}} + \dprod{\vtr{w}}{\vtr{x}} - (B - C)(1 + \delta) \right| \nonumber\\
&\leq \left| A - \dprod{\vtr{w}}{\vtr{x}} \right| + \left| (B - C)(1 + \delta) - \dprod{\vtr{w}}{\vtr{x}} \right| \label{eq:a-and-b-and-c}
\end{align}
where the last step follows from the triangle inequality. From Eq.~\eqref{eq:dot-product-bound}, we have
\begin{equation}
\label{eq:a-bound}
\left| A - \dprod{\vtr{w}}{\vtr{x}} \right| = \left| \float({\dprod{\vtr{w}}{\vtr{x}}}) - \dprod{\vtr{w}}{\vtr{x}} \right| \leq \gamma_d \dprod{|\vtr{w}|}{|\vtr{x}|}
\end{equation}
We expand the second term in Eq.~\eqref{eq:a-and-b-and-c} as follows
\begin{align}
\left| (B - C)(1 + \delta) - \dprod{\vtr{w}}{\vtr{x}} \right| &= \left| B - C +\delta(B-C) - \dprod{\vtr{w}}{\vtr{x}} \right| \nonumber\\
&\leq \left| B - C  - \dprod{\vtr{w}}{\vtr{x}} \right| + \left| \delta(B-C)\right| \nonumber\\
&= \left| B - C  - \dprod{\vtr{w}}{\vtr{x}} + \dprod{\vtr{w}}{\bm{\eta}} - \dprod{\vtr{w}}{\bm{\eta}}\right| + \left| \delta(B-C)\right| 
\nonumber\\
&\leq \left| B - \dprod{\vtr{w}}{\vtr{x}} - \dprod{\vtr{w}}{\bm{\eta}}\right| + \left| C  - \dprod{\vtr{w}}{\bm{\eta}}\right| + \left| \delta(B-C)\right| \nonumber\\
&= \left| B - \dprod{\vtr{w}}{\vtr{x} + \bm{\eta}}\right| + \left| C  - \dprod{\vtr{w}}{\bm{\eta}}\right| + \left| \delta(B-C)\right|
\label{eq:b-and-c}
\end{align}
with multiple applications of the triangle inequality. 
From Eq.~\ref{eq:std-model}, we have
\begin{equation}
\label{eq:b-bound}
\left| B - \dprod{\vtr{w}}{\vtr{x} + \bm{\eta}}\right| = \left|\float(\dprod{\vtr{w}}{\vtr{x} + \bm{\eta}}) - \dprod{\vtr{w}}{\vtr{x} + \bm{\eta}}\right| \leq \gamma_d \dprod{|\vtr{w}|}{|\vtr{x} + \bm{\eta}|} 
\end{equation}
and
\begin{equation}
\label{eq:c-bound}
\left| C - \dprod{\vtr{w}}{\bm{\eta}}\right| = \left|\float(\dprod{\vtr{w}}{\bm{\eta}}) - \dprod{\vtr{w}}{\bm{\eta}}\right| \leq \gamma_d \dprod{|\vtr{w}|}{|\bm{\eta}|} 
\end{equation}
The remaining term in Eq.~\eqref{eq:b-and-c} can be expanded as follows, noting that $|\delta| \leq u$ from Definition~\ref{def:u-and-theta},
\begin{align}
 \left| \delta(B-C)\right| &\leq |\delta|\left| B - C\right| \nonumber\\
 &\leq u |B-C| \nonumber\\
 &= u |B - C + \dprod{\vtr{w}}{\vtr{x} + \bm{\eta}} - \dprod{\vtr{w}}{\vtr{x} + \bm{\eta}}| \nonumber\\
 & \leq u\left| B - \dprod{\vtr{w}}{\vtr{x} + \bm{\eta}} \right| + u \left| C - \dprod{\vtr{w}}{\bm{\eta}} \right| + u|\dprod{\vtr{w}}{\vtr{x}}|\nonumber\\
 & \leq u\left| B - \dprod{\vtr{w}}{\vtr{x} + \bm{\eta}} \right| + u \left| C - \dprod{\vtr{w}}{\bm{\eta}} \right| + u\dprod{|\vtr{w}|}{|\vtr{x}|
 }|\nonumber
\end{align}
which again follows from repeated use of the triangle inequality. From Eqs.\eqref{eq:b-bound} and \eqref{eq:c-bound}, we have
\begin{align}
 \left| \delta(B-C)\right| &\leq u\gamma_d \dprod{|\vtr{w}|}{|\vtr{x} + \bm{\eta}|} + u \gamma_d \dprod{|\vtr{w}|}{|\bm{\eta}|} + u\dprod{|\vtr{w}|}{|\vtr{x}|} \nonumber\\
 &= u\gamma_d \dprod{|\vtr{w}|}{|\vtr{x} + \bm{\eta}| + |\bm{\eta}|} + u\dprod{|\vtr{w}|}{|\vtr{x}|} \label{eq:delta-b-and-c}
\end{align}
By using the fact that 
\[
\frac{1}{1 - x} = 1 + x + x^2 + \cdots
\]
and denoting $\dprod{|\vtr{w}|}{|\vtr{x} + \bm{\eta}| + |\bm{\eta}|} = c $ as a constant, we see that the first term in Eq.~\eqref{eq:delta-b-and-c} becomes
\begin{align*}
    u\gamma_d \dprod{|\vtr{w}|}{|\vtr{x} + \bm{\eta}| + |\bm{\eta}|} &= c u \gamma_d \\
    &= cu \frac{du}{1 -du} \\
    &= cdu^2 \frac{1}{1 - du} \\
    &= cdu^2 (1 + du + d^2u^2 + \cdots)\\
    &= cdu^2 + cd^2u^3 + cd^3u^4 + \cdots \\
    &= \mathcal{O}(du^2)
\end{align*}
where we have used the definition of $\gamma_d$ from Definition~\ref{def:u-and-theta}. Thus, Eq.~\eqref{eq:delta-b-and-c} becomes
\begin{equation}
\label{eq:delta-b-and-c-2}
\left| \delta(B-C)\right| \leq u\dprod{|\vtr{w}|}{|\vtr{x}|}  + \mathcal{O}(du^2)
\end{equation}
Putting the results from Eqs.~\eqref{eq:b-bound}, \eqref{eq:c-bound} and \eqref{eq:delta-b-and-c-2} in Eq~\eqref{eq:b-and-c}, we get 
\begin{equation}
\label{eq:b-and-c-2}
\left| (B - C)(1 + \delta) - \dprod{\vtr{w}}{\vtr{x}} \right| \leq \gamma_d \dprod{|\vtr{w}|}{|\vtr{x} + \bm{\eta}|} + \gamma_d \dprod{|\vtr{w}|}{| \bm{\eta}|} + u\dprod{|\vtr{w}|}{|\vtr{x}|}  + \mathcal{O}(du^2) 
\end{equation}
Finally, putting the results from Eqs.~\eqref{eq:a-bound} and \eqref{eq:b-and-c-2} in Eq.~\ref{eq:a-and-b-and-c}, we get
\begin{align*}
    \left| A - \float(B - C) \right| &\leq \gamma_d \dprod{|\vtr{w}|}{|\vtr{x}|} + \gamma_d \dprod{|\vtr{w}|}{|\vtr{x} + \bm{\eta}|} + \gamma_d \dprod{|\vtr{w}|}{| \bm{\eta}|} + u\dprod{|\vtr{w}|}{|\vtr{x}|}  + \mathcal{O}(du^2) \\
&= \gamma_d \dprod{|\vtr{w}|}{|\vtr{x}| + |\vtr{x} + \bm{\eta}| +  |\bm{\eta}|} + u \dprod{|\vtr{w}|}{|\vtr{x}|} + \mathcal{O}(du^2)
\end{align*}
as desired.
\end{proof}

How may we use the result of Theorem~\ref{theorem:fp-bound}? Since $\vtr{w}$, $d$ and $u$ are constants, the error bound depends on $\vtr{x}$ and $\bm{\eta}$. The latter is a Gaussian random variable with mean $\vtr{0}$ and scale $\frac{\Delta f}{\epsilon} I_d$. Furthermore, the range of values of $\vtr{x}$ are known, e.g., after layer normalization. Thus, through a Monte Carlo simulation we can determine the error bound (ignoring the $O(du^2)$ term) for a given value of $\epsilon$. For a given value of $\epsilon$, we can determine whether the overall error is acceptable in the sense that it will not change the recovered dot product by too much. Through our simulations, we found the following bound to be closer to the true error
\begin{equation}
\label{eq:fp-log-bound}
\left| \float({\dprod{\vtr{w}}{\vtr{x}}}) - \float( \float(\dprod{\vtr{w}}{\vtr{x}+ \bm{\eta}}) - \float(\dprod{\vtr{w}}{\bm{\eta}}))\right| \leq \gamma_{\lceil \log_2 d \rceil + 1} \dprod{|\vtr{w}|}{|\vtr{x}| + |\vtr{x} + \bm{\eta}| + |\bm{\eta}|} + u \dprod{|\vtr{w}|}{|\vtr{x}|}    
\end{equation}
which is based on Higham's estimate~\cite[\S 3.1, p. 64]{higham2002accuracy} for computing the dot product through pairwise summation. We will therefore use this bound in our experiments. Note that the only quantity in the bound above that depends on the specific floating-point number format is $u$ which, as defined in Eq.~\eqref{eq:ulp}, depends on the precision $p$ of the format. Thus, this bound is transferrable to any floating-point number format such as binary32, binary16 (half-precision floating point format) or brain floating point (bfloat16).\footnote{See details of these number formats and their use in LLM training and inference in \url{https://sebastianraschka.com/blog/2023/llm-mixed-precision-copy.html}.} 


\section{Implementation}
\label{sec:implementation}


We use Intel Trust Domain Extensions (TDX) to provide our TEE, a CPU-based trusted execution environment technology~\cite{chengTDXDemystified2024}. The Intel TDX architecture allows the creation of hardware-isolated virtual machines, which we call TDX guests,\footnote{They are also called Trust Domains (TDs).} whose memory and execution state are protected from the hypervisor, other TDX guests, and any other software running on the host. We implement the split-inference architecture on a single physical machine by partitioning LLM inference between a TDX guest (virtual machine) and an semi-trusted GPU running on the host. The system is equipped with an Intel Xeon Silver 4514Y processor with 16 physical cores and approximately 117~GB of system memory. The TDX-protected virtual machine is configured with 32 virtual CPUs and approximately 93~GB of memory, while the remaining system memory is available to the host. The host is equipped with an NVIDIA RTX PRO 4500 Blackwell GPU with approximately 32~GB of device memory. Both environments run Ubuntu 24.04.


The TDX retains all values that must remain confidential, including intermediate activations, masking tensors, and correction values. The TDX also controls the autoregressive token generation process, also known as decoding, and manages the key–value cache (KV cache), which avoids recomputing attention over previous tokens at each decoding step. 
The partitioning only changes the physical location at which selected matrix multiplications are evaluated. It does not change the transformer architecture, model parameters, tensor dimensions,\footnote{The term tensor refers to the multidimensional array used to represent data within an LLM during inference.}
or ordering of operations. After each offloaded operation is completed, the TDX guest reconstructs the same linear output that would have been produced by local execution and continues the original transformer computation. 

Intel TDX provides two types of memory for the protected virtual machine. Private memory is encrypted and integrity protected and can only be accessed from within the protected TDX environment. Shared memory is used to communicate with entities outside the TDX guest, and does not receive the same protection~\cite{chengTDXDemystified2024}. We describe the usage of the shared memory in our system in Section~\ref{subsec:shared-memory}.

\subsection{Offloaded Operations}
The components of the LLM offloaded to the GPU are illustrated in Figure~\ref{fig:transformer}. All these components involve matrix multiplications, specifically, with the matrices: $W_\text{qry}, W_\text{key}, W_\text{val}, W_\text{oh}, W_\text{gate}, W_\text{up}$, and $W_\text{lm}$. See Section~\ref{sub:llm-components} for a description of these matrices. We shall denote a generic matrix belonging to this set by $W$. The GPU simply multiplies these matrices with the corresponding masked inputs and returns the result back to TEE. The masked input and the result are shared between the TEE and the GPU through a shared memory. This is illustrated in Figure~\ref{fig:Methodology} for a generic matrix $W$.

\subsection{Precomputing the Mask and Noise Correction}
\label{sub:precompute-noise}

Instead of generating the noise $\bm{\eta}$ and the product $W\bm{\eta}$ at inference time, we generate them in advance. This is essentially a time-memory tradeoff, and it is the same strategy used in~\cite{tramer2018slalom} and \cite{xue2025Securing}. In particular, $W\bm{\eta}$ is itself a matrix product, and computing it at runtime would defeat the purpose of offloading expensive operations to the GPU. Note that we cannot offload this multiplication to the GPU as it reveals the noise vector, violating the differential privacy guarantee. For any offloaded component, equivalently, for any matrix $W$ associated with the offloaded component, we call $\bm{\eta}$ the \emph{mask} and $W\bm{\eta}$ the \emph{correction}, as this is the quantity the TEE has to subtract from the result obtained from the GPU. We generate the masks and corrections in advance for each offloaded component and store them in a \emph{noise bank}. We maintain separate correction banks for $(Q, K, V)$ projections, the $O$ projection, the Gate/Up projections, and the language head projection. 

The $(Q, K, V)$ projections receive the same input (see Figure~\ref{fig:transformer}). A single mask can therefore be applied to their shared input, with three corresponding corrections stored for the three weight matrices. Similarly, the Gate and Up projections share the same input and therefore need a single mask, with two corrections corresponding to the two matrices. The $O$ projection and language model head use one mask and one correction each. 
The noise bank is \emph{indexed} by the input shape. As mentioned in Section~\ref{sub:llm-components}, the matrices have different dimensions, resulting in outputs of different dimensions (also called tensor dimensions). Thus, during inference, the TEE retrieves an entry matching the current tensor dimensions, which serves as its index. Each noise entry, the mask and its corresponding correction, is consumed only once, preventing reuse of the same mask across different calls to these components. 
All masks and corrections are precomputed inside the TDX protected virtual machine and written to its local disk. Before inference, the entries required for the current batch are loaded from disk into protected memory. During inference, these loaded entries are used one at a time and discarded after use.

\subsection{Shared Memory Communication Between TDX Guest and GPU Worker}
\label{subsec:shared-memory}

Our TDX guest runs under QEMU (Quick Emulator), which provides Inter-VM Shared Memory (IVSHMEM)\footnote{See \url{https://www.qemu.org/docs/master/specs/ivshmem-spec.html}} as a mechanism for exposing a shared-memory region between the host and the guest. The masked input to the outsourced linear operation, and the corresponding output are exchanged through this region between the TDX guest and the GPU, as shown in Figure~\ref{fig:Methodology}. The host maps the shared-memory object into its address space, while the TDX guest maps the corresponding IVSHMEM device memory region into its own address space. Both sides can therefore access the same memory area, allowing large data transfers without serializing or transmitting them through the socket connection. Note that the shared-memory region contains only the masked input, and the corresponding output on the masked input from the GPU. The unmasked input, the mask, and the corresponding correction, are not written to the shared-memory region and remain inside the TDX guest. All masks and corrections are precomputed inside the protected virtual machine of the TDX guest, and written to its local disk. During inference, the required entries are loaded from this storage into the TDX guest's protected memory, i.e., private memory.

Because the shared memory region is used only for exchanging the masked input and output, a separate mechanism is required to coordinate each outsourced computation. For this purpose, we use a TCP socket as a control channel between the TDX guest and the host. Control messages specify the metadata needed to locate and interpret the masked tensor in shared memory. The projection name identifies the requested operation, while the transformer-layer index identifies the correct weight matrix, since each layer has its own distinct set of weights. The shared-memory offset specifies where the masked tensor begins. Because the shared memory region holds only raw bytes with no inherent structure, its shape and datatype are also included, so the GPU worker can correctly interpret the stored values as a tensor. The byte length is included to verify that the expected amount of data is being accessed. Therefore, the masked input and corresponding output are transferred through IVSHMEM, while the TCP socket is used only to coordinate the outsourced computation. 

\subsection{Inference Models}
\label{subsec:inference-models}

We evaluate our approach using two instruction-tuned LLMs: \llama{} and \texttt{Qwen3-4B-Instruct}, which we simply call \qwen{} from now onwards.\footnote{\url{https://huggingface.co/Qwen/Qwen3-4B-Instruct-2507}} Both are decoder-only transformer models with RMSNorm and SiLU-based MLP (Feed-Forward Network) blocks, but they differ in their internal dimensions and attention configurations.

\llama{} contains 28 transformer layers with a hidden dimension of 3{,}072 and an intermediate MLP dimension of 8{,}192. Its attention module uses 24 query heads and 8 key-value heads. The \(Q\) and \(O\) projections operate over 3{,}072-dimensional representations, while the \(K\) and \(V\) projections produce 1{,}024-dimensional outputs. The Gate and Up projections expand the hidden representation to 8{,}192 dimensions. Its LM head maps the final 3{,}072-dimensional representation to a vocabulary of 128{,}256 tokens.

\qwen{} contains 36 transformer layers with a hidden dimension of 2{,}560 and an intermediate MLP dimension of 9{,}728. It uses 32 query heads and 8 key-value heads. Its \(Q\) projection produces a 4{,}096-dimensional representation, while \(K\) and \(V\) each produce 1{,}024 dimensions, and the \(O\) projection maps the 4{,}096-dimensional attention output back to the 2{,}560-dimensional hidden representation. The Gate and Up projections expand the representation to 9{,}728 dimensions, and the LM head maps the final hidden representation to a vocabulary of 151{,}936 tokens.

\subsection{The Case Against the ``Down'' Linear Projection}
\label{subsec:down-linear-projection}
The ``Down'' linear projection described in Section~\ref{sub:llm-components} and depicted in Figure~\ref{fig:transformer} is also a linear projection involving the matrix multiplication $W_\text{down}$. The input to this component is the output of the activation function (SiLU). Theorem~\ref{theorem:gs-act-function} shows the bound on the sensitivity of the activation function, and therefore the scale of noise that needs to be added to the input $\vtr{x}$ to the Down component. In our experiments with \llama{}, we found that offloading this component caused significant storage cost for the noise bank and floating-point errors (Section~\ref{subsec:fp-issues}) after noise correction. This is due to two main reasons. First the input to this component has dimension (also called intermediate dimension) equal to $8,192$ ($d_\text{ff}$), which is significantly larger than the model's embedding dimension (hidden dimension) $d = 3,072$. Furthermore, substituting this dimension size in the global sensitivity bound of Theorem~\ref{theorem:gs-act-function} in $d$ and the norms of the matrices $W_\text{gate}$ and $W_\text{up}$ yields a value much higher than the global sensitivity bounds for other components. The net result is that the noise bank has to store noise of larger scale and size, as well as significant amplification of floating-point errors after noise correction. The latter means that there is a significant mismatch between $W \vtr{x}$ and $W(\vtr{x} + \bm{\eta}) - W\bm{\eta}$, where the latter is calculated via offloading. This error is then carried over to other components of the LLM, resulting in a drastically different output. 


One way to resolve this issue is to perform any computation from this component in 64-bit floating-point arithmetic. The reason why this would work is evident from Eq.~\eqref{eq:fp-log-bound}. In the IEEE 754 binary64 format, also called double-precision floating-point, we have precision $p = 53$, meaning that $u = 2^{-53}$ from Eq.~\eqref{eq:ulp}. Consequently, $\gamma_d$ from Definition~\ref{def:u-and-theta} is $\gamma_d \approx d \cdot 2^{-53}$, assuming $du \ll 0$, as should be the case. On the other hand, in the IEEE 754 binary32 format, also called single-precision floating-point, we have $p = 24$, meaning that $u = 2^{-24}$, and $\gamma_d \approx d \cdot {2^{-24}}$ with the same assumption. The ratio of this to the former is approximately $2^{29} \approx 5.38 \times 10^8$. Thus the bound in Eq.~\eqref{eq:fp-log-bound} in the 64-bit floating-point is many orders of magnitude smaller than smaller precision formats, and hence it can solve the floating-point error amplification issue with noise correction. However, using 64-bit precision for the Down component significantly increases the noise bank size, as each stored value requires more memory. The higher precision also increases memory consumption, processing time, and the amount of data transferred through shared memory. Consequently, outsourcing this operation provides diminishing returns. We therefore retain the Down component inside the TEE.

\section{Results}
\label{sec:current-results}
We present our results in the following order. We first compare LLM inference time of our split-inference scheme against single-processor inference. We then analyze the similarity of the generated response through our scheme to the non-partitioned setting. This is followed by comparison with a cryptographic variant of our protocol based on~\cite{tramer2018slalom}, and finally effectiveness of the prompt reconstruction attack on our scheme.

\subsection{Inference Time}
We begin by defining a few configurations to compare inference times, shown in Table~\ref{tab:configurations}. The CPU and GPU configurations serve as baselines for LLM inference at the CPU and the GPU, respectively. The TDX configuration aims to demonstrate the difference in computation time on a CPU-enabled TDX against the CPU and GPU. The last two configurations, i.e., \epssharp{} and \epsroot{} implement our split-inference with differentially private noise of a constant scale (e.g., $\epsilon = 1$) per token versus $\epsilon = 1/\sqrt{n}$ per token, respectively, for a total of $n$ tokens. The configuration \tconfig{} is a vanilla split-inference configuration. Its purpose is to assess whether the masking and correction steps add additional overhead as compared to simply outsourcing noiseless inputs and receiving noise-free outputs from the GPU. 

\begin{table}[ht]
\centering
\begin{tabular}{r|p{0.8\textwidth}}
CPU & Under this configuration, entire LLM inference is performed by the CPU of the host.\\
TDX & Entire LLM inference is performed within the TDX guest, using the CPU of the host machine. The difference is that, in the TDX configuration, the computation and the guest's memory are protected by Intel TDX. \\
GPU & Entire LLM inference is performed by the GPU of the host.\\
\tconfig{} & Split-inference between the TDX guest and the host's GPU but without masking and correction.\\
\epssharp{} & Split-inference between the TDX guest and the host's GPU with constant noise scale $\epsilon$. Over all $n$ tokens of the prompt, this amounts to $\sqrt{n}\epsilon$-Gaussian differential privacy (see Section~\ref{subsec:dp-proof}). We use different constant values of $\epsilon$ for this configuration.\\
\epsroot{} & Split-inference between the TDX guest and the host's GPU with noise of scale $\epsilon = 1/\sqrt{n}$. Over all $n$ tokens of the prompt, this amounts to $\sqrt{n}\epsilon = 1$-Gaussian differential privacy.
\end{tabular}
\caption{The various configurations to compare LLM inference times.}
\label{tab:configurations}
\end{table}

For each configuration, we use prompts selected from the Databricks Dolly dataset~\cite{conover2023dolly}\footnote{Available at \url{https://huggingface.co/datasets/databricks/databricks-dolly-15k}}with an average input length of approximately 100 tokens. This is a human-generated instruction-response dataset suitable for training instruction-following LLMs. We evaluate 45 prompts in total using a batch size of 15, resulting in three batches per configuration. 
Batching allows multiple prompts to be processed together in a single inference run, improving computational efficiency. In our split TDX-GPU setting, batching also reduces the number of separate data transfers and GPU calls by processing multiple prompts together.
Each configuration is executed once over the same set of prompts, and we report the average inference time per prompt.
To evaluate how the number of generated tokens affects inference time, we define four output-token limits of 100, 200, 400, and 800 tokens. For each setting, \llama{} and \qwen{} are configured to generate exactly the number of new tokens specified by the corresponding token limit. Figure~\ref{fig:runtime-comparison} shows the results for all configurations for \llama{}.


\begin{figure}[t]
\centering
\includegraphics[width=\columnwidth]{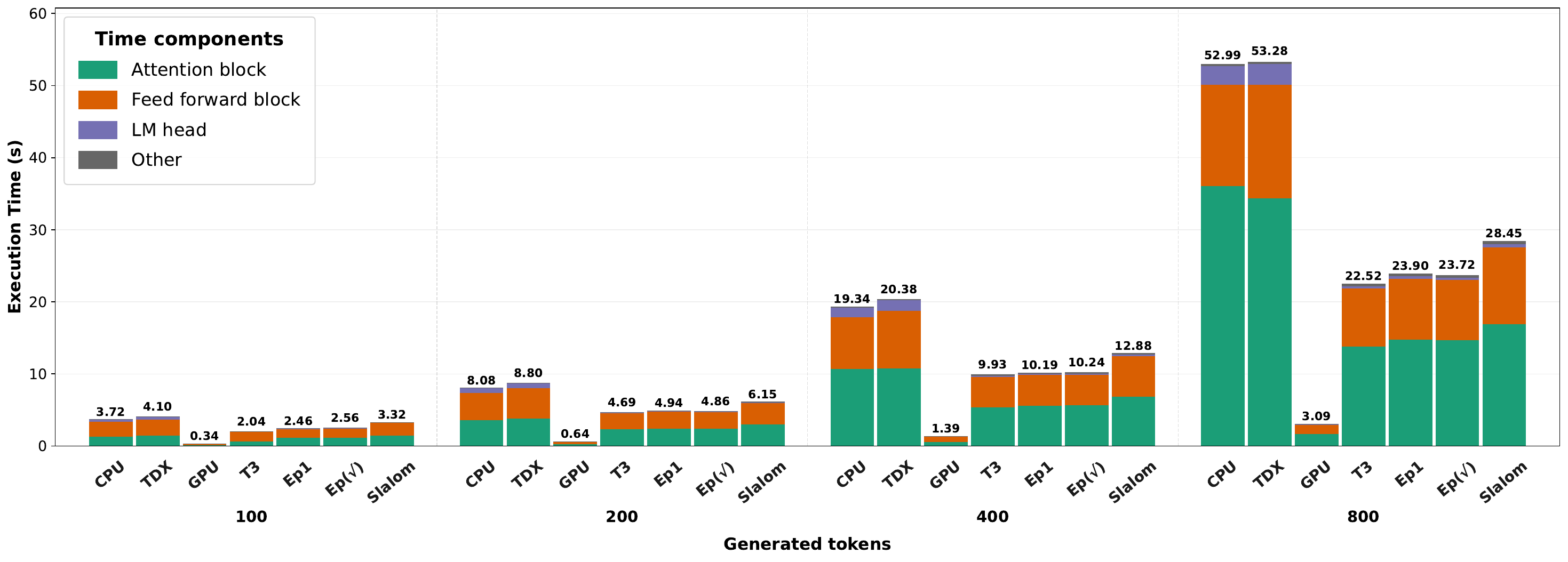}
\caption{Execution time comparison for \llama{} across CPU-only, full-TDX, direct-GPU, \textsc{T3} (no-noise split), the noised configurations \textsc{EP1} and \textsc{EP}$(1/\sqrt{n})$, and an adapted Slalom baseline using single-precision (FP32) outsourced computation.}
\label{fig:runtime-comparison}
\end{figure}

\begin{figure}[t]
\centering
\includegraphics[width=\columnwidth]{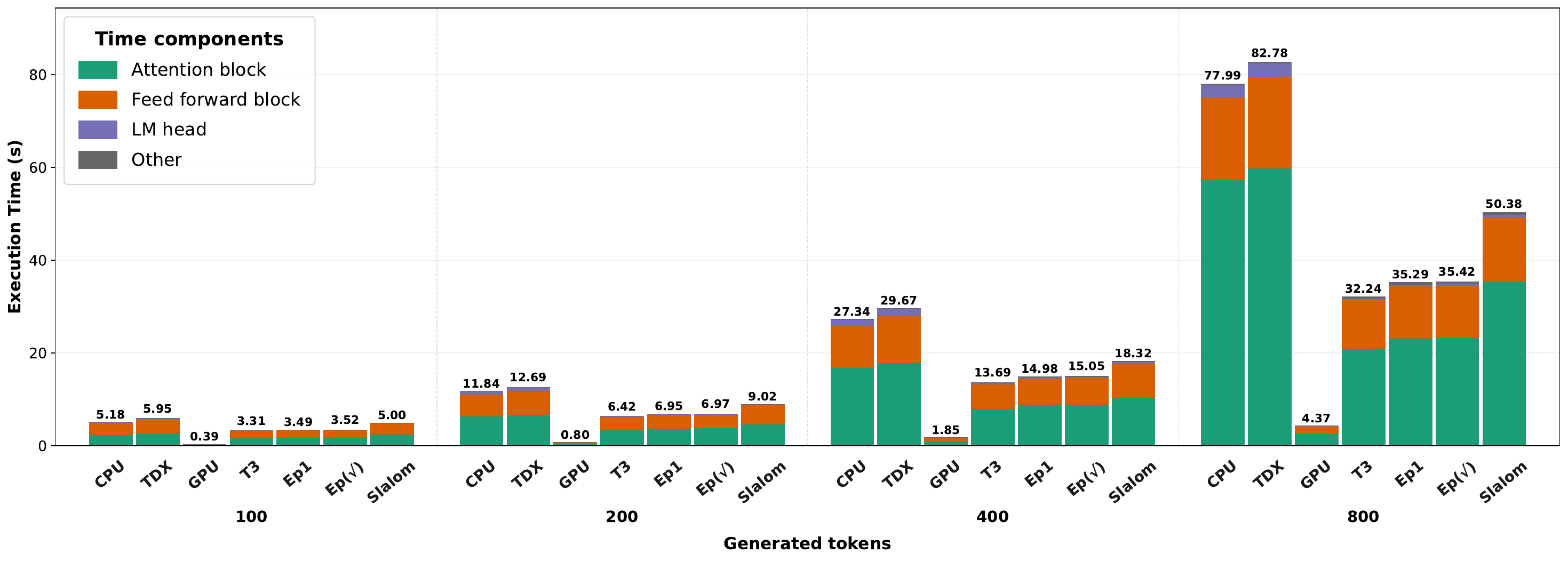}
\caption{Execution time comparison for \qwen{} across CPU-only, full-TDX, direct-GPU, \textsc{T3} (no-noise split), the noised configurations \textsc{EP1} and \textsc{EP}$(1/\sqrt{n})$, and an adapted Slalom baseline using single-precision (FP32) outsourced computation.}
\label{fig:runtime-comparison-qwen}
\end{figure}

The TDX configuration is only marginally slower than the CPU configuration 
which shows that the trusted TDX guest only adds slight overhead over inference done on the untrusted CPU. This overhead is expected because TDX relies on hardware assisted memory protection and encryption to isolate the guest from the host~\cite{Unterguggenberger_2025}.
Also, note that as the number of generated tokens increases, inference time increases non-proportionally, showing that the time taken is not linear in the number of tokens generated. This is because tokens are generated sequentially, and each newly generated token increases the context that subsequent attention operations must process. 

Both the CPU and TDX configurations are many orders of magnitude slower than the untrusted GPU which only takes \(0.34\), \(0.64\), \(1.39\), and \(3.09\) seconds per prompt for the four token-generation sizes. This motivates the need to split inference between the TDX and the GPU to speed up inference time. The \epssharp{} configuration with $\epsilon = 1$, which we call \epsone{}, takes times of \(2.46\), \(4.94\), \(10.19\), whereas the \epsroot{} configuration achieves  \(23.90\) seconds, and \(2.56\), \(4.86\), \(10.24\), and \(23.72\) seconds, over the four token-generation sizes. Specifically, over larger token sizes, we achieve more than double the speedup using the split-configuration, e.g., 23.90 seconds with \epsone{} versus 53.28 seconds at the TDX for 800 tokens. The only difference between the two noisy configurations is the amount of noise added to mask the inputs, with more noise added in the \epsroot{} configuration. 
Both configurations therefore perform essentially the same amount of computation and communication.  Interestingly, the \tconfig{} split-inference configuration, which does not add any noise nor corrects any noise, has times of  \(2.04\), \(4.69\), \(9.93\), and \(22.52\) seconds. The small gap between these times and those of the two noisy configurations, \epsone{} and \epsroot{}, shows that noise addition and correction introduce only modest overhead.

The times in Figure~\ref{fig:runtime-comparison} are divided into times taken by the attention block (linear operations \(Q\), \(K\), \(V\), and \(O\)), the feed-forward block (operations Gate and Up), the LM head, and all other components. Most of the performance improvement over the TDX configuration comes by offloading the attention and feed-forward blocks. This justifies the need to offload these components to the GPU. 
The language-model head maps the final hidden representation to the model vocabulary. For example, \llama{} has a vocabulary of 128{,}256 tokens, while \qwen{} has 151{,}936 tokens. A larger vocabulary increases the size of the language-model head projection and therefore the amount of computation required by this operation. We can see that offloading the language-model head also results in a net improvement in time, although less so than the other components. Offloading these components to the GPU therefore reduces the amount of computation performed inside TDX and allows the GPU to handle the more expensive operations in parallel. The remaining operations, including normalization, residual connections, attention computation, non-linear feed forward network operations, and the Down projection, are performed inside TDX. These operations have minimal impact on the overall time (see the legend ``Other'' in Figure~\ref{fig:runtime-comparison}).




To see if our results are replicated over other LLMs, we also did the same experiment on \qwen{}, with the results shown in Figure~\ref{fig:runtime-comparison-qwen}. The trend is the same as that of \llama{}, except now there is a bigger difference between the TDX and noisy configurations. For example, for 800 generated tokens, the TDX configuration takes almost 83 seconds whereas \epsone{} and \epsroot{} take up to 36 seconds only. 

\subsection{Inference Time Comparison with Slalom}
\label{subsec:baseline}
The main inspiration of our work comes from the closest existing system to ours, Slalom~\cite{tramer2018slalom}. Slalom was proposed as a system to split neural network inference between a CPU-based TEE and an untrusted GPU, just like ours. In its original design, the TEE is an Intel Software Guard Extension (SGX) enclave. Just like our system, expensive linear operations are assigned to the GPU, and since the GPU is not trusted, Slalom masks the values before outsourcing them and removes the corresponding correction after the result is returned. 

However, there are a few main differences between their work and ours. First, Slalom uses Intel SGX as the trusted environment, whereas our implementation runs inside an Intel TDX VM. This is mainly due to the fact that Intel TDX is a newer technology. Indeed, their system can easily be ported into an Intel TDX VM. Secondly, their system was designed for a conventional deep neural network (DNN) containing multiple layers, and not specifically for LLMs. LLMs contain many repeated transformer layers with large linear projections, and these projections can operate on substantially higher-dimensional inputs and outputs than those considered in the original Slalom evaluation. Applying the Slalom masking mechanism to the outsourced linear operations in an LLM therefore requires more computation to generate the corresponding corrections and more memory to store the precomputed masks and corrections. 

The third and major difference is how they mask the input to the offloaded linear operations. They work in the finite field $\mathbb{Z}_q$ modulo a prime $q$. The $d$-dimensional mask $\bm{\eta}$ is then a vector each element of which is a random element from the field $\mathbb{Z}_q$. The input $\vtr{x} \in \mathbb{R}^d$ is then \emph{quantized} into an integer vector $\vtr{x}_\text{qt} \in \mathbb{Z}_q^d$ of the same dimension via fixed-point representation. The masked input is then $\widetilde{\vtr{x}}_\text{qt} = \vtr{x}_\text{qt} + \bm{\eta} \pmod{q}$. Likewise, the weight matrix $W$ is quantized as $W_\text{qt}$ into a matrix of elements from $\mathbb{Z}_q$. The GPU computes $W_\text{qt}\widetilde{\vtr{x}}_\text{qt} = W_\text{qt}(\vtr{x}_\text{qt} + \bm{\eta}) \pmod{q}$, and sends the result back to the TEE, who then corrects the noise by computing  $W_\text{qt}(\vtr{x}_\text{qt} + \bm{\eta}) - W_\text{qt}\bm{\eta}$. The result can then be de-quantized back to a vector in $\mathbb{R}^d$ (or more precisely a vector of floating-point equivalents) to continue operations in the original domain inside the TEE. It is important to remark that Slalom also checks the integrity of the result of matrix multiplication from the GPU through Freivalds' algorithm~\cite{freivalds1977probabilistic}, hence another reason to work within the field $\mathbb{Z}_q$. However, we omit this part as we do not consider the GPU to be malicious in our threat model.


To compare the inference times and accuracy of our split-inference mechanism based on differential privacy against that of Slalom's, based on cryptography (essentially a one-time pad), we adapt Slalom's masking mechanism to the same LLM components outsourced in our system. Thus, the $(Q, K, V)$ projections, the $O$ projection, Gate/Up projections, and LM head are executed on the untrusted GPU, while the remaining operations stay inside the TDX guest. Before offloading an input $\vtr{x}$ to the GPU, we quantize it and then mask it using a fresh random vector using the Slalom protocol outlined above. The weight matrices at the GPU are also quantized. The GPU does the matrix  computation in the finite field $\mathbb{Z}_q$. The returned result is then corrected using precomputed (quantized) correction as shown above. The recovered result is then de-quantized back to floating point before the remaining operations continue inside the TDX.



We use a fresh, single-use mask for each outsourced computation, following the confidentiality mechanism used by Slalom. Following Slalom's implementation, we use the prime modulus \(q = 2^{23}+2^{21}+7 = 10{,}485{,}767\) and 8-bit quantization for both input tensor and model weights.\footnote{We use the following implementation of Slalom: \url{https://github.com/ftramer/slalom}.} Figure~\ref{fig:runtime-comparison} shows that the run-time through Slalom is slower than our method. Specifically, for 400 generated tokens Slalom is 2 seconds slower, whereas for 800 generated tokens it is almost 5 seconds slower than our scheme. Looking at Figure~\ref{fig:runtime-comparison-qwen}, Slalom is almost 15 seconds slower than our scheme on \qwen{}. Inference time is just part of the comparison. As we shall show next, one of the drawbacks of Slalom is that we cannot tweak its accuracy as easily as our scheme by changing the value of $\epsilon$. 

\subsection{Similarity of Generated Text}
\label{subsec:similarity-generated-text}
For our scheme to be viable, the split-inference architecture should not degrade the quality of the response from the LLM. To test this, we compare the outputs generated by the original and the split-inference models under greedy decoding. Greedy decoding outputs the next token with the highest score (logit) produced by the language model head. 

\subsubsection{Similarity Score} 
We evaluate 100 prompts with input lengths between 100 and 250 tokens, selected from the Databricks Dolly dataset, which cover a range of instruction-following tasks such as factual question-answering and context-based queries. For each prompt, we generate exactly 200 new tokens using greedy decoding. We evaluate fixed privacy levels $\epsilon\in \{0.5,1,5,10,15\}$ for the \epssharp{} configuration and additionally evaluate the \epsroot{} configuration with $\epsilon=1/\sqrt{n}$, where $n$ denotes the number of tokens in the input prompt. We compare the responses generated by the protected split-inference configurations against the GPU configuration as the baseline. The GPU configuration produces the same outputs as the CPU, TDX, and \tconfig{} configurations under greedy decoding. We therefore use the GPU output as the reference and evaluate \epssharp{} and \epsroot{} configurations against it.

To measure how closely the responses generated by the protected split-inference model match those of the original model, we use the following metrics: exact output match, token match rate, BLEU, ROUGE-L, and semantic similarity. Exact output match reports the percentage of prompts for which the complete generated token sequence is identical. Token match rate in contrast gives the percentage of the generated tokens that were the same. BLEU measures similarity based on overlapping n-grams~\cite{papineni2002bleu}, while ROUGE-L measures the longest common subsequence between the responses~\cite{lin2004rouge}. Semantic similarity measures if the overall meaning of the two responses is preserved by computing cosine similarity between their sentence embeddings using the \texttt{all-mpnet-base-v2} sentence-transformer model.\footnote{See \url{https://huggingface.co/sentence-transformers/all-mpnet-base-v2}}

Table~\ref{tab:similarity-score} shows that the output generated by the \epssharp{} configuration becomes more stable  as \(\epsilon\) increases. At $\epsilon=0.5$, 33\% of the generated responses diverge from the original output. This decreases to 16\% at $\epsilon=1$ and only 2\% at $\epsilon=5$. At $\epsilon=10$ and $\epsilon=15$, no deviations are observed and all 100 generated responses match the original outputs exactly. The token match rate, BLEU, and ROUGE-L scores follow the same trend, improving as $\epsilon$ increases. Noticeably, across all $\epsilon$ values the semantic similarity of the responses from our scheme remains particularly high, ranging from 99.36\% at $\epsilon=0.5$ to 100\% at $\epsilon\geq10$. This shows that even when some generated tokens differ, the responses generally retain the same meaning as the original output.

\begin{table}[t]
\centering
\caption{Output comparison between the original model and the split-noised model under greedy decoding. Each setting contains 100 prompts with 200 generated tokens per prompt.}
\label{tab:similarity-score}
\begin{tabular}{c c c c c c c}
\toprule
$\epsilon$ &
\textbf{Exact-match} &
\textbf{Token match} &
\textbf{BLEU} &
\textbf{ROUGE-L} &
\textbf{Semantic similarity} \\
\midrule
$1/\sqrt{n}$ & 10.00\%  & 42.55\% & 64.49\% & 69.92\% & 95.56\% \\
0.5 & 67.00\%  & 85.43\% & 91.59\% & 93.26\% & 99.36\% \\
1   & 84.00\%  & 91.49\% & 94.44\% & 95.38\% & 99.67\% \\
5   & 98.00\%   & 99.41\% & 99.57\% & 99.59\% & 99.96\% \\
10  & 100.00\%  & 100.00\% & 100.00\% & 100.00\% & 100.00\% \\
15  & 100.00\%  & 100.00\% & 100.00\% & 100.00\% & 100.00\% \\
\midrule
\textbf{Slalom} & 0.00\%  & 20.22\% & 43.82\% & 52.28\% & 93.11\% \\
\bottomrule
\end{tabular}
\end{table}

The prompt-length-dependent setting \epsroot{} results in stronger perturbation, with an average $\epsilon$ of 0.0773 across the evaluated prompts. As a result, the generated responses differ more often from the original outputs. Only 10\% of the responses match exactly, while the token match rate decreases to 42.55\%. BLEU and ROUGE-L are 64.49\% and 69.92\%, respectively. However, semantic similarity remains high at 95.56\%, suggesting that much of the overall meaning of the responses is still preserved despite the larger token-level differences.

\paragraph{Comparison with Slalom.} Compared with our configurations, Slalom shows substantially larger output divergence. None of the 100 Slalom responses match the GPU baseline exactly and results in a token match rate of 20.22\%, with BLEU and ROUGE-L scores of 43.82\% and 52.28\%, respectively. Despite this larger token-level divergence, the semantic similarity remains relatively high at 93.11\%, indicating that the generated responses often preserve the main meaning even when the wording differs considerably from the GPU baseline. The advantage of our scheme is that we can tweak $\epsilon$ for the \epssharp{} configuration to achieve a desired level of accuracy. This tunability is absent in Slalom, as the decrease in accuracy is an artifact of quantization.

\paragraph{Examples of Generated Prompts.} Table~\ref{tab:similarity-example} illustrates these differences using a complete example, showing how the GPU baseline, \epsroot{}, \epsone{} and Slalom responses diverge over the full 200-token generation.

\begin{table*}[!ht]

\centering
\caption{Example prompts and generated responses for the GPU baseline, \epsroot{}, and Slalom. Each response is limited to 200 generated tokens and therefore ends at the generation cap. Differences from the GPU baseline are shown in {\color{red!80!black}\textbf{red}} for both \epsroot{} and Slalom.}

\label{tab:similarity-example}
\small

\begin{tabularx}{\textwidth}{@{}p{1.0cm}X@{}}
\toprule

\textbf{Prompt} &
When was the first Reading railway station opened? \emph{Context:} Reading railway station is a major transport hub in Reading, Berkshire, England. It is on the northern edge of the town centre, near the main retail and commercial areas and the River Thames, 36 miles (58 km) from London Paddington. The first Reading station was opened on 30 March 1840 as the temporary western terminus of the original line of the Great Western Railway (GWR). Reading is the ninth-busiest station in the UK outside London and the second busiest interchange station outside London with over 3.8 million passengers changing trains at the station annually. \\

\midrule

\textbf{GPU} &
 The first Reading railway station was opened on 30 March 1840. It was the temporary western terminus of the original line of the Great Western Railway (GWR). The station has since undergone several expansions and renovations, but this date marks its initial opening. The station has become a major transport hub in Reading, serving over 3.8 million passengers annually. It is also a key interchange station, with connections to other parts of the UK. The station's location near the River Thames and the town centre makes it a significant transportation hub in the region. The station's history dates back to 1840, and it has played a crucial role in the development of Reading as a major town in Berkshire, England. The station has undergone several expansions and renovations over the years, but its initial opening in 1840 marked the beginning of its growth as a major transport hub. The station's strategic location near the River Thames and the town centre has made it an essential transportation link for the region \\

\midrule
\textbf{\epsone{}} &
The first Reading railway station was opened on 30 March 1840. It was the temporary western terminus of the original line of the Great Western Railway (GWR). The station has since undergone several expansions and renovations, but this date marks its initial opening. The station has become a major transport hub in Reading, serving over 3.8 million passengers annually. It is also a key interchange station, with connections to other parts of the UK. The station's location near the River Thames and the town centre makes it a significant transportation hub in the region. The station's history dates back to 1840, and it has played a crucial role in the development of Reading as a major town in Berkshire, England. The station has undergone several expansions and renovations over the years, but its initial opening in 1840 marked the beginning of its growth as a major transport hub. The station's strategic location near the River Thames and the town centre has made it an essential transportation link for the region \\

\midrule

\textbf{\epsroot{}} &
The first Reading railway station was opened on 30 March 1840. It was the temporary western terminus of the original line of the Great Western Railway (GWR). The station has since undergone several expansions and renovations, but this date marks its initial opening. The station has become a major transport hub in Reading, serving over 3.8 million passengers annually. It is also a {\color{red!80!black}\textbf{significant}} interchange station, with connections to other {\color{red!80!black}\textbf{lines and trains.}} The station's location near the River Thames and the town centre makes it a {\color{red!80!black}\textbf{convenient and accessible point for commuters and travelers.}} The station's history dates back to {\color{red!80!black}\textbf{the 19th century,}} and it has played a {\color{red!80!black}\textbf{vital}} role in the development of Reading as a major {\color{red!80!black}\textbf{transportation hub. The station's current status as the ninth-busiest station in the UK outside London and the second busiest interchange station outside London is a testament to its importance in the region. The station's growth and development over the years have made it an essential part of the transportation infrastructure in}}  \\

\midrule

\textbf{Slalom} &
 The first Reading railway station was opened on 30 March 1840. It was the temporary western terminus of the original line of the Great Western Railway (GWR). {\color{red!80!black}\textbf{This was over 183 years ago.}} The station has since undergone several expansions and renovations, but {\color{red!80!black}\textbf{its initial opening date remains 30 March 1840. It's worth noting that the station has been in operation for nearly two centuries, and it continues to be}} a major transport hub in Reading, serving over 3.8 million passengers annually. {\color{red!80!black}\textbf{Its strategic}} location near the River Thames and the town centre makes it {\color{red!80!black}\textbf{an essential transportation link to London and other parts of the country. The station's history and evolution over the years have played a significant}} role in the development of Reading {\color{red!80!black}\textbf{and the surrounding region. It has been an integral part of the town's growth and transformation into the thriving city it is today. The station's rich history and ongoing importance in the transportation network make it a notable landmark in the UK, with}}\\

\bottomrule
\end{tabularx}
\end{table*}



\subsubsection{Floating-Point Errors} The main reason for divergent responses in our scheme is the accumulation of floating-point errors during noise correction as discussed in Section~\ref{subsec:fp-issues}. To measure this effect, we compute the numerical error specified in Eq.~\eqref{eq:fp-error}, together with the error bound in Eq.~\eqref{eq:fp-log-bound}, for individual output elements of all outsourced linear operators, namely the \(Q\), \(K\), \(V\), \(O\), Gate, and Up projections, as well as the final LM-head projection. 

We use 100 prompts from the Dolly dataset with input lengths between 100 and 250 tokens and capture the actual inputs to each outsourced operator during inference. We use both the \epssharp{} and \epsroot{} configurations with different values of $\epsilon$ for the former. For each evaluated \(\epsilon\), this gives 169 operator instances in total: six outsourced projections across each of the 28 transformer layers, together with the final LM head. For each operator instance, we perform 1{,}000 trials. In each trial, we randomly select one of its captured input vectors and one output row from its corresponding weight matrix, and apply Gaussian noise using the operator-specific global sensitivity. This results in 169{,}000 evaluated output elements for each \(\epsilon\). We report the mean measured numerical error and mean computed bound over all 169{,}000 evaluations for each \(\epsilon\). For the bound, we use numerical precision $p = 24$, corresponding to the IEEE  binary32 format.

Table~\ref{tab:numerical-error} reports the mean true error and the corresponding mean error bound. Both decrease as $\epsilon$ increases, following the expected reduction in masking noise. 
The mean bound is approximately 4 times larger than the mean observed error, indicating that it is conservative on average. However, the bound is a good indicator of the amount of error caused by a given value of $\epsilon$. This can then be used to choose a suitable value of $\epsilon$ to balance privacy and utility. Note that computing the bound is a lot easier, and does not require running end-to-end inference. We can simply sample expected inputs $\vtr{x}$ and generate the mask $\bm{\eta}$ to compute the bound, given the weights of the matrix $W$.  
We note that there were some instances where the actual error was higher than the computed bound, as indicated by the last column in Table~\ref{tab:numerical-error}. This is because our bound is derived based on a particular method of computing dot products. Machine learning libraries might use certain optimizations in some cases that are not reflected in our simplified calculation.  

\begin{table}[t]
\centering
\caption{Numerical-error results across the evaluated \(\epsilon\) values.}
\label{tab:numerical-error}
\small
\setlength{\tabcolsep}{4pt}
\begin{tabular}{ccccc}
\toprule
\(\boldsymbol{\epsilon}\) &
\textbf{Mean true error} &
\textbf{Mean bound} &
\textbf{Bound/error ratio} &
\textbf{Bound violation$^{\dagger}$ (\%)} \\
\midrule

\(1/\sqrt{n}\) &
\(1.52\times10^{-3}\) &
\(6.53\times10^{-3}\) &
\(4.30\times\) &
9.93\% \\

0.5 &
\(2.26\times10^{-4}\) &
\(9.64\times10^{-4}\) &
\(4.27\times\) &
10.03 \\

1 &
\(1.13\times10^{-4}\) &
\(4.83\times10^{-4}\) &
\(4.28\times\) &
9.89 \\

3 &
\(3.75\times10^{-5}\) &
\(1.61\times10^{-4}\) &
\(4.30\times\) &
9.83 \\

5 &
\(2.26\times10^{-5}\) &
\(9.67\times10^{-5}\) &
\(4.29\times\) &
9.70 \\

7 &
\(1.61\times10^{-5}\) &
\(6.91\times10^{-5}\) &
\(4.29\times\) &
9.68 \\

10 &
\(1.12\times10^{-5}\) &
\(4.86\times10^{-5}\) &
\(4.33\times\) &
9.51 \\

15 &
\(7.48\times10^{-6}\) &
\(3.25\times10^{-5}\) &
\(4.35\times\) &
9.25 \\

\bottomrule
\end{tabular}

\vspace{2pt}
\begin{minipage}{0.98\columnwidth}
\footnotesize
\(^{\dagger}\)Fraction of evaluated output elements for which the measured numerical error exceeds the computed bound.
\end{minipage}
\end{table}

\paragraph{How Much Numerical Error is Too Much?}

To understand how much numerical error the model can tolerate before output quality starts to degrade, we performed an additional controlled-error experiment. We used 100 prompts from the same Dolly dataset and evaluated all prompts at each controlled error level. For each outsourced operator, we first measured the numerical error that is naturally created by the masking and unmasking process under \epsone{}. This error is the difference between the clean projection output under \tconfig{} and the corrected output recovered after masking, and is computed separately for every operator invocation rather than using a single fixed error value. We then preserved the error pattern observed for each operator call and changed only its magnitude. This allowed us to gradually increase the numerical error seen by the model while keeping the structure of the naturally occurring masking error.

Table~\ref{tab:controlled-error} reports the results of this experiment. The \tconfig{} setting represents clean split inference without masking and therefore has zero numerical error. The \epsone{} setting uses the masking mechanism without any adjustment to the resulting error. Its mean absolute error of \(2.06\times10^{-4}\) therefore corresponds to the naturally occurring error produced by the masking and correction process. For the controlled-error settings, the \epsone{} error pattern is preserved for each operator call and its magnitude is adjusted to the mean absolute error levels shown in Table~\ref{tab:controlled-error}, ranging from \(1.00\times10^{-4}\) to \(3.00\times10^{-2}\). These values therefore represent controlled error levels rather than naturally occurring errors. The generated output at each error level is compared with the corresponding \tconfig{} output using semantic similarity. We compute this using the \texttt{all-mpnet-base-v2} Sentence-Transformer model and cosine similarity, reported as a percentage.

The results show that the natural numerical error introduced under \epsone{} has little effect on generation quality. At a mean absolute error of \(2.06\times10^{-4}\), semantic similarity remains at \(99.20\%\). When the mean absolute error is increased to \(1.00\times10^{-3}\), similarity remains at \(94.48\%\), while an error of \(1.00\times10^{-2}\) reduces it to \(82.99\%\). A much larger drop is observed at higher error levels, with semantic similarity decreasing to \(65.36\%\) at \(1.40\times10^{-2}\), \(43.04\%\) at \(1.60\times10^{-2}\), and \(9.91\%\) at \(2.00\times10^{-2}\). These results show how generation quality changes as the numerical error increases and provide a practical reference for interpreting the error values reported in Table~\ref{tab:numerical-error}. More specifically, since the error changes with the value of $\epsilon$ in the \epssharp{} configuration, we can check the error bound from Eq.~\eqref{eq:fp-log-bound}. If the bound is above $~10^{-4}$, the table suggests that the corresponding computations will degrade generation quality. It is the feature of our scheme that this error can be controlled by choosing a suitable value of $\epsilon$ balancing privacy and accuracy.

\begin{table}[t]
\centering
\caption{Effect of controlled numerical error on generation quality.}
\label{tab:controlled-error}
\renewcommand{\arraystretch}{1.12}
\begin{tabular}{c c c}
\toprule
\textbf{Setting} & \textbf{Mean absolute error} & \textbf{Semantic similarity} \\
\midrule
\tconfig{} & \(0\) & \(100.00\%\) \\
\epsone{} & \(2.06\times10^{-4}\) & \(99.20\%\) \\
\midrule
\multirow{11}{*}{Controlled error\(^{*}\)} & \(1.00\times10^{-4}\) & \(99.49\%\) \\
& \(3.00\times10^{-4}\) & \(97.98\%\) \\
& \(1.00\times10^{-3}\) & \(94.48\%\) \\
& \(3.00\times10^{-3}\) & \(90.35\%\) \\
& \(1.00\times10^{-2}\) & \(82.99\%\) \\
& \(1.20\times10^{-2}\) & \(74.83\%\) \\
& \(1.40\times10^{-2}\) & \(65.36\%\) \\
& \(1.60\times10^{-2}\) & \(43.04\%\) \\
& \(1.80\times10^{-2}\) & \(20.69\%\) \\
& \(2.00\times10^{-2}\) & \(9.91\%\) \\
& \(3.00\times10^{-2}\) & \(3.07\%\) \\
\bottomrule
\end{tabular}

\vspace{1mm}
{\scriptsize \(^{*}\)\epsone{} error scaled to the specified mean absolute error.}
\end{table}

\paragraph{Token Divergence is Much More Complex.}
The token-level traces show that output divergence is not determined by the numerical error magnitude alone. A token flip occurs when the correction-induced error changes the ordering of the highest scoring logits. When the baseline model assigns very similar scores to its top two candidates, even a small numerical error can change which token is selected. In autoregressive generation, each newly generated token becomes part of the context used to predict the next token; this evolving context is referred to as the autoregressive state. Therefore, once the first token changes, the subsequent tokens may follow a different generation path. Having said that, if the logits of top few tokens are extremely close, then replacing one with another will not cause significant change in the model's utility. In the evaluated prompts, increasing \(\epsilon\) reduces both the numerical error and the frequency of output divergence, with no divergence observed at \(\epsilon \geq 10\). Although the numerical error alone does not determine whether a token will change, it can still be used as a practical guide to output stability. In our experiments, smaller mean true errors were generally associated with fewer divergent outputs and higher token and semantic similarity. Therefore, the approximation error provides a useful indicator of whether masking is likely to negatively impact the utility of the generated output.

\subsection{Prompt Reconstruction Attack}
\label{subsec:prompt-reconstruction-attack}
Two questions may naturally arise in the reader’s mind. First, since the untrusted GPU is never given the user prompt in the clear, and only derived inputs to the offloaded components, is it necessary to mask these inputs? Indeed, if it is difficult to reconstruct the prompt from these inputs, then adding differentially private noise is pointless. Secondly, if it does turn out that masking the input is necessary, how much protection does differential privacy provide in practice? The interpretation of our privacy protection is that the GPU, who is given a single input to a linear layer, is not able to distinguish with high probability between two prompts of length $n$. However, we also mentioned that the GPU is given multiple inputs derived from the same prompt. Thus to justify using the same $\epsilon$ for all inputs derived from a prompt, we need to test the strength of an attack in practice.


Recall that the GPU is honest-but-curious: it performs all computations as dictated, yet it can use any information provided to it to infer the user's prompt. This information includes all (masked) inputs to the offloaded linear components, as well as the architecture and weights of the learned model (the service provider knows its own model!). However, any processing done at the TDX guest, including data in its private memory, is not available to the GPU. 

To model this, we use a prompt reconstruction attack inspired by the generative embedding inversion attack of Li et al.~\cite{li-etal-2023-sentence}, which trains a generative decoder to recover an input sequence from its sentence embedding. Unlike a pooled sentence embedding, the inputs in our attack contain token-level representations produced during model execution which may retain substantial lexical and semantic information about the original prompt. This attack can be launched by the GPU (or an attacker that sees information to-and-from the GPU) by training an attack model on a publicly available text dataset and a local copy of the LLM to learn the relationship between prompts and the offloaded inputs. 

\subsubsection{Training the Reconstruction Attacker}
\label{subsubsec:training-reconstruction-attacker}

We use the Pile dataset~\cite{gao2020pile},\footnote{Available here: 
\url{https://huggingface.co/datasets/EleutherAI/pile}.} to train our prompt reconstruction attack, which is a diverse dataset for training large language models. Specifically, we use 2,000 prompts divided into 1,800 training prompts and 200 validation prompts. Each prompt is passed through the LLM, and the inputs exposed at a selected outsourced linear operator are recorded together with the original prompt. For the noise-aware attack model, the corresponding noise mechanism and privacy setting (value of $\epsilon$) are applied to the captured inputs before they are used for training. These intermediate input-prompt pairs are then used to train a generative reconstruction model. The reconstruction attacker uses a pretrained \llama{} model to reconstruct the original prompt. The attacker receives an intermediate representation captured from the first transformer layer (Layer 0) of the victim model. In our experiment, the attacker can use the intermediate representations exposed at any outsourced operator as the input. The dimensionality of the captured representation depends on the selected operator and capture point. For example, the input to the $(Q,K, V)$ projections in Layer 0 has the dimension of 3072.

The captured hidden states cannot be passed directly to the attacker model in the same way as normal token embeddings. Although they may have the same dimension, they represent internal features produced by the victim model rather than the attacker model's input-embedding space. We therefore use a small trainable projection network, which acts as a learned mapping between these two representations. Here, the projection network is a small neural network placed between the captured tensor and the attacker LLM. 

The projection network first expands each captured vector into a higher-dimensional space so that it has more capacity to transform and reorganize the captured information. In our implementation, this intermediate space is four times the input dimension. A SiLU activation is then applied to introduce a nonlinear transformation. The representation is subsequently reduced back to the embedding dimension expected by the attacker model, followed by a second SiLU activation. Applying SiLU after both the expansion and reduction layers allows the projection network to capture more complex nonlinear relationships between the captured representation and the attacker model's embedding space. Finally, RMSNorm is applied to keep the projected representation on a stable scale before it is passed to the attacker model.

The output of the projection network is then used as a \emph{soft prefix} for the attacker model. Here, a soft prefix is a sequence of continuous embedding vectors produced by the projection network. It provides the initial context for generation, similar to a normal token prefix. Unlike a normal text prefix, it is not created from input token IDs. Instead it provides the attacker with the starting context needed to reconstruct the original prompt.

During training, the attacker is given the projected representation along with the original prompt as the target output. We use teacher forcing, where the correct preceding prompt tokens are provided to the model while it learns to predict the next token, rather than relying on its own previous predictions~\cite{williams1989learning}. This helps the attacker learn how to reconstruct the prompt token by token from the information contained in the captured representation. The target sequence also includes an end-of-sequence (EOS) token so that the model learns when to stop generating.


To keep the reconstruction attacker lightweight, most of the pretrained attacker model is kept frozen during training. Instead of updating all model parameters, we use Low-Rank Adaptation (LoRA), a parameter-efficient fine-tuning method that introduces a small number of trainable parameters into selected layers while leaving the original model weights unchanged~\cite{hu2021lora}. This allows the language model to adapt to the prompt-reconstruction task without requiring full model fine-tuning. At the same time, the projection network is trained to transform the captured intermediate representation into a soft prefix compatible with the attacker model's embedding space. The projection network and LoRA adapters are trained jointly, allowing the attacker to learn how to interpret the captured representation and reconstruct the original prompt while updating only a small fraction of the overall model parameters. The reconstruction model is trained for five epochs. The maximum target length used during teacher forcing is increased gradually, allowing the model to first learn shorter portions at the beginning of the prompt before being trained on the complete target sequence. After each epoch, the model is evaluated on a held-out validation set, and a checkpoint is saved. The checkpoint with the lowest validation loss is selected as the attacker model used in the final evaluation. The training of the reconstruction model is depicted in Figure~\ref{fig:prompt-reconstruction}. 

After training, the attacker is evaluated using previously unseen prompts from the Databricks Dolly dataset. The inputs exposed to the GPU from unseen prompts are recorded and provided to the trained reconstruction model. The generated sequence is taken as the reconstructed prompt.

\begin{figure*}[ht]
\centering
\input{prompt-reconstruction}
\caption{The training phase of the prompt reconstruction model based on~\cite{li-etal-2023-sentence}. The masked input $\widetilde{X}$ is given as input to a projection network which creates a soft prefix $\widetilde{X}'$. This is given as input to the attacker's LLM (\llama{}). The model is teacher forced by feeding it tokens from the original prompt $X$ instead of its own generated tokens.}
\label{fig:prompt-reconstruction}
\end{figure*}

\subsubsection{Attacker Settings}
\label{subsubsec:prompt-reconstruction-attacker-settings}

We evaluate the reconstruction attack under several attacker settings which vary in terms of the observed inputs and the attacker's knowledge of the masking mechanism. 
\begin{enumerate}
    \item \emph{Noise-free attacker:} We first consider a reconstruction model trained and evaluated on unmasked inputs. As mentioned in the beginning of this section, this setting determines if it is necessary to mask inputs to linear operations offloaded to the GPU. 
    \item \emph{Noise-oblivious attacker:} In this setting, the reconstruction model from the noise-free setting is applied to masked inputs. This represents an attacker who has trained its model without knowledge of the masking mechanism and serves to highlight the need for training the attack model with auxiliary masked inputs.  
    \item \emph{Noise-aware attacker: }Next, we consider an adaptive attacker who knows the masking mechanism. The attacker further trains the model by generating auxiliary training inputs using the same noise mechanism and privacy setting (value of $\epsilon$) as the protected system. The resulting noise-aware reconstruction model is then evaluated on masked inputs collected from the real TDX-GPU execution. This setting tests whether the masking mechanism remains effective when the adversary accounts for the applied noise during training.
\end{enumerate}

The noise-free, noise-oblivious and noise-aware attacks can be further divided into how many components are taken into account by the attack model.

\begin{enumerate}
    \item \emph{Single Component:} This attack is launched on a single component. The attack is most potent when applied on the inputs to the $(Q, K, V)$ component (recall that each of the three matrices receive the same input). The attack can be applied to any other outsourced component. Since the $(Q, K, V)$ component is closest to the original prompt than the ensuing outsourced components, it provides the most information about the prompt.
    \item \emph{Composite attacker:} This attack uses inputs exposed by multiple outsourced components from the same transformer layer. We combine the inputs to the $(Q, K, V)$ projections, the $O$ projection, and Gate/Up projections. The three inputs are aligned by token position and concatenated along the model's dimension $d$. Since each input has tensor shape $[n,d]$, where $n$ is the prompt length and \(d\) is the hidden dimension, the resulting observation has shape \([n,3d]\). For the evaluated \llama{} model, this corresponds to combining three \([n,3072]\) tensors into a single \([n,9216]\) tensor. The reconstruction model then separates the combined observation into its $(Q, K, V)$, the $O$ projection, and Gate/Up components. Each component is processed by an independent projection branch, after which the projected representations (see Figure~\ref{fig:prompt-reconstruction}) are fused into a single soft prefix. This soft prefix is used to generate one reconstruction of the original prompt. The same procedure is used for both noise-free and masked settings. In the masked setting, each operator input is independently noised using its corresponding global sensitivity before the three input tensors are combined. This represents an adaptive attacker that knows the masking mechanism and trains on auxiliary observations generated using the same noise procedure and privacy setting as the protected system.
    
\end{enumerate}

A couple of remarks are in order. First, we did not include the LM head component in the composite attack as it did not show any improvement over the existing composition of the other three components. The reason being that the LM head is at the very end of the inference pipeline, after all transformer layers, and therefore contains the least amount of information about the prompt. Secondly, we limit the attack to Layer 0 of the transformer. This is because the inputs at this layer are the closest to the original prompt, and have undergone the least amount of transformation. Adding more layers does not substantially increase the strength of the attack without increasing its training complexity.


\subsubsection{Reconstruction Attack Results}
\label{subsubsec:reconstruction-attack-results}

To compare the reconstructed prompt with the original prompt we use the following metrics: token precision, token recall, token F1-score, and semantic similarity. Token precision measures the proportion of generated tokens that also occur in the original prompt, while token recall measures the proportion of original-prompt tokens recovered by the attacker. Token F1-score provides a balanced summary of precision and recall. Semantic similarity is computed as the cosine similarity between sentence embeddings of the original and reconstructed prompts.  We evaluate these metrics on 500 prompts from the Dolly dataset, with prompt lengths ranging from 100 to 300 tokens.

Table~\ref{tab:layer0-reconstruction-results} presents the Layer~0 reconstruction results for the various attacker settings. In the noise-free setting, given inputs to the $(Q, K, V)$ component, the attack model is able to retrieve the original prompt with very high accuracy, with token precision, recall, and F1-score all around 70\%, and a semantic similarity of 79.87\%. Moreover, using the LM-head input reduces overall reconstruction performance, although recall and semantic similarity remain relatively high. This is likely related to the LM-head representation is produced later in the inference pipeline and is therefore less directly related to the original prompt than the $(Q, K, V)$ input. Since the $(Q, K, V)$ input gives stronger and more consistent results across the metrics, we use it as the single-component attack setting in the rest of the analysis. 
What is also evident from the table is that the composite attacker which receives inputs to three components, $(Q, K, V)$, $O$ projection and Gate/Up projections does not fare significantly better than the attack with a single component. This shows that the components which come later in the inference pipeline are not informative enough to increase the attack's accuracy. All in all, the results show that without masking the inputs to the offloaded layers, the attacker (GPU) can infer significant information about the prompt. An example comparing an original prompt with its reconstruction is provided in Appendix~\ref{app:reconstruction-example}.


\begin{table}[t]
\centering
\caption{Prompt-reconstruction results at Layer~0 under the evaluated attacker settings.}
\label{tab:layer0-reconstruction-results}
\resizebox{\columnwidth}{!}{%
\begin{tabular}{lllcccc}
\toprule
\textbf{Condition} & \textbf{Attacker setting} & \textbf{Captured operators} & \textbf{Token precision} & \textbf{Token recall} & \textbf{Token F1} & \textbf{Semantic similarity} \\
\midrule
\multirow{3}{*}{\textbf{Without noise}} & Noise-free & QKV only & 71.77\% & 72.12\% & 70.30\% & 79.87\% \\
& Noise-free & LM Head & 45.40\% & 75.75\% & 55.15\% & 77.13\\
& Composite & QKV + O + Gate/Up & 75.25\% & 69.14\% & 71.20\% & 82.19\% \\

\midrule
\multirow{3}{*}{\textbf{With noise}} & Noise-oblivious & QKV only & 6.5\% & 11.06\% & 7.99\% & 5.33\% \\
& Noise-aware (Ep1) & QKV only & 13.14\% & 21.81\% & 15.91\% & 5.29\% \\
& Noise-aware (Ep1) & LM head & 12.63\% & 20.78\% & 15.23\% & 4.68\% \\
& Composite noise-aware (Ep1) & QKV + O + Gate/Up & 12.72\% & 20.69\% & 15.30\% & 5.23\% \\
& Composite noise-aware (Ep1000) & QKV + O + Gate/Up & 10.86\% & 17.98\% & 13.16\% & 6.55\% \\
& Composite noise-aware (Ep2500) & QKV + O + Gate/Up & 71.61\% & 55.92\% & 62.79\% & 78.41\% \\
\bottomrule
\end{tabular}
}
\end{table}

In comparison, both the single component and composite attacker on noisy inputs are unable to match the success rate of the noise-free attack. Specifically, the semantic similarity of the reconstructed prompts remains around $5\%$. Furthermore, the results are only slightly better for the noise-aware attacker versus the noise-oblivious attacker, meaning that additional knowledge of the noise mechanism in training the attack model does improve the attacker's success, but the improvement is marginal. 

Although the noise-aware composite attacker achieves a token precision of 12.72\% and a token recall of 20.69\%, these values do not necessarily indicate meaningful reconstruction of the original prompt. Two randomly selected prompts can still contain overlapping tokens even when they are unrelated. To examine this, we used the same 500 evaluation prompts and compared each prompt with every other prompt in the set, excluding self-comparisons. This produced $\binom{500}{2} = 124,750$  prompt pairs. Across these comparisons, the average token precision was 22.02\%, the average token recall was 22.02\%, and the average token F1 was 20.71\%. These values are higher than those observed for the noised attacker. This indicates that the remaining token overlap in the noised setting is within the level that can arise naturally between unrelated prompts, rather than providing evidence of effective reconstruction of the original prompt.

To further examine how the amount of noise affects reconstruction, we also evaluated the composite noise-aware attacker at substantially larger $\epsilon$ values. Increasing $\epsilon$ reduces the amount of noise added to the outsourced representations and therefore weakens the protection. Importantly, the lower $\epsilon$ settings used in our main evaluation do not directly expose the model to the same noisy representations seen by the attacker. The untrusted GPU observes the masked operator input, while the TEE removes the corresponding precomputed correction from the returned linear result before inference continues. As a result, stronger masking can significantly reduce the information available to the attacker while still allowing the model to recover the intended computation, subject only to the numerical error introduced by floating-point masking and correction.

For each evaluated value, we followed the same noise-aware training procedure described earlier, generating the training representations using the same $\epsilon$ applied during TDX inference. Since training a separate attacker for every $\epsilon$ value is computationally expensive, we evaluated several substantially larger $\epsilon$ values rather than performing a fine-grained sweep. Table~\ref{tab:layer0-reconstruction-results} shows the results for $\epsilon=1000$ and $\epsilon=2500$. At $\epsilon=1000$, reconstruction remains poor, with a semantic similarity of only 6.55\%, whereas at $\epsilon=2500$ the attacker is again able to recover substantial prompt information, reaching 78.41\% semantic similarity. These experiments are not intended to recommend such large $\epsilon$ values. Instead, they illustrate how reconstruction becomes possible again as the masking noise is reduced.

\paragraph{Can Post-Processing Recover More Prompt Information?}
The raw reconstructions produced by the attacker often contain repeated fragments, corrupted words, and broken sentence structure.  To examine whether these outputs still contain recoverable information that is not fully reflected by the raw reconstruction metrics, we perform a separate text-restoration step on all reconstructed prompts in the evaluation set. The restoration model receives only the text produced by the attacker and is instructed to correct corrupted wording, merge repeated fragments, and improve readability without adding new information or answering the prompt. We use GPT-5.4\footnote{\url{https://developers.openai.com/api/docs/models/gpt-5.4}}, a highly capable general-purpose language model, for this step. Then we compare the restored text with the original prompt using the same metrics used for the initial reconstruction. The system prompt used for restoration is shown below and is designed to keep the model focused on recovering information already present in the reconstructed text.

\begin{tcolorbox}[
    colback=white,
    colframe=black,
    boxrule=0.6pt,
    arc=2mm,
    left=3mm,
    right=3mm,
    top=3mm,
    bottom=3mm,
    title=\textbf{System Prompt},
    coltitle=white,
    colbacktitle=red!75!black,
    fonttitle=\small
]
You are a precise text restoration assistant. The input was reconstructed from neural activations and may contain repetition, corrupted words, missing words, grammar errors, spacing errors, and incomplete fragments. Restore the entire text into the closest readable and coherent version while preserving all recoverable content and specific wording. Merge repeated fragments into one complete version and repair broken sentence structure, grammar, word order, and spacing whenever this can be done from the available context. If a short function word or grammatical connector is clearly missing, you may restore it when needed to make the sentence grammatically correct, but do not introduce new facts, entities, claims, or details that are not supported by the input. Do not summarize, shorten, or omit names, numbers, places, technical terms, questions, or contextual details.  The final text must be readable, coherent, and grammatically correct while remaining as faithful as possible to the information present in the reconstructed text. Do not answer the prompt. Return only the restored text.
\end{tcolorbox}

For the noise-free QKV reconstruction, the restoration step substantially improves the recovered text, increasing token precision from 71.77\% to 89.20\%, token F1 from 70.30\% to 79.03\%, and semantic similarity from 79.87\% to 89.54\%. This shows that the raw noise-free reconstruction contains additional recoverable information that is partly obscured by corrupted wording. In contrast, restoration provides no meaningful improvement for the noise-aware QKV attacker at $\epsilon=1$. Token precision increases only slightly from 13.14\% to 14.93\% and token F1 slightly decreases from 15.91\% to 15.76\%. Semantic similarity also shows no improvement after restoration. This indicates that the noised reconstruction does not contain substantial prompt information that can be recovered through post-processing, even when a highly capable language model is used for restoration. Appendix~\ref{app:reconstruction-example} provides an qualitative example showing how restoration affects both the noise-free and noise-aware reconstructions.

\section{Limitations and Future Directions}

\begin{itemize}
    \item Our scheme uses the fact that the current generation of TEEs does not provide hardware-isolated GPU execution. This might change in the future. However, we expect that the computational performance of such TEEs may still fall substantially short of that of unprotected state-of-the-art GPUs, particularly for large-scale LLM inference. 
    \item Precomputing the mask and noise correction reduces inference time at the expense of storage space. Recall that the noise bank is stored in the TDX guest's protected memory. Depending on the number of prompts, and the size of the model, this can exceed the memory specification. This drawback is similar to preprocessing in secure multiparty computation, where increased storage is required to trade off online computational time. Likewise, this issue also applies to Slalom~\cite{tramer2018slalom}. One way to alleviate this issue is to keep unused noise in an encrypted form on the untrusted disk, and then only load and decrypt it inside the TDX when required, similar to what is done in~\cite{tramer2018slalom}. Through a separate process, the TEE could continuously generate noise, encrypt and store it in the untrusted disk for future use.
    \item Our global sensitivity analysis is worst-case. In particular, it does not take into account the range of values that may be taken by embeddings from actual prompts. For instance, we use the fact that normalized vectors have a Euclidean norm of less than $\sqrt{d}$ where $d$ is the embedding dimension. In practice, the norm may be well below this value. Thus, it may be possible to further reduce the scale of required noise, e.g., by examining local or smooth sensitivity~\cite{nissim2007smooth} of these inputs. This is suggested by our reconstruction attack results which show that even very high values of $\epsilon$ are effective in protecting the prompt.
    \item Our prompt reconstruction attack may not be the most powerful attack possible, in terms of the capabilities of the underlying large language model used in the attack. Apart from this, the attack's assumptions are the strongest possible: full knowledge of the learned parameters and architecture of the victim model, as well as the differentially private mechanism and its parameters. However, since we add fresh noise to each input exposed to the GPU, it is unlikely that a more powerful LLM will perform substantially better than ours. 
    \item One possibility to improve the attack is to use the fact that adding fresh noise to the same quantity multiple times makes the process vulnerable to averaging  attacks~\cite{asghar2020averaging}. However, even though the augmented input $X$ after each generated token contains the original tokens as well, they are not passed on to the GPU again as the multiplication result is kept in the KV cache. So an averaging attack cannot be launched this way. It may be possible to compare outputs after each layer of a transformer. But the input undergoes many non-linear transformations, and hence algebraically separating the noise from the original input to launch an averaging attack is not straightforward.
\end{itemize}

\section{Conclusion}
\label{sec:Conclusion}

In this work, we show that splitting LLM inference between a trusted but slower TEE and an untrusted but faster GPU, with intermediate representations protected by differential privacy, enables the system to benefit from GPU acceleration while simultaneously protecting the prompt. We have shown that these representations can leak information about the original prompt, which motivates protecting the values sent outside the TEE. Our method does come with a trade-off. We show that uncontrolled differentially private noise can result in irreversible floating-point errors. However, unlike systems that use encryption to mask inputs, this error can be reined in by tuning the differential privacy parameter to balance privacy and accuracy of the LLM response. With the suggested values of the privacy parameter, we show that prompt reconstruction fails even with full knowledge of the LLM and the privacy mechanism. Beyond the use case considered in this paper, our method could also enable systems to leverage newly available, state-of-the-art GPUs without exposing information about potentially sensitive prompts to them.



\bibliographystyle{plain}
\bibliography{references}

\appendix

\section{Ethical Considerations}
\label{app:Ethics}
This study uses the Pile and Databricks Dolly datasets, both of which are widely used for language-model research. Our work focuses on reducing privacy risks during LLM inference by limiting the information exposed to an untrusted accelerator. The reconstruction experiments are conducted only to evaluate potential leakage from intermediate representations and to measure the effectiveness of the proposed protection mechanism.



\section{Example Prompts and Generated Responses}
\label{app:Examples}
Table~\ref{tab:additional-prompt-example} provides additional examples comparing responses from the GPU baseline, \epsone{}, \epsroot{}, and Slalom. Differences from the GPU baseline are highlighted in red.

\begin{table*}[!ht]
\centering
\caption{Additional example prompts and generated responses for the GPU baseline, \epsroot{}, and Slalom. Each response is limited to 200 generated tokens and therefore ends at the generation cap. Differences from the GPU baseline are shown in {\color{red!80!black}\textbf{red}} for both \epsroot{} and Slalom.}

\label{tab:additional-prompt-example}
\small

\begin{tabularx}{\textwidth}{@{}p{1.0cm}X@{}}
\toprule

\textbf{Prompt} &
Given a reference text about Run Towards the Danger, tell me how many essays are part of the collection.\newline
\textit{Context:} Run Towards the Danger is a 2022 Canadian essay collection by Sarah Polley, a former child star, director, and screenwriter. The six essays in the collection examine aspects of Polley's career on stage, screen, and on film detailing her roles in a Stratford Festival production of Alice Through the Looking Glass, as well as her breakout roles in The Adventures of Baron Munchausen and the TV series Road to Avonlea. The book also revealed for the first time that Polley had been a victim of Jian Ghomeshi who sexually and physically assaulted her when she was 16 and he was 28. \\

\midrule

\textbf{GPU} &
There are 6 essays in the collection ``Run Towards the Danger'' by Sarah Polley. However, I do not have information about the specific titles of the essays. If you need more information about the essays, I suggest checking the book itself or a reliable online source. It's worth noting that the book is not just a collection of essays about her career, but also a personal account of her experiences with Jian Ghomeshi, which may be a significant part of the book. If you're looking for information about the essays, I recommend checking the book's table of contents or a review. If you need more information about the book or its contents, I can try to help you find it. Just let me know! Also, I can tell you that the book has received critical acclaim for its powerful and thought-provoking exploration of Polley's experiences with Ghomeshi and her career. Many reviewers have praised the book for its honesty, vulnerability, and insight \\

\midrule
\textbf{\epsone{}} &
There are 6 essays in the collection ``Run Towards the Danger'' by Sarah Polley. However, I do not have information about the specific titles of the essays. If you need more information about the essays, I suggest checking the book itself or a reliable online source. It's worth noting that the book is not just a collection of essays about her career, but also a personal account of her experiences with Jian Ghomeshi, which may be a significant part of the book. If you're looking for information about the essays, I recommend checking the book's table of contents or a review. If you need more information about the book or its contents, I can try to help you find it. Just let me know! Also, I can tell you that the book has received critical acclaim for its powerful and thought-provoking exploration of Polley's experiences with Ghomeshi and her career. Many reviewers have praised the book for its honesty, vulnerability, and insight \\

\midrule

\textbf{\epsroot{}} &
There are 6 essays in the collection ``Run Towards the Danger'' by Sarah Polley. However, I do not have information about the specific titles of the essays. If you need more information about the essays, I suggest checking the book itself or a reliable online source. It's {\color{red!80!black}\textbf{also}} worth noting that the book is not just a collection of essays about her career, but also a personal account of her experiences with Jian Ghomeshi, which may be a significant part of the book. If you're looking for information about the essays, I recommend checking the book's table of contents or a review. If you need more information about the book or its contents, I can try to help you find it. Just let me know! Also, I can tell you that the book has received critical acclaim for its powerful and thought-provoking exploration of Polley's experiences with Ghomeshi and her {\color{red!80!black}\textbf{journey towards healing and recovery. If you're interested in learning more about the}} \\

\midrule

\textbf{Slalom} &
There are 6 essays in the collection ``Run Towards the Danger'' by Sarah Polley. However, I do not have information about the {\color{red!80!black}\textbf{content of the essays. The information provided only mentions that the essays examine aspects of Polley's career and a personal experience of being a victim of Jian Ghomeshi.}} If you need more information about the essays, I suggest checking the book itself or a reliable online source. {\color{red!80!black}\textbf{However, I can tell you that the book is a memoir and a collection of essays, not a collection of standalone essays. It's a personal and reflective work that explores Polley's experiences and career, including her experiences with Ghomeshi. If you're looking for a specific type of essay, I'd be happy to try and help you find more information. If you need more information about the essays, I suggest checking the book itself or a reliable online source. The book has received positive reviews for its candid and thought-provoking exploration of Polley's}} \\

\bottomrule
\end{tabularx}
\end{table*}

\section{Prompt Reconstruction example}
\label{app:reconstruction-example}

Table~\ref{tab:reconstruction-example} shows the original prompt, the noise-free and noise-aware reconstructions, and their GPT-5.4 restored versions. The restored outputs improve coherence through minimal correction without adding new information.

\begin{table}[t]
\centering
\renewcommand{\arraystretch}{0.92}
\setlength{\tabcolsep}{4pt}
\small
\begin{tabularx}{\textwidth}{@{}>{\raggedright\arraybackslash}m{2.8cm} X@{}}
\toprule
 \textbf{Prompt} &
Extract the ingredients in bee pollen from the text.\newline
\textit{Context:} Bee pollen, also known as bee bread and ambrosia, is a ball or pellet of field-gathered flower pollen packed by worker honeybees, and used as the primary food source for the hive. It consists of simple sugars, protein, minerals and vitamins, fatty acids, and a small percentage of other components. Bee pollen is stored in brood cells, mixed with saliva, and sealed with a drop of honey. Bee pollen is harvested as food for humans and marketed as having various, but yet unproven, health benefits.
\\
\midrule
\textbf{Reconstructed Prompt (noise-free)} &
Extract ingredients in bee pollen from the text. Context: B pollen also as bread ambia is ball pellet fieldather flower packed workerbe, used the food for hive It of sugars protein, a or of ged pollen by honeyes and with, acids and, a percentage other. pollen stored bro cells mixed saliva, a as primary source the. consists simple, with, minerals vitamins fatty, acids as various but unven health. benefits. pollenstored bro cells combined saliva, as small of components Bee is as for humans marketed having, sealed a of components Beer is inod, with saliva and with drop honey Bee, benefits
\\
\midrule
\textbf{Restored Prompt (noise-free) - GPT-5.4} &
Extract ingredients in bee pollen from the text. Context: Bee pollen, also known as bee bread and ambrosia, is ball or pellet field-gathered flower pollen packed by worker bees, used as the food source for the hive. It consists of simple sugars, protein, a large percentage of gathered pollen by honeybees, mixed with nectar, fatty acids, and minerals, and a small percentage of other components. Bee pollen is stored in brood cells, combined with saliva, and sealed with a drop of honey. Bee pollen is used as food for humans and marketed as having various but unproven health benefits.
\\
\midrule
\textbf{Reconstructed Prompt (noise-aware, $\epsilon=1$)} &
The of the is a of the, and it's not just because they both. Both are in their own way to be loved for who they, but also have been by others for reasons that may or not. of the has always been a bit different from other. They're more independent than some, but still need love care. of on the hand very much people around them, needs support and love. both of these qualities make them special, but can also lead problems if not handled right. of is often seen as the perfect partner, someone who will stand you through thick and thin, never leave your side, and always there when needed. However, this can sometimes lead to dependency, where one becomes too reliant another, losing independence and autonomy. of, on the other hand, is often portrayed as the bad boy, someone with attitude and who doesn't play games. However can also be hurtful, leaving person feeling rejected abandoned. of all, however, is the most complicated. They so much like each other, yet at same time don't want to lose individuality. This can lead to conflicts, where try to balance their relationship with their own needs desires. of is especially difficult, as they so willing help each other out, but at same
\\
\midrule
\textbf{Restored Prompt (noise-aware, $\epsilon=1$) - GPT-5.4} &
The is a of the, and it's not just because they both. Both are, in their own way, to be loved for who they are, but also have been by others for reasons that may or may not. This has always been a bit different from the others. They're more independent than some, but still need love and care. On the other hand, very much needs the people around them, and needs support and love. Both of these qualities make them special, but can also lead to problems if not handled right. This is often seen as the perfect partner, someone who will stand by you through thick and thin, never leave your side, and always be there when needed. However, this can sometimes lead to dependency, where one person becomes too reliant on another, losing their independence and autonomy. , on the other hand, is often portrayed as the bad boy, someone with attitude and confidence who doesn't play games. However, this can also be hurtful, leaving the other person feeling rejected and abandoned. Most of all, however, is the most complicated. They are so much like each other, yet at the same time don't want to lose their individuality. This can lead to conflicts, where they try to balance their relationship with their own needs and
\\
\bottomrule
\end{tabularx}
\caption{Example of the original prompt, its QKV-only Layer~0 reconstructions under noise-free and noise-aware conditions, and the corresponding GPT-5.4 restored outputs.}
\label{tab:reconstruction-example}
\end{table}

\end{document}

%% file: system-architecture.tex
\begin{tikzpicture}

\node (P1) {\includegraphics[width=16mm]{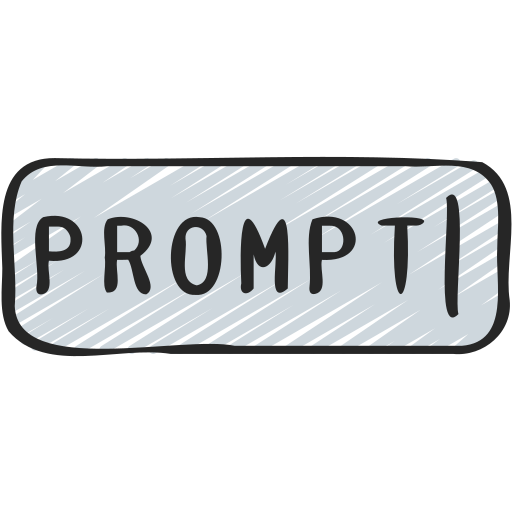}};

\node[right=3.5cm of P1] (P2) {};

\node[right=5cm of P2] (P3) {};

\draw[
    dashed,
    draw=teeborder,
    fill=teebg
]
    ([xshift=-1cm,yshift=1cm]P2.center)
    rectangle
    ++(2cm,-3cm);

\node at (P2) {\includegraphics[width=12mm]{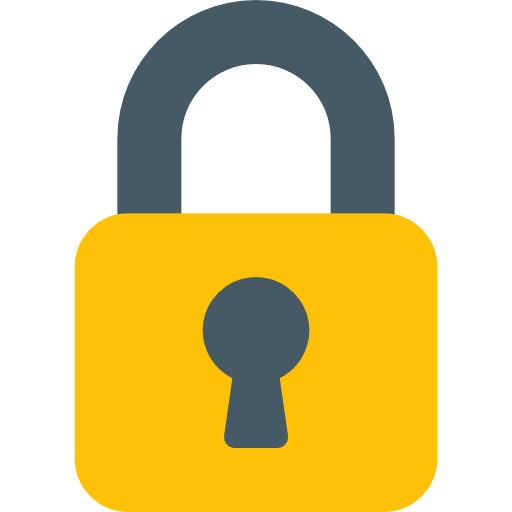}};
\node[below=15pt of P2] {TEE};

\draw[
    draw=red,
    dashed,
    fill=red!10,
]
    ([xshift=-1cm,yshift=1cm]P3.center)
    rectangle
    ++(2cm,-3cm);

\node at (P3) {\includegraphics[width=12mm]{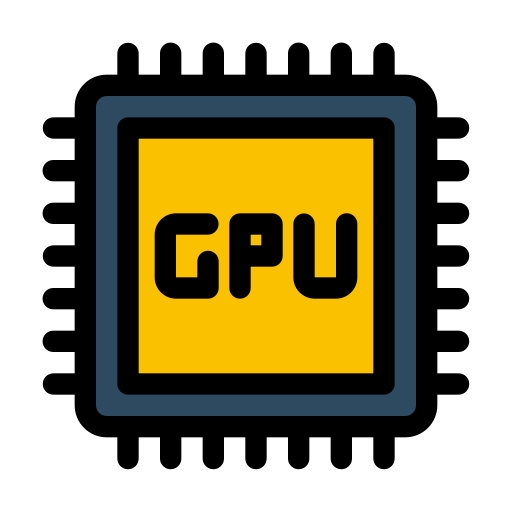}};
\node[below=15pt of P3] {GPU};


\draw[rounded corners]
    ([xshift=-1.2cm,yshift=1.2cm]P2.center)
    rectangle
    ([xshift=1.2cm,yshift=-2.7cm]P3.center);

\node[below right=2cm and 2cm of P2] (P4) {Server};

\node[above right=0.4cm and 1.2cm of P2] (P5) {to linear layers};

\draw[->]
    ([yshift=3pt]P1.east) --
    node[midway, above] {prompt}
    ([xshift=-1.2cm, yshift=3pt]P2.west);

\draw[<-]
    ([yshift=-3pt]P1.east) --
    node[midway, below] {output}
    ([xshift=-1.2cm, yshift=-3pt]P2.west);

\draw[->]
    ([xshift=1cm, yshift=10pt]P2.east) --
    ([xshift=-1cm, yshift=10pt]P3.west);

\draw[<-]
    ([xshift=1cm, yshift=0pt]P2.east) --
    ([xshift=-1cm, yshift=0pt]P3.west);

\draw[->]
    ([xshift=1cm, yshift=-16pt]P2.east) --
    node[midway, above] {$\vdots$}
    ([xshift=-1cm, yshift=-16pt]P3.west);

\draw[<-]
    ([xshift=1cm, yshift=-26pt]P2.east) -- node[midway, below] {DP protection}
    ([xshift=-1cm, yshift=-26pt]P3.west);


\end{tikzpicture}

%% file: transformer-architecture.tex
\begin{tikzpicture}

\node (X) {$X$};


\node[
    draw=gray,
    fill=gray!10,
    rounded corners,
    right=0.5cm of X,
    minimum height=0.8cm
] (embed) {Embedding};

\draw[-{Latex[length=2mm,width=2mm]}] (X) -- (embed);


\node (titleEMabove) [above=2.25cm of embed.center] {};
\node (titleEMbelow) [below=1.97cm of embed.center] {};

\node[
    draw=gray,
    dashed,
    thick,
    rounded corners,
    inner sep=8pt,
    fit=(embed) (titleEMabove) (titleEMbelow),
    label={[anchor=north west, align=left]north west:Embedding\\Model},
] (EM) {};

\node (forkMO) [right=0.5cm of embed] {};

\node[
    draw=brown,
    fill=brown!10,
    rounded corners,
    right=1cm of embed,
    minimum height=0.8cm
] (norm) {Norm};

\draw[-{Latex[length=2mm,width=2mm]}] (embed) -- (norm);

\node[
        draw=brown,
        fill=brown!10,
        rounded corners,
        right=1cm of norm,
        yshift=1.5cm,
        minimum height=0.8cm
    ] (Q) {$Q$};

    \node[
        draw=brown,
        fill=brown!10,
        rounded corners,
        right=1cm of norm,
        minimum height=0.8cm
    ] (K) {$K$};

    \node[
        draw=brown,
        fill=brown!10,
        rounded corners,
        right=1cm of norm,
        yshift=-1.5cm,
        minimum height=0.8cm
    ] (V) {$V$};

\node at (Q.north east) {\includegraphics[width=8mm]{gpu-2.png}};

\node at (K.north east) {\includegraphics[width=8mm]{gpu-2.png}};

\node at (V.north east) {\includegraphics[width=8mm]{gpu-2.png}};

    \node (fork) [right=0.25cm of norm] {};

    \draw (norm.east) -- (fork.center);

    \draw[-{Latex[length=2mm,width=2mm]}] (fork.center) |- (Q.west);
    \draw[-{Latex[length=2mm,width=2mm]}] (fork.center) -- (K.west);
    \draw[-{Latex[length=2mm,width=2mm]}] (fork.center) |- (V.west);

    \node[
    draw=brown,
    fill=brown!10,
    rounded corners,
    right=0.5cm of K,
    minimum height=0.8cm
    ] (S) {Softmax};

\draw[-{Latex[length=2mm,width=2mm]}] (Q.east) -| (S.north);
\draw[-{Latex[length=2mm,width=2mm]}] (K.east) -- (S.west);
\draw[-{Latex[length=2mm,width=2mm]}] (V.east) -| (S.south);


\node[
    draw=brown,
    fill=brown!10,
    rounded corners,
    right=0.5cm of S,
    minimum height=0.8cm
    ] (O) {$O$};

\node at (O.north east) {\includegraphics[width=8mm]{gpu-2.png}};


\draw[-{Latex[length=2mm,width=2mm]}] (S.east) -- (O.west);


\node[
    draw=brown,
    fill=brown!10,
    rounded corners,
    right=0.5cm of O,
    minimum height=0.8cm
    ] (residual) {Residual};

\draw[-{Latex[length=2mm,width=2mm]}] (O.east) -- (residual.west);

\node (bend0) [below=2.25cm of forkMO.center] {};

\node (bend1) [below=2cm of fork.center] {};
\node (bend2) [below=2.25cm of residual.center] {};

\draw[-{Latex[length=2mm,width=2mm]}] (forkMO.center) -- (bend0.center) -- (bend2.center) -- (residual.south);


\node[
    draw=brown,
    dashed,
    thick,
    rounded corners,
    inner sep=8pt,
    fit=(Q) (K) (V) (norm) (residual),
    label={[anchor=north east]north east:Self Attention Mechanism}
] (AM) {};


\node[
    draw=mygreen,
    fill=mygreen!10,
    rounded corners,
    right=0.7cm of residual,
    minimum height=0.8cm
    ] (norm2) {Norm};

\draw[-{Latex[length=2mm,width=2mm]}] (residual.east) -- (norm2.west);

\node[
    draw=mygreen,
    fill=mygreen!10,
    rounded corners,
    right=0.5cm of norm2,
    minimum height=0.8cm
    ] (gateup) {Gate/Up};

\draw[-{Latex[length=2mm,width=2mm]}] (norm2.east) -- (gateup.west);

\node at (gateup.north east) {\includegraphics[width=8mm]{gpu-2.png}};

\node[
    draw=mygreen,
    fill=mygreen!10,
    rounded corners,
    right=0.5cm of gateup,
    minimum height=0.8cm
    ] (silu) {SiLU};

\draw[-{Latex[length=2mm,width=2mm]}] (gateup.east) -- (silu.west);

\node[
    draw=mygreen,
    fill=mygreen!10,
    rounded corners,
    right=0.5cm of silu,
    minimum height=0.8cm
    ] (down) {Down};

\draw[-{Latex[length=2mm,width=2mm]}] (silu.east) -- (down.west);

\node[
    draw=mygreen,
    fill=mygreen!10,
    rounded corners,
    right=0.5cm of down,
    minimum height=0.8cm
    ] (residual2) {Residual};

\draw[-{Latex[length=2mm,width=2mm]}] (down.east) -- (residual2.west);

 \node (forkFF) [right=0.23cm of residual] {};
 \node (anchorkFF1) [below=1cm of forkFF.center] {};
  \node (anchorFF2) [below=1cm of residual2.center] {};

\draw[-{Latex[length=2mm,width=2mm]}] (forkFF.center) -- (anchorkFF1.center) -- (anchorFF2.center) -- (residual2.south);


\node (bend3) [below=1.67cm of norm2.center] {};
\node (bend4) [above=1.67cm of norm2.center] {};

\node[
    draw=mygreen,
    dashed,
    thick,
    rounded corners,
    inner sep=8pt,
    fit=(norm2) (bend3) (bend4) (residual2),
    label={[anchor=north east]north east:Feed-Forward Network}
] (FF) {};


\node (title) [above=2.25cm of norm.center] {};

\node[
    draw=myblue,
    dashed,
    thick,
    rounded corners,
    inner sep=8pt,
    fit= (norm) (title) (AM) (FF),
    label={[anchor=north west]north west:Transformer Block (one of $L$ layers)},
] (TF) {};


\node[
    draw=red,
    fill=red!10,
    rounded corners,
    right=1cm of residual2,
    minimum height=0.8cm
    ] (norm3) {Norm};

\draw[-{Latex[length=2mm,width=2mm]}] (residual2.east) -- (norm3.west);

\node[
    draw=red,
    fill=red!10,
    rounded corners,
    right=0.5cm of norm3,
    minimum height=0.8cm,
    align=center
    ] (lm) {Language\\Model};

\draw[-{Latex[length=2mm,width=2mm]}] (norm3.east) -- (lm.west);

\node at (lm.north east) {\includegraphics[width=8mm]{gpu-2.png}};

\node[
    right=0.5cm of lm,
    align=center
    ] (nexttoken) {Next\\Token};

\draw[-{Latex[length=2mm,width=2mm]}] (lm.east) -- (nexttoken.west);


\node (titleLMabove) [above=2.25cm of norm3.center] {};
\node (titleLMbelow) [below=1.97cm of norm3.center] {};

\node[
    draw=red,
    dashed,
    thick,
    rounded corners,
    inner sep=8pt,
    fit= (norm3) (titleLMabove) (titleLMbelow) (lm),
    label={[anchor=north west]north west:Language Modelling},
] {};

\end{tikzpicture}

%% file: methodology.tex
\begin{tikzpicture}

\pgfdeclarelayer{background}
\pgfsetlayers{background,main}

\node[
    draw=gray,
    fill=gray!10,
    rounded corners,
    minimum height=0.8cm,
    minimum width= 3cm
] (x) {$\vtr{x}$};

\node[
    draw=gray,
    fill=gray!10,
    rounded corners,
    below = 1cm of x,
    minimum height=0.8cm,
    minimum width= 3cm
] (wx) {$W\vtr{x}$};

\node[
    draw=gray,
    fill=gray!10,
    rounded corners,
    right = 1.5cm of x,
    minimum height=0.8cm,
    minimum width= 3cm
] (xplusn) {$\widetilde{\vtr{x}} = \vtr{x} + \bm{\eta}$};

\node[
    draw=black,
    fill=black!20,
    rounded corners,
    above=0pt of xplusn,
    minimum height=0.8cm,
    minimum width= 3cm
] (noiseadd) {Noise Addition};

\node[
    draw=gray,
    fill=gray!10,
    rounded corners,
    right = 1.5cm of wx,
    minimum height=0.8cm,
    minimum width= 3cm
] (wxminuswn) {$W\vtr{x} = W\widetilde{\vtr{x}} - W\bm{\eta}$};

\node[
    draw=black,
    fill=black!20,
    rounded corners,
    above=0pt of wxminuswn,
    minimum height=0.8cm,
    minimum width= 3cm
] (noisecorrect) {Noise Correction};

\node[
    above = 5 pt of noiseadd, 
] (noisebank){Noise Bank};

\node[
    below = 5 pt of wxminuswn, 
] (noisebankbelow){\phantom{Noisy Output}};

\draw[-{Latex[length=2mm,width=2mm]}] (x) -- (xplusn);
\draw[{Latex[length=2mm,width=2mm]}-] (wx) -- (wxminuswn);

\begin{pgfonlayer}{background}
\node[
    draw=teeborder,
    fill=teebg,
    dashed,
    thick,
    rounded corners,
    inner sep=8pt,
    fit=(x) (wx) (wxminuswn)(noisebank) (noisebankbelow),
] (NB) {};
\end{pgfonlayer}

\node[
    below = 5 pt of NB, 
] (tee){Trusted Execution Environment};

\node[
    draw=gray,
    fill=gray!10,
    rounded corners,
    right = 1.5cm of xplusn,
    minimum height=0.8cm,
    minimum width= 3cm
] (xtilde) {$\widetilde{\vtr{x}}$};

\node[
    above = 5 pt of xtilde, 
] (noisyinput){Noisy Input};

\node[
    draw=gray,
    fill=gray!10,
    rounded corners,
    right = 1.5cm of wxminuswn,
    minimum height=0.8cm,
    minimum width= 3cm
] (wxtilde) {$W\widetilde{\vtr{x}}$};

\node[
    below = 5 pt of wxtilde, 
] (noisyoutput){Noisy Output};

\node[
    above = 5 pt of noisyinput, 
] (abovenoisyinput){\phantom{Noisy Bank}};

\draw[-{Latex[length=2mm,width=2mm]}] (xplusn) -- (xtilde);
\draw[{Latex[length=2mm,width=2mm]}-] (wxminuswn) -- (wxtilde);

\begin{pgfonlayer}{background}
\node[
    draw=brown,
    fill=brown!10,
    dashed,
    thick,
    rounded corners,
    inner sep=8pt,
    fit=(xtilde) (wxtilde) (abovenoisyinput) (noisyoutput),
] (SM) {};
\end{pgfonlayer}

\node[
    below = 5 pt of SM, 
] (smzone){Shared Memory};

\node[
    draw=gray,
    fill=gray!10,
    rounded corners,
    right = 1.5cm of xtilde,
    minimum height=0.8cm,
    minimum width= 3cm
] (wxcomp) {$W\widetilde{\vtr{x}} = W(\vtr{x} + \bm{\eta})$};

\node (bend) [right=1.5cm of wxtilde] {};

\node[
    above = 5 pt of wxcomp, 
] (abovewxcomp){\phantom{Noisy Input}};

\node[
    above = 8 pt of abovewxcomp, 
] (abovewxcomp2){\phantom{Noise Bank}};

\node[
    below = 58 pt of wxcomp, 
] (belowwxcomp){\phantom{Noise Correction}};

\draw[-{Latex[length=2mm,width=2mm]}] (xtilde) -- (wxcomp);

\draw[-{Latex[length=2mm,width=2mm]}] (wxcomp) |- (wxtilde);

\begin{pgfonlayer}{background}
\node[
    draw=red,
    fill=red!10,
    dashed,
    thick,
    rounded corners,
    inner sep=8pt,
    fit=(wxcomp) (bend) (abovewxcomp2) (belowwxcomp)
] (GPU) {};

\node[
    below = 5 pt of GPU, 
] (){Untrusted GPU};
\end{pgfonlayer}

\end{tikzpicture}

%% file: prompt-reconstruction.tex
\begin{tikzpicture}

\node (X) {$\widetilde{X}$};
\node[
    below=0.1cm of  X,
    align=center
] (masked)  {Masked\\Input};


\node[
    draw=gray,
    fill=gray!10,
    rounded corners,
    right=1cm of X,
    minimum height=0.8cm
] (expand) {Expand};

\draw[-{Latex[length=2mm,width=2mm]}] (X) -- (expand);

\node[
    draw=gray,
    fill=gray!10,
    rounded corners,
    right=0.5cm of expand,
    minimum height=0.8cm
] (silu) {SiLU};

\draw[-{Latex[length=2mm,width=2mm]}] (expand) -- (silu);

\node[
    draw=gray,
    fill=gray!10,
    rounded corners,
    right=0.5cm of silu,
    minimum height=0.8cm
] (reduce) {Reduce};

\draw[-{Latex[length=2mm,width=2mm]}] (silu) -- (reduce);

\node[
    draw=gray,
    fill=gray!10,
    rounded corners,
    right=0.5cm of reduce,
    minimum height=0.8cm
] (silu2) {SiLU};

\draw[-{Latex[length=2mm,width=2mm]}] (reduce) -- (silu2);

\node[
    draw=gray,
    fill=gray!10,
    rounded corners,
    right=0.5cm of silu2,
    minimum height=0.8cm
] (norm) {Norm};

\draw[-{Latex[length=2mm,width=2mm]}] (silu2) -- (norm);


\node (titlePNabove) [above=1cm of expand.center] {};
\node (titlePNbelow) [below=1cm of expand.center] {};

\node[
    draw=gray,
    dashed,
    thick,
    rounded corners,
    inner sep=8pt,
    fit=(expand) (titlePNabove) (titlePNbelow) (norm),
    label={[anchor=north west, align=left]north west:Projection Network},
] (PN) {};

\node[
    right=1cm of norm,
    minimum height=0.8cm
] (softX) {$\widetilde{X}'$};

\node[
    below=0.1cm of  softX,
    align=center
] (softpf)  {Soft\\Prefix};

\draw[-{Latex[length=2mm,width=2mm]}] (norm) -- (softX);


\node[
    right=1cm of softX,
    minimum height=0.8cm
] (llama) {};

\draw[-{Latex[length=2mm,width=2mm]}] (softX) -- (llama);

\node (titleLabove) [above=1cm of llama.center] {};
\node (titleLbelow) [below=1cm of llama.center] {};
\node (titleLright) [right=1cm of llama.center] {};


\node[
    draw=brown,
    dashed,
    thick,
    rounded corners,
    inner sep=8pt,
    fit=(llama) (titleLabove) (titleLbelow) (titleLright),
    label={[anchor=north west, align=left]north west:Attacker\\LLM},
] (L) {};


\node[
    right=1cm of titleLright,
    align=center
    ] (nexttoken) {Next\\Token};

\draw[-{Latex[length=2mm,width=2mm]}] (titleLright) -- (nexttoken);

\node (forkC) [below=1.7cm of nexttoken] {};

\node[
    left = 0.5cm of forkC,
] (cross) {\includegraphics[width=5mm]{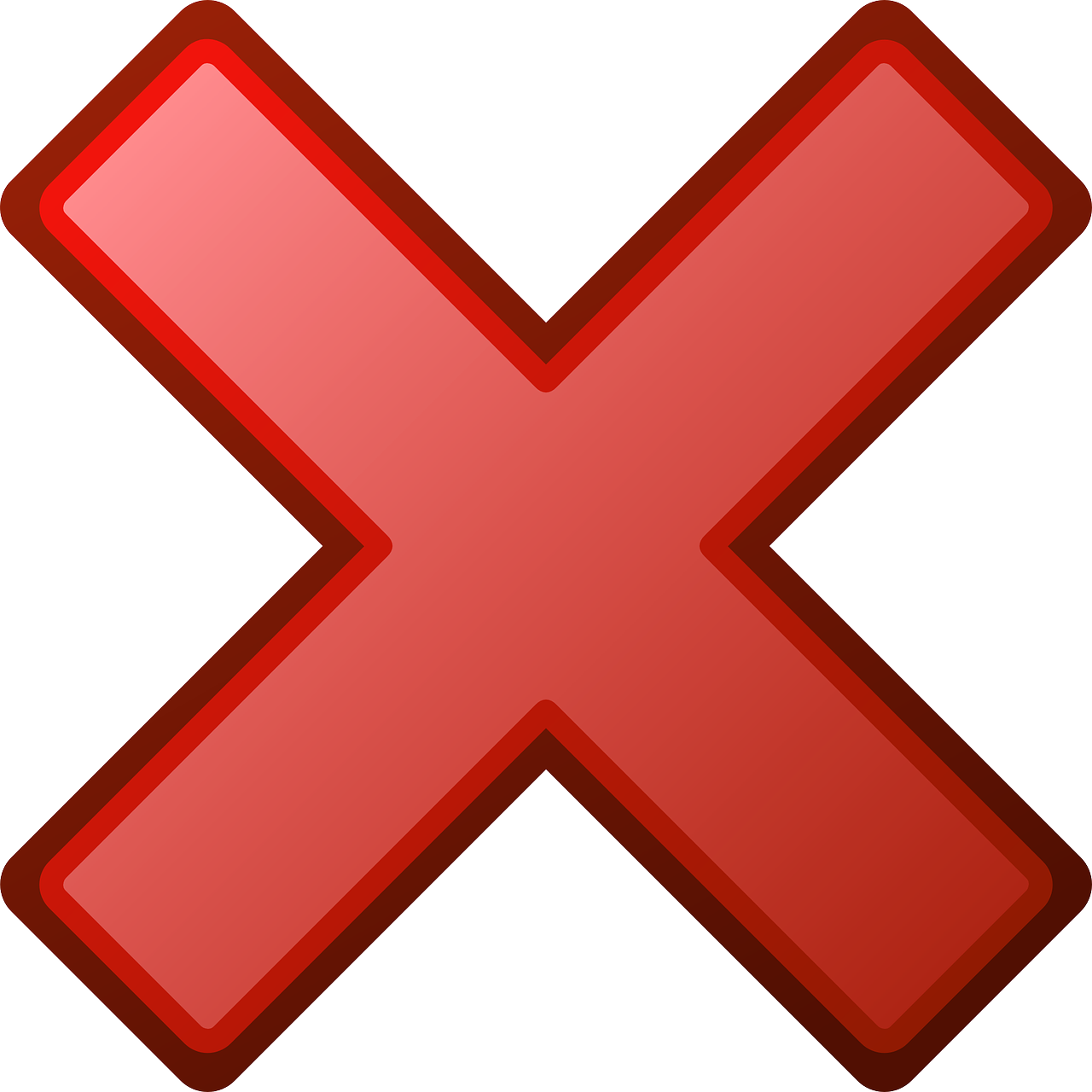}};

\draw[-] (nexttoken) -- (forkC.center) -- (cross);

\node[
   below=0.35cm of L,
   minimum height=0.8cm
] (teacherX) {$X$};

\node[
   left=0.18cm of teacherX,
   minimum height=0.8cm
] (teacherF) {Original Tokens (Teacher Forcing)};

\draw[-{Latex[length=2mm,width=2mm]}] (teacherX.north) -- (L);

\end{tikzpicture}